\title{A Generalization of the Fibonacci Word Fractal and the Fibonacci Snowflake}
\author[1]{José L. Ramírez \footnote{Corresponding author.}  \thanks{josel.ramirez@ima.usergioarboleda.edu.co}}
\author[2]{Gustavo N. Rubiano \thanks{gnrubianoo@unal.edu.co} }
\author[2]{Rodrigo de Castro \thanks{rdecastrok@unal.edu.co}}
\affil[1]{\small Instituto de Matemáticas y sus Aplicaciones,  Universidad Sergio Arboleda, Calle 74 no. 14 - 14, Bogotá, Colombia }
\affil[2]{\small Departamento de Matemáticas, Universidad Nacional de Colombia, AA 14490, Bogotá, Colombia}
\begin{document}
\newtheorem{theorem}{Theorem}
\newtheorem{definition}{Definition}
\newtheorem{corollary}{Corollary}
\newtheorem{example}{Example}
\newtheorem{lemma}{Lemma}
\newtheorem{proposition}{Proposition}
\maketitle
\setlength{\parindent}{0pt}

\begin{abstract}

In this paper we introduce a family of infinite words that generalize the Fibonacci word and we study their combinatorial properties. We associate with this family of words a family of curves that are like the Fibonacci word fractal and reveal some fractal features.  Finally, we describe an infinite family of polyominoes stems from the generalized Fibonacci words and we study some of their geometric properties, such as perimeter and area.  These last polyominoes generalize the Fibonacci snowflake and they are double squares polyominoes, i.e., tile the plane by translation in exactly two distinct ways.   \\

\textbf{Keywords:} Fibonacci word, Fibonacci word  fractal, Fibonacci snowflake, Polyomino, Tessellation.

\end{abstract}
\date
\maketitle

\section{Introduction}
The infinite Fibonacci word,
\begin{align*}
\textbf{\emph{f}}=\texttt{0100101001001010010100100101}\cdots
\end{align*}

is certainly one of the most studied examples in the combinatorial theory of infinite words, e.g.  \cite{BER, FIB1, FIB2, FIB3, LUCA, DRO, FIB5, PIR}. It is the archetype of a Sturmian word  \cite{LOT2}. The Fibonacci word \emph{\textbf{f}} can be defined in several different ways \cite{BER}. For instance, Fibonacci word \textbf{\textit{f}} satisfies  $\lim_{n\rightarrow\infty}\sigma^n(\verb"1")=\textbf{\textit{f}}$, where  $\sigma:\left\{\texttt{0,1}\right\}\rightarrow \left\{\texttt{0,1}\right\}$ is the morphism defined by  $\sigma(\verb"0")=\verb"01"$ and $\sigma(\verb"1")=\verb"0"$. This morphism is called \emph{Fibonacci morphism}. The name Fibonacci given to \textbf{\textit{f}} is due to the fact that \textbf{\textit{f}} is the limit sequence of the infinite sequence $(f_n)_{n=0}$ of finite words over $\left\{\texttt{0,1}\right\}$ defined inductively as follows
\begin{align*}
f_0=\texttt{1},  \hspace{1cm} f_1=\texttt{0}, \hspace{1cm} f_n=f_{n-1}f_{n-2}, \ n\geq 2.
\end{align*}
The words $f_n$ are called \emph{finite Fibonacci words}.  It is clear that $|f_n|=F_n$, where $F_n$ is the $n$-th Fibonacci number  defined by the recurrence relation $F_n=F_{n-1}+F_{n-2}$, for all integer $n\geq 2$ and with initial values $F_0=1=F_1$.

The word \textbf{\textit{f}} can be associated with a curve from a drawing rule, which has geometry properties obtained from combinatorial properties of \textbf{\emph{f}} \cite{BLO2, ALE}. We must travel the word in a particular way, depending on the symbol read a particular action is produced, this idea is the same as that used in the L-Systems  \cite{PUB}. In this case, the drawing rule is called ``odd-even drawing rule''  \cite{ALE},  this is defined as shown in the following table:

\begin{center}
\begin{tabular}{|p{1.8cm}|p{10cm}|}\hline
 \centering \textbf{Symbol}  &  \textbf{Action} \\ \hline
\centering
 \verb"1"
 & Draw a line forward. \\ \hline
\centering
 \verb"0"
& Draw a line forward and if the symbol \verb"0" is in an even position  then turn left and if \texttt{0} is in an odd position  then turn right.\\ \hline
\end{tabular}
\end{center}

The \emph{$n$th-curve  of Fibonacci}, denoted by $\mathcal{F}_n$, is obtained by applying the odd-even drawing rule to the word $f_n$. The \emph{Fibonacci word fractal} $\mathcal{F}$, is defined as
\begin{align*}
\mathcal{F}=\lim_{n\rightarrow \infty}\mathcal{F}_n.
\end{align*}

For example, in Fig. \ref{graf22} we show the curve $\mathcal{F}_{10}$ and $\mathcal{F}_{17}$. The graphics in this paper were generated using the software \verb"Mathematica 8.0", \cite{RAM}.
\begin{center}
$f_{10}=$\verb"010010100100101001010010010100100101001010010010"\\ \verb"10010100100101001001010010100100101001001".
\end{center}
\begin{figure}[H]
\centering
  \includegraphics[scale=0.5]{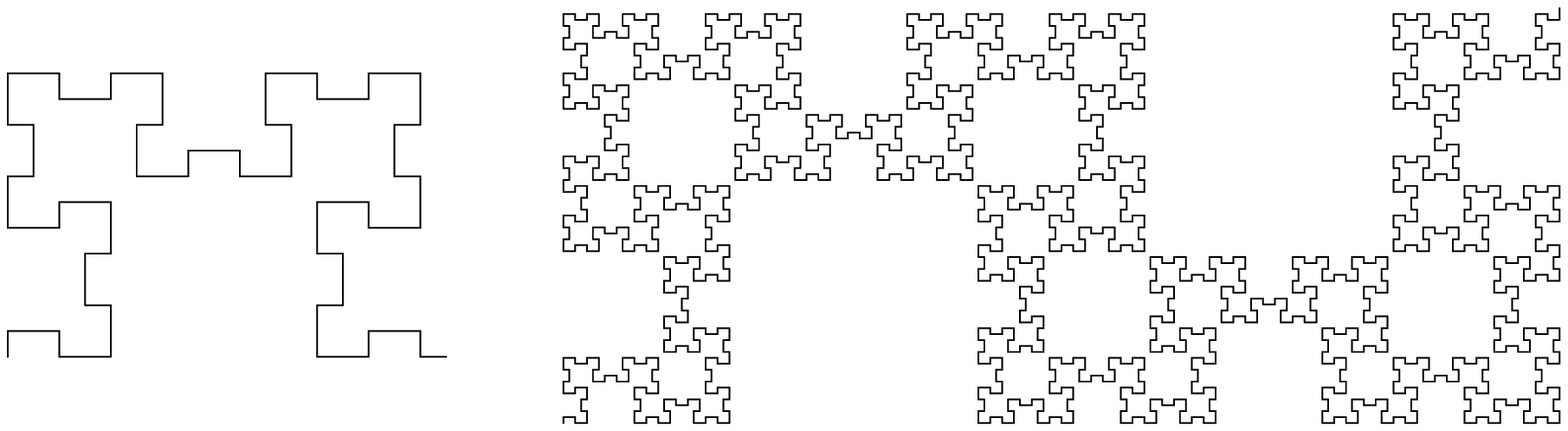}
  \caption{Fibonacci curves $\mathcal{F}_{10}$ and $\mathcal{F}_{17}$ corresponding to the words $f_{10}$ and $f_{17}$.} \label{graf22}
\end{figure}

The word \textbf{\emph{f}} can also be associated with a family of polyominoes  which tile the plane by translation  and are called  \emph{Fibonacci snowflakes} \cite{BLO2, FIB}. By \emph{polyomino} we mean  a finite union of unit lattice squares in the  square lattice $\mathbb{Z}\times\mathbb{Z}$ whose boundary is a non-crossing closed path (see \cite{GRU} for more on tilings and \cite{BRA} for related problems).  A \emph{path }in the square lattice is a polygonal path made of the elementary unit translations
\begin{align*}
\texttt{0}=(1,0), \hspace{1cm} \texttt{1}=(0,1), \hspace{1cm} \texttt{2}=(-1,0), \hspace{1cm} \texttt{3}=(0,-1).
\end{align*}
These paths are conveniently encoded by words on the alphabet  $\mathcal{A}=\left\{\verb"0, 1, 2, 3"\right\}$.  This relation between discrete objects and words  has been used in modeling of problems  of tessellations in the plane with polyominoes, (see e.g. \cite{NIV, BLO2, BLO, BLO3, BRL} and \cite{BRL2}  for more relations between discrete geometry and combinatorics on words).

In \cite{WIJ}  authors were the first to consider the problem of deciding if a given polyomino tiles the plane by translation and they coined the term \emph{exact polyomino}.   In \cite{NIV}   authors proved that a polyomino $P$ tiles the plane by translations if and only if the boundary word \textbf{b}($P$) is equal up to a cyclic permutation of the symbols to $A\cdot B \cdot C \cdot \widehat{A}\cdot \widehat{B} \cdot \widehat{C}$, where one of the variables in the factorization may be empty. This condition is referred as the BN-factorization. If the boundary word is equal to  $A\cdot B \cdot C \cdot \widehat{A}\cdot \widehat{B} \cdot \widehat{C}$  such a polyomino is called \emph{pseudo-hexagon} and when one of the variables is empty, i.e., $\textbf{b}(P)=A\cdot B \cdot \widehat{A}\cdot \widehat{B}$, we say that $P$ is a \emph{square polyomino}.

For instance, the polyomino in Fig. \ref{pol2} (left) is an exact polyomino and its boundary can be  factorized by  $\verb"122"\cdot\verb"212"\cdot\verb"323"\cdot\verb"003"\cdot\verb"030"\cdot\verb"101"$, (the factorization is not necessarily in a unique way).
\begin{figure}[!h]
\centering
 \includegraphics[scale=0.4]{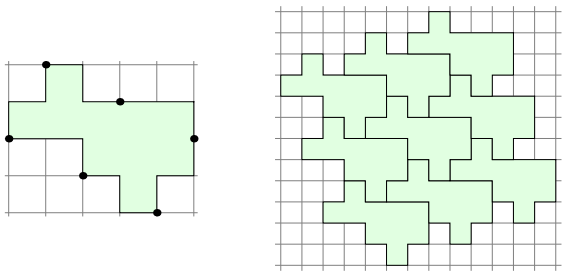}\\
  \caption{Exact polyomino and tiling.}\label{pol2}
\end{figure}

In \cite{BLO}, authors  prove that an exact polyomino tiles the plane in at most two distinct ways. Squares polyominoes having exactly two distinct BN-factorizations are called \emph{double squares}. For  instance, Christoffel and Fibonacci tiles or Fibonacci snowflakes, introduced in \cite{BLO2}, are examples of double squares, however, there exist double squares not in the Christoffel and Fibonacci tiles families. In \cite{BLO3}, they study the combinatorial properties and the problem of generating exhaustively double square tiles, however, they did not study the geometric properties, only in the case of Fibonacci polyominoes \cite{FIB}. \\

On the other hand, Fibonacci numbers and their generalizations have many
interesting properties and applications to almost every field of science and art,  (e.g. see \cite{koshy}). In the present case we are interested in the generalization of the Fibonacci sequence by preserving the recurrence relation and altering the first two terms of the sequence.\\

The \emph{$(n,i)$-th Fibonacci number} $F_n^{\left[i\right]}$ is defined recursively by 
\begin{align*}
F_0^{\left[i\right]}=1, \hspace{1cm} F_1^{\left[i\right]}=i,   \hspace{1cm} F_n^{\left[i\right]}=F_{n-1}^{\left[i\right]}+F_{n-2}^{\left[i\right]} 
\end{align*} 
for all $n\geq2$ and $i\geq1$.  For $i=1, 2$ we have the Fibonacci numbers. \\

In this paper we introduce a family of words $\textbf{\emph{f}}^{\left[i\right]}$ (Definition \ref{genfibo}) that generalize the Fibonacci word.  Each word $\textbf{\emph{f}}^{\left[i\right]}$ is the limit sequence of an infinite sequence of finite words such that their length are  $(n,i)$-th Fibonacci numbers. Moreover, the word   $\textbf{\emph{f}}^{\left[i\right]}$  is  a  characteristic  word of slope $\frac{i-\phi}{i^2-i-1}$, where $\phi$ is the golden ratio (Theorem \ref{slopefibo}). From this family of infinite words we define a family of plane curves called\emph{ $i$-Fibonacci word fractal }(Definition \ref{ifractalfibo}), which are like the Fibonacci word fractal and have the same properties (Proposition \ref{ifibofractal}). Finally, we  introduce a family of  polyominoes which generalize the Fibonacci snowflake and we study their geometric properties, such as perimeter  (Proposition \ref{perimetro}) and area (Proposition \ref{area}) which is related to generalized Pell numbers. These polyominoes are also double squares (Theorem \ref{poldobles}) and have the same fractal dimension of the Fibonacci word Fractal.  These generalizations are interesting, as they leave the question whether it is possible to generate all double squares polyominoes from families of  words like the Fibonacci word.

\section{Definitions and Notation}

The terminology and notations are mainly those of Lothaire \cite{LOT2} and Allouche and Shallit \cite{SHA2}. Let $\Sigma$ be a finite alphabet, whose elements are called \emph{symbols}. A \emph{word} over  $\Sigma$  is a finite sequence of symbols from $\Sigma$. The set of all words over $\Sigma$, i.e., the free monoid generated by $\Sigma$, is denoted by $\Sigma^*$. The identity element $\epsilon$ of $\Sigma^*$ is called the \emph{empty word}. For any word $w\in\Sigma^*$,  $\left|w\right|$  denotes its \emph{length}, i.e., the number of symbols occurring in $w$. The length of $\epsilon$ is taken to be equal to 0. If $a\in\Sigma$ and $w\in \Sigma^{*}$, then $\left|w\right|_{a}$ denotes the number of occurrences of $a$ in $w$.\\

For two words $u=a_{1}a_{2}\cdots a_{k}$ and $v=b_{1}b_{2}\cdots b_{s}$ in $\Sigma^{*}$ we denote by $uv$ the \emph{concatenation} of the two words, that is,  $uv=a_{1}a_{2}\cdots a_{k}b_{1}b_{2}\cdots b_{s}$. If $v=\epsilon$ then $u\epsilon=\epsilon u=u$, moreover, by $u^n$ we denote the word $uu \cdots u$ ($n$ times). A word  $v$ is a \emph{factor} or \emph{subword} of $u$ if there exist  $x, y \in \Sigma^{*}$  such that $u = xvy$. If $x=\epsilon$ ($y=\epsilon$), then $v$ is called \emph{prefix} (\emph{suffix}) of $u$.\\

The \emph{reversal} of a word $u=a_1a_2\cdots a_n$ is the word  $u^R=a_{n}\cdots a_2a_{1}$ and $\epsilon^R=\epsilon$. A word $u$ is a \emph{palindrome} if $u^R=u$.\\

An \emph{infinite word} over $\Sigma$ is a map $\textbf{\emph{u}}:\mathbb{N}\rightarrow\Sigma$. It is written $\textbf{\emph{u}}=a_1a_2a_3\ldots$. The set of all infinite words over  $\Sigma$ is denoted by $\Sigma^{\omega}$.

\begin{example}\label{eje1}
 Let $\textbf{p}=(p_n)_{n\geq1}=\verb"0110101000101"\cdots$,  where $p_n=\texttt{\emph{1}}$ if $n$ is a prime number and $p_n=\texttt{\emph{0}}$ otherwise, is an example of an infinite word. The word $\textbf{p}$ is called
 the characteristic sequence of the prime numbers.
\end{example}

Let $\Sigma$ and $\Delta$ be alphabets. A \emph{morphism} is a map $h:\Sigma^*\rightarrow \Delta^* $ such that $h(xy)=h(x)h(y)$ for all $x, y\in\Sigma^*$. It is clear that $h(\epsilon)=\epsilon$. Furthermore, a morphism is completely determined by its action on  single symbols.

There is a special class of infinite words, with many remarkable properties, the so-called Sturmian words. These words admit several equivalent definitions (see, e.g. \cite{SHA2} or \cite{LOT2}). Let  $\textbf{\textit{w}}\in\Sigma^{\omega}$. We define $P(\textbf{\textit{w}},n)$, the \emph{complexity function} of $\textbf{\textit{w}}$, to be the map  that counts, for all integer $n\geq 0$, the number of subwords of length $n$ in  $\textbf{\textit{w}}$. An infinite word $\textbf{\textit{w}}$ is a \emph{Sturmian word} if  $P(\textbf{\textit{w}},n)=n+1$ for all integer $n\geq 0$.
Since for any Sturmian word  $P(\textbf{\textit{w}},1)=2$, then  Sturmian words are over two symbols. The word $\textbf{\emph{p}}$, in example \ref{eje1}, is not a Sturmian word because  $P(\textbf{\emph{p}}, 2)=4$.\\

Given two real numbers $\alpha, \beta\in \Bbb R$ with $\alpha$ irrational and $0<\alpha < 1$,  $0\leq \beta < 1$, we define the infinite word $\textbf{\emph{w}}=w_1w_2 w_3\cdots $ as  $$w_n= \lfloor (n+1) \alpha  +\beta \rfloor - \lfloor n\alpha +\beta \rfloor.$$
The numbers $\alpha$ and $\beta$ are called the \emph{slope} and the \emph{intercept}, respectively. Words of this form are called \emph{lower mechanical words} and are known to be equivalent to Sturmian words \cite{LOT2}. As special case,  when  $\beta=0$, we obtain the \emph{characteristic words}.

\begin{definition}
Let $\alpha$ be an irrational number with $0<\alpha<1$. For $n\geq1$, define
\begin{align*}
w_\alpha(n):=\left\lfloor(n+1)\alpha\right\rfloor - \left\lfloor n\alpha\right\rfloor
\end{align*}
and
\begin{align*}
\textbf{w}(\alpha):=w_\alpha(1)w_\alpha(2)w_\alpha(3)\cdots
\end{align*}
Then $\textbf{w}(\alpha)$ is called the characteristic word with slope $\alpha$.
 \end{definition}

On the other hand, note that every irrational $\alpha \in (0, 1)$ has a unique continued fraction expansion
\begin{align*}
\alpha=\left[0, a_1, a_2, a_3, \ldots\right]=\cfrac{1}{a_1 + \cfrac{1}{a_2 + \cfrac{1}{a_3 + \cdots}}}
\end{align*}
where each $a_i$ is a positive integer. Let $\alpha=\left[0, 1 + d_{1}, d_{2}, \dots \right]$ be an irrational number with $d_{1}\geq0$ and $d_{n}>0$ for $n>1$. With the directive sequence $(d_{1}, d_{2}, \dots, d_{n}, \dots)$, we associate a sequence $(s_{n})_{n\geq-1 }$ of words defined by
\begin{align*}
s_{-1}=\texttt{1}, \ \ s_{0}=\texttt{0}, \ \  s_{n}=s_{n-1}^{d_{n}}s_{n-2}, \ \ (n\geq 1)
\end{align*}
Such a sequence of words is called a \emph{standard sequence}. This sequence is related to characteristic words in the following way. Observe that, for any $n\geq 0$, $s_n$ is a prefix of $s_{n+1}$, which gives meaning to $\lim_{n\rightarrow\infty}s_{n}$ as an infinite word. In fact, one can prove \cite{LOT2} that each  $s_{n}$ is a prefix of $\textbf{\emph{w}}(\alpha)$  for all $n\geq 0$ and
\begin{align}\label{fraccon}
\textbf{\emph{w}}(\alpha)=\lim_{n\rightarrow\infty}s_{n}.
\end{align}

\subsection{Fibonacci Word and Its Fractal Curve}

The infinite Fibonacci word $\textbf{\emph{f}}$ is a Sturmian word \cite{LOT2}, more precisely, $\textbf{\emph{f}}=\textbf{\textit{w}}\left(\frac{1}{\phi^2}\right)$ where $\phi= \frac{1+\sqrt{5}}{2}$ is the golden ratio.\\

Let $\Phi:\left\{\verb"0", \verb"1"\right\}^{*}\rightarrow \left\{\verb"0", \verb"1"\right\}^{*}$ be a map such that $\Phi$ deletes the last two symbols, i.e., $\Phi(a_1a_2\cdots a_{n})=a_1a_2\cdots a_{n-2}$, $(n\geq 2)$.\\

The following proposition summarizes some basic properties about Fibonacci word.
\begin{proposition}[Pirillo \cite{PIR}]\label{fibo}
The Fibonacci word  and the finite Fibonacci words, satisfy the following properties
\begin{enumerate}[i.]
\item The words \verb"11" and \verb"000" are not subwords of the Fibonacci word.

  \item For all $n\geq 2$. Let $ab$ be the last two symbols of $f_n$, then we have  $ab = \verb"01"$ if  $n$ is even and $ab = \verb"10"$ if $n$ is odd.
  \item The concatenation of two successive Fibonacci words is ``almost commutative'', i.e., $f_{n}f_{n-1}$ and $f_{n-1}f_{n}$ have a common prefix of length  $F_{n}-2$ for all $n\geq2$.
  \item $\Phi(f_n)$ is a palindrome for all  $n\geq2$.
   \item For all $n\geq6$, $f_n = f_{n-3}f_{n-3}f_{n-6}l_{n-3}l_{n-3}$, where $l_n=\Phi(f_n)ba$, i.e.,  $l_n$ exchanges the two last symbols of $f_n$.
\end{enumerate}
\end{proposition}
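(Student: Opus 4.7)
The plan is to prove the five assertions in the order stated, leveraging the recurrence $f_n=f_{n-1}f_{n-2}$ at each step and cascading the earlier parts into the later ones.

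For part (ii) I would induct on $n$ with step two. The base cases $f_2=\texttt{01}$ and $f_3=\texttt{010}$ are immediate. For $n\ge 4$, since $|f_{n-2}|\ge 2$, the last two symbols of $f_n=f_{n-1}f_{n-2}$ coincide with those of $f_{n-2}$, and $n\equiv n-2\pmod 2$, so the alternation propagates. Part (i) then follows by a short induction: any occurrence of \texttt{11} or \texttt{000} in $f_n=f_{n-1}f_{n-2}$ either lies inside one of the two factors (excluded by the inductive hypothesis) or straddles the junction; combining (ii) with the easy observation that every $f_k$ with $k\ge 1$ begins with \texttt{0} rules out \texttt{11}, and a short case analysis on the last three symbols of $f_{n-1}$ and the first two symbols of $f_{n-2}$ rules out \texttt{000}.

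For (iii) I would expand $f_n f_{n-1}=f_{n-1}f_{n-2}f_{n-1}$ and $f_{n-1}f_n=f_{n-1}f_{n-1}f_{n-2}$; stripping the common factor $f_{n-1}$ on the left reduces the problem to the common prefix of $f_{n-2}f_{n-1}$ and $f_{n-1}f_{n-2}$, giving a self-similar recursion that unwinds by induction to the claimed bound. For (iv) I would use the alternative identity $f_n=f_{n-2}f_{n-3}f_{n-2}$, obtained by substituting $f_{n-1}=f_{n-2}f_{n-3}$. For $n\ge 4$ this yields $\Phi(f_n)=f_{n-2}f_{n-3}\Phi(f_{n-2})$; writing $f_{n-2}$ and $f_{n-3}$ as palindromes capped by the two-letter tails determined by (ii) and using the inductive palindromicity of $\Phi(f_{n-2})$, one checks directly that the reverse of $\Phi(f_n)$ equals $\Phi(f_n)$ after a parity-sensitive cancellation of the tails.

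The main obstacle is (v). My plan is to start from the two-step expansion
\[
f_n \;=\; f_{n-1}f_{n-2} \;=\; (f_{n-2}f_{n-3})f_{n-2} \;=\; f_{n-3}f_{n-4}f_{n-3}f_{n-3}f_{n-4},
\]
verify that the target length agrees via the Fibonacci identity $F_{n-3}+F_{n-6}=2F_{n-4}$, and then transform $f_{n-3}f_{n-4}f_{n-3}f_{n-3}f_{n-4}$ into $f_{n-3}f_{n-3}f_{n-6}l_{n-3}l_{n-3}$. The rearrangement is where the combinatorial work concentrates: after pulling an initial $f_{n-3}$ through, one has to convert the tail $f_{n-4}f_{n-3}f_{n-3}f_{n-4}$ into $f_{n-6}l_{n-3}l_{n-3}$. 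I would justify this by combining the almost-commutativity from (iii) to swap neighbouring factors, the palindromic structure from (iv) together with the description $l_k=\Phi(f_k)\,ba$ to recast the two terminal factors as $l_{n-3}$'s, and part (ii) to account for the two-letter discrepancies each swap introduces. Keeping this bookkeeping consistent across the five-factor product is the fiddliest point and is likely the technical bottleneck of the whole proposition.
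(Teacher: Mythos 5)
Your plans for (i)--(iv) are workable and close in spirit to what the paper does for its generalized version (Proposition \ref{fibog}): your one-step reduction for (iii), stripping the common left factor $f_{n-1}$ and applying the inductive hypothesis to $f_{n-1}f_{n-2}$ versus $f_{n-2}f_{n-1}$, is in fact a cleaner version of the paper's two-level expansion, and your palindrome argument for (iv) via $f_n=f_{n-2}f_{n-3}f_{n-2}$ is a correct variant of the paper's reversal computation (the tails of $f_{n-2}$ and $f_{n-3}$ are \texttt{01} and \texttt{10} in some order by (ii), which is exactly the cancellation you need). One small caveat: the bound in (iii) should be read as ``common prefix of length equal to the common length minus $2$,'' i.e.\ $F_{n+1}-2$ for the pair $f_nf_{n-1}$, $f_{n-1}f_n$ with the paper's indexing $F_0=F_1=1$; your recursion $c_n=F_{n-1}+c_{n-1}$, $c_2=1$, unwinds to exactly that.

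The genuine gap is in (v). You propose to start from the exact factorization $f_n=f_{n-3}f_{n-4}f_{n-3}f_{n-3}f_{n-4}$ and then convert the tail into $f_{n-6}l_{n-3}l_{n-3}$ by ``swapping neighbouring factors'' via almost-commutativity. But (iii) is not an identity of words: $f_kf_{k-1}$ and $f_{k-1}f_k$ genuinely differ in two letters, and performing such a swap in the \emph{interior} of a five-factor product changes two letters in the middle of $f_n$, not at its end. You would then have to track the exact positions of every perturbation and show they all cancel, which you have not done and which is precisely where such bookkeeping arguments collapse. The way to prove (v) without any approximate step --- and what the paper does for Proposition \ref{fibog}-$v$ --- is to expand one level deeper and only ever \emph{regroup}: writing out $f_n=f_{n-1}f_{n-2}$ completely in terms of lower-order words gives the factor sequence $f_{n-3}\,f_{n-4}\,f_{n-5}\,f_{n-6}\,f_{n-5}\,f_{n-4}\,f_{n-5}\,f_{n-4}$, which you reparenthesize as $f_{n-3}\,(f_{n-4}f_{n-5})\,f_{n-6}\,(f_{n-5}f_{n-4})\,(f_{n-5}f_{n-4})$; this is an equality of words, no letters change. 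Then $f_{n-4}f_{n-5}=f_{n-3}$ by definition, and $f_{n-5}f_{n-4}=l_{n-3}$ because by (iii) it agrees with $f_{n-4}f_{n-5}=f_{n-3}$ except in the last two letters, and by (ii) those last two letters (the tail of $f_{n-4}$ versus the tail of $f_{n-5}$) are \texttt{01} and \texttt{10} swapped. You should replace your swap-based plan for (v) with this exact regrouping.
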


In the next proposition we show some properties of the curves $\mathcal{F}_{n}$ and $\mathcal{F}$. It comes directly from the properties of the Fibonacci word, see Proposition \ref{fibo}.

\begin{proposition}[Monnerot \cite{ALE}]\label{fractalfibo}
Fibonacci word fractal $\mathcal{F}$ and the curve $\mathcal{F}_n$  have the following properties:
\begin{enumerate}[i.]
  \item $\mathcal{F}$ is composed only of segments of length 1
or 2.
  \item The  curve $\mathcal{F}_n$ is similar to the curve $\mathcal{F}_{n-3}$, i.e., they have the same shape except for the number of segments.
 \item The curve $\mathcal{F}_{n}$ is symmetric. More precisely, the curves $\mathcal{F}_{3n}$ and $\mathcal{F}_{3n+1}$ are symmetric with respect to a line and $\mathcal{F}_{3n+2}$ is symmetric with respect to a point.
  \item The curve $\mathcal{F}_n$ is composed of 5 curves: $\mathcal{F}_{n}=\mathcal{F}_{n-3}\mathcal{F}_{n-3}\mathcal{F}_{n-6}\mathcal{F'}_{n-3}\mathcal{F'}_{n-3}$, where $\mathcal{F'}_{n}$ is obtained by applying the odd-even drawing rule to word $l_n$, see Proposition  \ref{fibo}-v.
\item The fractal  dimension of the Fibonacci word fractal is
\begin{align*}
3\frac{\log \phi}{\log (1+\sqrt 2)}=1.6379\dots
\end{align*}
\end{enumerate}
\end{proposition}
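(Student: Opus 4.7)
The plan is to translate each combinatorial property from Proposition \ref{fibo} into a corresponding geometric statement about the curve produced by the odd-even drawing rule. I will treat (i) and (iii) first because they follow almost directly, then prove the structural decomposition (iv), which yields (ii) as an immediate corollary, and finally deduce (v) from the self-similarity established in (iv).

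For (i), observe that each \verb"0" causes both a forward step and a turn, while \verb"1" contributes only a forward step. Maximal straight segments of the curve are therefore delimited by \verb"0"s, and the length of such a segment equals one plus the number of \verb"1"s lying between its bounding \verb"0"s. By Proposition \ref{fibo}-i the word \verb"11" does not occur in $\textbf{\emph{f}}$, so at most one \verb"1" sits between consecutive \verb"0"s, giving segment length $1$ or $2$; both values are attained because \verb"00" and \verb"010" are subwords. For (iii), I would apply Proposition \ref{fibo}-iv: $\Phi(f_n)$ is a palindrome. Reversing the input reverses the trajectory, but since the odd-even rule depends on the absolute position of each \verb"0", whether reversal swaps left-turns with right-turns is governed by the parity of $F_n-1$. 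A case analysis on $n \bmod 3$, using the fact that $F_n$ is even iff $n \equiv 2 \pmod 3$, then shows that $\mathcal{F}_{3n}$ and $\mathcal{F}_{3n+1}$ possess an axis of symmetry while $\mathcal{F}_{3n+2}$ possesses a centre of symmetry.

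The crux is part (iv). Starting from $f_n = f_{n-3}f_{n-3}f_{n-6}l_{n-3}l_{n-3}$ (Proposition \ref{fibo}-v), each factor contributes a sub-curve. The difficulty is that the odd-even rule depends on the absolute position of each \verb"0" in $f_n$, so applying the rule to the second copy of $f_{n-3}$, which starts at position $F_{n-3}+1$, may invert the left/right assignment relative to applying the rule to $f_{n-3}$ alone. I would prove by induction on $n$ that the relevant parity shifts either preserve or consistently invert the turn direction in such a way that the resulting sub-curve is congruent, via a rigid motion, to $\mathcal{F}_{n-3}$ in the first two factors, to $\mathcal{F}_{n-6}$ in the middle, and to the mirror image $\mathcal{F}'_{n-3}$ in the last two. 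The terminal pair swap that distinguishes $l_{n-3}$ from $f_{n-3}$ matches precisely the turn reversal forced at the seams between successive sub-curves. This parity bookkeeping, which must be carried out coherently for all three residues of $n$ modulo $3$, is the main obstacle. Once (iv) is established, (ii) follows at once, since $\mathcal{F}_n$ then contains $\mathcal{F}_{n-3}$ as a scaled sub-curve.

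Finally, for (v), I would invoke the Moran formula for self-similar sets. Iterating (iv) every three steps multiplies the number of unit segments by $F_n/F_{n-3}\to\phi^{3}$, while the linear extent of $\mathcal{F}_n$ grows by a factor of $1+\sqrt{2}$, which can be read off from the endpoint displacement of $\mathcal{F}_n$ versus $\mathcal{F}_{n-3}$ using the geometry of the five-part decomposition. The similarity dimension $d$ therefore satisfies $\phi^{3}(1+\sqrt{2})^{-d}=1$, yielding $d=3\log\phi/\log(1+\sqrt{2})$ as claimed.
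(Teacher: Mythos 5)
Your outline is correct and follows the same route the paper indicates: every geometric claim is read off from the corresponding combinatorial property of Proposition \ref{fibo} (the paper itself writes no proof here, attributing the result to Monnerot \cite{ALE}; the derivation is carried out explicitly only for the generalization in Proposition \ref{ifibofractal}, where the position-parity bookkeeping you identify as the crux of (iv) is likewise the main technical point). One caution: item (ii) does not follow ``at once'' from (iv) in the way you state, because the five sub-curves of the decomposition are congruent, unit-scale copies of $\mathcal{F}_{n-3}$, $\mathcal{F}_{n-6}$ and $\mathcal{F}'_{n-3}$ rather than scaled ones, so similarity of $\mathcal{F}_n$ to $\mathcal{F}_{n-3}$ must be obtained by iterating the decomposition (both curves split into five pieces arranged identically) and inducting on $n$ --- which is also where the scale factor $1+\sqrt{2}$ needed in (v) actually comes from, via the endpoint-displacement recurrence $L_n = 2L_{n-3} + L_{n-6}$ used in the paper's proof of Proposition \ref{ifibofractal}-$v$.
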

More of these properties can be found in \cite{ALE}.

\section{Generalized Fibonacci Words and Fibonacci Word Fractals}

In this section, we introduce a generalization of the Fibonacci word and the Fibonacci word fractal, and we show that Propositions \ref{fibo} and \ref{fractalfibo} remain.
\begin{definition}\label{genfibo}
The $(n,i)$-Fibonacci words are words over $\left\{\texttt{\emph{0,1}}\right\}$ defined inductively as follows
\begin{align*}
f_0^{\left[i\right]}=\texttt{\emph{0}}, \hspace{1cm} f_1^{\left[i\right]}=\texttt{\emph{0}}^{i-1}\texttt{\emph{1}},  \hspace{1cm} f_n^{\left[i\right]}=f_{n-1}^{\left[i\right]}f_{n-2}^{\left[i\right]},
\end{align*}
for all $n\geq 2$ and $i\geq1$. The infinite word
 \begin{align*}
 \textbf{f}^{\, \left[i\right]}:=\lim_{n\rightarrow\infty}f_n^{\left[i\right]}
  \end{align*}
   is called  the $i$-Fibonacci word.
\end{definition}

For $i=2$ we have the classical  Fibonacci word.

\begin{example}  The first $i$-Fibonacci words are
\begin{center}
\begin{tabular}{lll}
  $\textbf{f}^{\,\left[1\right]}=\verb"1011010110110"\cdots=\overline{\textbf{f}\ }$, & $\textbf{f}^{\,\left[2\right]}=\verb"0100101001001"\cdots=\textbf{f}$, &
  $\textbf{f}^{\,\left[3\right]}=\verb"0010001001000"\cdots$, \\
  $\textbf{f}^{\,\left[4\right]}=\verb"0001000010001"\cdots$, &
  $\textbf{f}^{\,\left[5\right]}=\verb"0000100000100"\cdots$, &
    $\textbf{f}^{\,\left[6\right]}=\verb"0000010000001"\cdots$ \\
\end{tabular}
\end{center}
\end{example}

Note that the length of the word $f_n^{\left[i\right]}$ is the  $(n,i)-$th Fibonacci number $F_n^{\left[i\right]}$, i.e., $|f_n^{\left[i\right]}|=F_n^{\left[i\right]}$. It is clear because  $f_n^{\left[i\right]}=f_{n-1}^{\left[i\right]}f_{n-2}^{\left[i\right]}$ and then  $|f_n^{\left[i\right]}|=|f_{n-1}^{\left[i\right]}| + |f_{n-2}^{\left[i\right]}|$, moreover  $|f_{0}^{\left[i\right]}|=1$ and $|f_{1}^{\left[i\right]}|=i$.

\begin{proposition}
A formula for the  $(n, i)$-th Fibonacci number is
\begin{align*}
F_n^{\left[i\right]}=\frac{1}{2\sqrt{5}}\left(\left(\frac{1-\sqrt{5}}{2}\right)^n(\sqrt{5}+1-2i) + \left(\frac{1+\sqrt{5}}{2}\right)^n(\sqrt{5}-1+2i) \right).
\end{align*}
\end{proposition}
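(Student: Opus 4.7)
The plan is to recognize that $(F_n^{[i]})_{n\ge 0}$ satisfies the same homogeneous linear recurrence as the ordinary Fibonacci numbers, so the standard Binet-type derivation applies; only the initial conditions change, which changes the coefficients of the two fundamental solutions.

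First I would observe that the characteristic polynomial of the recurrence $F_n^{[i]} = F_{n-1}^{[i]} + F_{n-2}^{[i]}$ is $x^2 - x - 1 = 0$, whose roots are the golden ratio $\phi = \frac{1+\sqrt{5}}{2}$ and its conjugate $\psi = \frac{1-\sqrt{5}}{2}$. Since these two roots are distinct, every solution of the recurrence has the form
\begin{equation*}
F_n^{[i]} = A\,\phi^n + B\,\psi^n
\end{equation*}
for suitable constants $A$ and $B$ depending on $i$.

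Next I would pin down $A$ and $B$ by imposing the initial conditions $F_0^{[i]} = 1$ and $F_1^{[i]} = i$. This yields the $2\times 2$ linear system $A + B = 1$ and $A\phi + B\psi = i$. Using $\phi - \psi = \sqrt{5}$, I get $A = (i-\psi)/\sqrt{5}$ and $B = (\phi - i)/\sqrt{5}$. Substituting the explicit values of $\phi$ and $\psi$ and clearing denominators gives $A = (\sqrt{5} - 1 + 2i)/(2\sqrt{5})$ and $B = (\sqrt{5} + 1 - 2i)/(2\sqrt{5})$, which, plugged back into $A\phi^n + B\psi^n$, is exactly the stated expression.

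There is no serious obstacle here; the whole argument is a direct application of the theory of linear recurrences with constant coefficients. If a cleaner exposition is preferred, I could instead guess the closed form and verify it by strong induction on $n$: the base cases $n=0,1$ reduce to the two identities $A+B = 1$ and $A\phi + B\psi = i$, and the inductive step follows from $\phi^n = \phi^{n-1} + \phi^{n-2}$ and the analogous identity for $\psi$. The only thing to watch out for is bookkeeping of signs when simplifying $\sqrt{5} - 1 + 2i$ versus $\sqrt{5} + 1 - 2i$, so that the coefficient of $\phi^n$ is matched with $\sqrt{5}-1+2i$ (as stated) rather than swapped with the coefficient of $\psi^n$.
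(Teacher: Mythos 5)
Your proof is correct, but it takes a different route from the paper. The paper proves the formula by induction on $n$: it checks the base cases $n=0,1$ directly and then carries out the inductive step using the identities $\phi_1^{n}+\phi_1^{n-1}=\phi_1^{n+1}$ and $\phi_2^{n}+\phi_2^{n-1}=\phi_2^{n+1}$ for the two roots $\phi_1=\frac{1-\sqrt{5}}{2}$, $\phi_2=\frac{1+\sqrt{5}}{2}$. You instead invoke the general theory of linear recurrences with constant coefficients: the characteristic polynomial $x^2-x-1$ has distinct roots, so the general solution is $A\phi^n+B\psi^n$, and the initial conditions $F_0^{[i]}=1$, $F_1^{[i]}=i$ determine $A=(\sqrt{5}-1+2i)/(2\sqrt{5})$ and $B=(\sqrt{5}+1-2i)/(2\sqrt{5})$, which I have checked are correct and correctly matched to $\phi^n$ and $\psi^n$ respectively. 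Your derivation has the advantage of actually producing the closed form rather than merely verifying a formula handed down from above, at the cost of relying on the standard structure theorem for such recurrences; the paper's induction is more elementary and self-contained but purely verificational. Your closing remark that the argument could be recast as a strong induction is essentially a description of the paper's proof, so the two approaches are entirely compatible.
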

\begin{proof}
The proof is by induction on $n$. This is clearly true for $n=0, 1$. Now suppose the result is true for $n$. Then
\begin{align*}
F_{n+1}^{\left[i\right]}&=F_{n}^{\left[i\right]} + F_{n-1}^{\left[i\right]} =\frac{1}{2\sqrt{5}}\left(\left(\phi_{1}^{n}+\phi_{1}^{n-1}\right)\left(\sqrt{5}+1-2i \right)  + \left(\phi_{2}^{n}+\phi_{2}^{n-1} \right) \left(\sqrt{5}-1+2i \right)\right)
\end{align*}
where $\phi_{1}=\frac{1-\sqrt{5}}{2}$ and $\phi_{2}=\frac{1+\sqrt{5}}{2}$. Moreover,
\begin{align*}
\phi_{1}^{n}+\phi_{1}^{n-1} =  \phi_{1}^{n-1}(\phi_{1}+1)=\phi_{1}^{n-1}\left(\frac{1-\sqrt{5}}{2}+1\right)=\phi_{1}^{n-1}\phi_{1}^{2}=\phi_{1}^{n+1},
\end{align*}
analogously $\phi_{2}^{n}+\phi_{2}^{n-1}=\phi_{2}^{n+1}$.
So \begin{align*}
F_{n+1}^{\left[i\right]}&=\frac{1}{2\sqrt{5}}\left(\phi_{1}^{n+1}\left(\sqrt{5}+1-2i \right)  + \phi_{2}^{n+1} \left(\sqrt{5}-1+2i \right)\right). \qedhere
\end{align*}
\end{proof}

Table \ref{numfibo} shows the first numbers  $F_n^{\left[i\right]}$ and their coincidence with some remarkable sequences in the OIES \footnote{
Many integer sequences and their properties are to be found electronically on the On-Line Encyclopedia of Sequences, \cite{OEIS}.}.
\begin{table}[h]
\centering
\begin{tabular}{|c|lc|}  \hline
  $i$ & \multicolumn{2}{c|}{ $\left\{ F_n^{\left[i\right]}\right\}_{n\geq0}$ }\\ \hline
  1 & $\left\{1, 1, 2, 3, 5, 8, 13, 21, 34, 55, 89, 144,...\right\}$,  & (A000045). \\ \hline
  2 & $\left\{1, 2, 3, 5, 8, 13, 21, 34, 55, 89, 144, 233,...\right\}$,  &(A000045). \\ \hline
  3 & $\left\{1,3,4,7,11,18,29,47,76,123,199,322,...\right\}$, & (A000204). \\ \hline
  4 & $\left\{1,4,5,9,14,23,37,60,97,157,254,411,...\right\}$, & (A000285). \\ \hline
  5 & $\left\{1,5,6,11,17,28,45,73,118,191,309,500,...\right\}$,&  (A022095). \\ \hline
  6 &  $\left\{1,6,7,13,20,33,53,86,139,225,364,589,...\right\}$, & (A022096). \\  \hline
\end{tabular}
\caption{First numbers $F_n^{\left[i\right]}$.}\label{numfibo}
\end{table}

The following proposition relates the Fibonacci word  $\textbf{\emph{f}}$ with $\textbf{\emph{f}}^{\left[i\right]}$.
\begin{proposition}\label{morfismo}
Let $\varphi_i:\left\{\verb"0", \verb"1"\right\}^{*}\rightarrow \left\{\verb"0", \verb"1"\right\}^{*}$ be the morphism defined by  $\varphi_i(\verb"0")=\verb"0"$ and $\varphi_i(\verb"1")=\verb"0"^i\verb"1"$,  $i\geq 0$, then
\begin{align*}
\textbf{f}^{\,\left[i+2\right]}=\varphi_{i}\left(\textbf{f}\right)
\end{align*}
for all $i\geq 0$.
\end{proposition}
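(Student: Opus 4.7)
The plan is to prove the statement by identifying a concrete finite‐word identity underlying the infinite‐word claim, and then passing to the limit using prefix consistency of the Fibonacci sequences.

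First I would propose the following lemma: for every integer $n\ge 0$,
\[
\varphi_i(f_{n+1}) \;=\; f_n^{[i+2]},
\]
where $(f_n)_{n\ge 0}$ are the classical finite Fibonacci words defined in the introduction. A quick sanity check of the base cases is instructive: $\varphi_i(f_1)=\varphi_i(\texttt{0})=\texttt{0}=f_0^{[i+2]}$, and $\varphi_i(f_2)=\varphi_i(\texttt{01})=\texttt{0}\cdot\texttt{0}^i\texttt{1}=\texttt{0}^{i+1}\texttt{1}=f_1^{[i+2]}$. Because both recurrences have the same shape $x_n=x_{n-1}x_{n-2}$ and because $\varphi_i$, being a morphism, distributes over concatenation, an easy strong induction closes the argument: assuming $\varphi_i(f_{n})=f_{n-1}^{[i+2]}$ and $\varphi_i(f_{n+1})=f_n^{[i+2]}$, one obtains
\[
\varphi_i(f_{n+2}) \;=\; \varphi_i(f_{n+1})\,\varphi_i(f_n) \;=\; f_n^{[i+2]}\,f_{n-1}^{[i+2]} \;=\; f_{n+1}^{[i+2]}.
\]

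To deduce the infinite‐word identity $\textbf{f}^{\,[i+2]}=\varphi_i(\textbf{f})$ I would invoke continuity of the morphism $\varphi_i$: since $\varphi_i$ is non‐erasing (both $\varphi_i(\texttt{0})$ and $\varphi_i(\texttt{1})$ are nonempty), it extends continuously to $\Sigma^{\omega}$, mapping the prefix chain $(f_{n+1})_{n\ge 0}$ of $\textbf{f}$ to the prefix chain $(\varphi_i(f_{n+1}))_{n\ge 0}=(f_n^{[i+2]})_{n\ge 0}$ of $\varphi_i(\textbf{f})$. Taking limits gives
\[
\varphi_i(\textbf{f})\;=\;\lim_{n\to\infty}\varphi_i(f_{n+1})\;=\;\lim_{n\to\infty}f_n^{[i+2]}\;=\;\textbf{f}^{\,[i+2]},
\]
as required.

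I do not expect any real obstacle. The only subtlety is bookkeeping of indices, since $\varphi_i$ shifts the index by one (the $n$-th image hits the $(n-1)$-th generalized word) because $\varphi_i$ turns the length‐$F_n$ word $f_n$ into a word of length $F_n+(i-1)|f_n|_{\texttt{1}}$, which matches $F_{n-1}^{[i+2]}$ rather than $F_n^{[i+2]}$. Making this indexing explicit at the outset, as in the auxiliary lemma above, keeps the induction transparent and the limit step immediate.
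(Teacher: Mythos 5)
Your proof is correct and follows essentially the same route as the paper: the key lemma $\varphi_i(f_{n+1})=f_n^{[i+2]}$ (the paper writes it as $\varphi_i(f_n)=f_{n-1}^{[i+2]}$ for $n\geq 2$) proved by induction using the shared recurrence and the morphism property, after which the infinite-word identity follows by passing to the limit. You make the limit step (continuity of the non-erasing morphism on prefix chains) explicit where the paper leaves it implicit; the only blemish is the parenthetical length count, which should read $F_n+i\,|f_n|_{\texttt{1}}$ rather than $F_n+(i-1)|f_n|_{\texttt{1}}$, but this aside plays no role in the argument.
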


\begin{proof}
It suffices to prove that $\emph{f}_{n-1}^{\,\left[i+2\right]}=\varphi_i(f_n)$ for all integers $n\geq2$ and $i\geq 0$.  We prove this by induction on $n$. For $n=2$ we have $\varphi_i(f_2)=\varphi_i(\texttt{01})=\texttt{0}^{i+1}\texttt{1}=\emph{f}_1^{\,\left[i+2\right]}$. Now suppose the result is true for $n$.  Then $\varphi_i(f_{n+1})=\varphi_i(f_{n}f_{n-1})=\varphi_i(\emph{f}_{n})\varphi_i(\emph{f}_{n-1})=\emph{f}_{n-1}^{\,\left[i+2\right]}\emph{f}_{n-2}^{\,\left[i+2\right]}=\emph{f}_{n}^{\,\left[i+2\right]} $. \end{proof}

The following proposition generalizes Proposition \ref{fibo}.

\begin{proposition}\label{fibog}
The $i$-Fibonacci word and the $(n,i)$-Fibonacci word, satisfy the following properties

\begin{enumerate}[i.]
  \item The word \texttt{\emph{11}} is not a subword of the $i$-Fibonacci word, $i\geq2$.
  \item Let $ab$ be the last two symbols of $f_n^{\left[i\right]}$. For  $n\geq 1$, we have  $ab=\texttt{\emph{10}}$ if  $n$ is even and $ab=\verb"01"$ if $n$ is odd, $i\geq2$.
  \item The concatenation of two successive $i$-Fibonacci words is ``almost commutative'', i.e., $f_{n-1}^{\left[i\right]}f_{n-2}^{\left[i\right]}$ and $f_{n-2}^{\left[i\right]}f_{n-1}^{\left[i\right]}$ have a common prefix of length  $F_{n}^{\left[i\right]}-2$ for all $n\geq2$ and $i\geq 2$.
  \item $\Phi(f_n^{\left[i\right]})$ is a palindrome for all  $n\geq1$.
  \item For all $n\geq6$, $f_n^{\left[i\right]} = f_{n-3}^{\left[i\right]}f_{n-3}^{\left[i\right]}f_{n-6}^{\left[i\right]}l_{n-3}^{\left[i\right]}l_{n-3}^{\left[i\right]}$, where $l_n^{\left[i\right]}=\Phi(f_n^{\left[i\right]})ba$.
\end{enumerate}
\end{proposition}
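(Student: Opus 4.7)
The plan is to establish the five parts of the proposition in order, using Proposition~\ref{morfismo} where convenient and otherwise proceeding by induction on $n$. Throughout I assume $i \geq 2$, in line with the hypotheses of each part.

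For (i), I invoke $\textbf{\emph{f}}^{\,[i]} = \varphi_{i-2}(\textbf{\emph{f}})$ from Proposition~\ref{morfismo}. For $i \geq 3$, every image of \texttt{1} under $\varphi_{i-2}$ is the block $0^{i-2}1$, so each \texttt{1} in $\textbf{\emph{f}}^{\,[i]}$ is immediately preceded by a \texttt{0}, ruling out any factor \texttt{11}; for $i = 2$, $\varphi_0$ is the identity, and the classical fact that $\textbf{\emph{f}}$ has no factor \texttt{11} finishes the job. For (ii), I induct on $n$: the bases $f_1^{[i]} = 0^{i-1}1$ and $f_2^{[i]} = f_1^{[i]} f_0^{[i]} = 0^{i-1}10$ end in $01$ and $10$ respectively, and for $n \geq 3$ the last two symbols of $f_n^{[i]} = f_{n-1}^{[i]} f_{n-2}^{[i]}$ coincide with those of $f_{n-2}^{[i]}$ because $|f_{n-2}^{[i]}| \geq i \geq 2$; since $n$ and $n-2$ share parity, the induction closes.

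For (iii), I induct on $n$. The base $n = 2$ is direct: $f_1^{[i]} f_0^{[i]} = 0^{i-1}10$ and $f_0^{[i]} f_1^{[i]} = 0^i 1$ share the prefix $0^{i-1}$ of length $i - 1 = F_2^{[i]} - 2$. In the inductive step, expand $f_{n-1}^{[i]} f_{n-2}^{[i]} = f_{n-2}^{[i]}(f_{n-3}^{[i]} f_{n-2}^{[i]})$ and $f_{n-2}^{[i]} f_{n-1}^{[i]} = f_{n-2}^{[i]}(f_{n-2}^{[i]} f_{n-3}^{[i]})$, cancel the common prefix $f_{n-2}^{[i]}$, and apply the induction hypothesis to the tails to obtain total common prefix $F_{n-2}^{[i]} + (F_{n-1}^{[i]} - 2) = F_n^{[i]} - 2$. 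For (iv), I induct using (ii) and (iii). Write $f_{n-1}^{[i]} = A \cdot xy$ and $f_{n-2}^{[i]} = B \cdot zw$ with $A, B$ palindromes by IH; (ii) forces $xy, zw \in \{01, 10\}$ with $zw = yx$. Then $\Phi(f_n^{[i]}) = A\,xy\,B$, while (iii) gives $A\,xy\,B = B\,zw\,A$ (the two concatenations $f_{n-1}^{[i]} f_{n-2}^{[i]}$ and $f_{n-2}^{[i]} f_{n-1}^{[i]}$ agree on their first $F_n^{[i]} - 2$ symbols). Reversing and using the palindrome property of $A, B$ yields $(A\,xy\,B)^R = B\,yx\,A = B\,zw\,A = A\,xy\,B$, so $\Phi(f_n^{[i]})$ is a palindrome.

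Part (v) is the main obstacle. The first move is to iterate the recurrence twice to obtain
\[
f_n^{[i]} = f_{n-2}^{[i]} f_{n-3}^{[i]} f_{n-2}^{[i]} = f_{n-3}^{[i]} f_{n-4}^{[i]} f_{n-3}^{[i]} f_{n-3}^{[i]} f_{n-4}^{[i]},
\]
so the stated decomposition reduces to verifying the identity $f_{n-4}^{[i]} f_{n-3}^{[i]} f_{n-3}^{[i]} f_{n-4}^{[i]} = f_{n-3}^{[i]} f_{n-6}^{[i]} l_{n-3}^{[i]} l_{n-3}^{[i]}$. To handle this I would further expand $f_{n-4}^{[i]} = f_{n-5}^{[i]} f_{n-6}^{[i]}$ on the left and substitute the representations $f_m^{[i]} = \Phi(f_m^{[i]})\,a_m b_m$ and $l_m^{[i]} = \Phi(f_m^{[i]})\,b_m a_m$ provided by (iv) and (ii). The equality then becomes a careful positional comparison at the four interior junctions: the exchanges of the last two symbols that distinguish each $l_{n-3}^{[i]}$ from $f_{n-3}^{[i]}$ must be absorbed exactly by the reshufflings produced by the almost-commutativity (iii) together with the palindromic symmetry (iv). Confirming that every one of these small swaps lines up simultaneously at all junctions is the delicate point of the argument, and is where I expect the bulk of the technical work to lie.
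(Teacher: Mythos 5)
Parts (i)--(iv) of your proposal are correct. Your route through (i) via Proposition~\ref{morfismo} differs from the paper, which argues directly by induction on $n$ (if \texttt{11} appeared in $f_n^{[i]}=f_{n-1}^{[i]}f_{n-2}^{[i]}$ it would have to straddle the junction, which is impossible since $f_{n-2}^{[i]}$ begins with \texttt{0}); both work. Your proof of (iii) is a slightly more economical induction than the paper's (you cancel one copy of $f_{n-2}^{[i]}$ and invoke the hypothesis at level $n-1$, where the paper expands further and invokes it at level $n-2$), and your proof of (iv) is essentially the paper's argument reorganized around the identity $A\,xy\,B=B\,zw\,A$, which is a clean way to make explicit the paper's implicit appeal to (iii). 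You should state the base cases of (iv) ($\Phi(f_1^{[i]})=\texttt{0}^{i-2}$, $\Phi(f_2^{[i]})=\texttt{0}^{i-1}$), but that is cosmetic.

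Part (v) is where there is a genuine gap: you reduce the claim to the identity $f_{n-4}^{[i]}f_{n-3}^{[i]}f_{n-3}^{[i]}f_{n-4}^{[i]}=f_{n-3}^{[i]}f_{n-6}^{[i]}l_{n-3}^{[i]}l_{n-3}^{[i]}$ and then defer the verification to an unspecified ``positional comparison at the four interior junctions,'' explicitly flagging it as unconfirmed. The missing idea is a single lemma that dissolves all of this at once: $l_{n-3}^{[i]}=f_{n-5}^{[i]}f_{n-4}^{[i]}$. Indeed, by (iii) the words $f_{n-4}^{[i]}f_{n-5}^{[i]}=f_{n-3}^{[i]}$ and $f_{n-5}^{[i]}f_{n-4}^{[i]}$ share a prefix of length $F_{n-3}^{[i]}-2$, namely $\Phi(f_{n-3}^{[i]})$; they have the same length; and by (ii) their final two symbols are \texttt{01} and \texttt{10} in opposite orders (the parities of $n-4$ and $n-5$ differ), so the second word is exactly $\Phi(f_{n-3}^{[i]})ba=l_{n-3}^{[i]}$. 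With this in hand your identity follows by pure substitution of the recurrence: the left side is $f_{n-4}^{[i]}\cdot f_{n-4}^{[i]}f_{n-5}^{[i]}\cdot f_{n-4}^{[i]}f_{n-5}^{[i]}\cdot f_{n-4}^{[i]}$ and the right side is $f_{n-4}^{[i]}f_{n-5}^{[i]}\cdot f_{n-6}^{[i]}\cdot f_{n-5}^{[i]}f_{n-4}^{[i]}\cdot f_{n-5}^{[i]}f_{n-4}^{[i]}$, and expanding $f_{n-4}^{[i]}=f_{n-5}^{[i]}f_{n-6}^{[i]}$ once makes the two factor sequences literally coincide; no junction-by-junction analysis of swapped symbols is needed. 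This is in substance what the paper does: its proof of (v) is a chain of applications of $f_m^{[i]}=f_{m-1}^{[i]}f_{m-2}^{[i]}$ ending with the regrouping $f_{n-3}^{[i]}(f_{n-4}^{[i]}f_{n-5}^{[i]})f_{n-6}^{[i]}(f_{n-5}^{[i]}f_{n-4}^{[i]})(f_{n-5}^{[i]}f_{n-4}^{[i]})$ and the identification of $f_{n-5}^{[i]}f_{n-4}^{[i]}$ with $l_{n-3}^{[i]}$.
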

\begin{proof}
\begin{enumerate}[$i.$]
\item It suffices to prove that $\verb"11"$ is not a subword of  $f_n^{\left[i\right]}$, for $n\geq 0$.  By induction on $n$. For $n=0, 1$ it is clear. Assume for all $j< n$; we prove it for $n$. We know that $f_n^{\left[i\right]}=f_{n-1}^{\left[i\right]}f_{n-2}^{\left[i\right]}$ so by the induction hypothesis we have that $\verb"11"$ is not a subword of $f_{n-1}^{\left[i\right]}$ and $f_{n-2}^{\left[i\right]}$. Therefore, the only possibility is that $\verb"1"$ is a suffix of   $f_{n-1}^{\left[i\right]}$  and $\verb"1"$ is a prefix of   $f_{n-2}^{\left[i\right]}$, but this is impossible.

\item It is clear by induction on $n$.

\item By definition of  $f_{n}^{\left[i\right]}$, we have
  \begin{align*}
   f_{n-1}^{\left[i\right]}f_{n-2}^{\left[i\right]}&=f_{n-2}^{\left[i\right]}f_{n-3}^{\left[i\right]} \cdot f_{n-3}^{\left[i\right]}f_{n-4}^{\left[i\right]}=f_{n-3}^{\left[i\right]}f_{n-4}^{\left[i\right]} \cdot f_{n-3}^{\left[i\right]}f_{n-3}^{\left[i\right]}f_{n-4}^{\left[i\right]},\\
      f_{n-2}^{\left[i\right]}f_{n-1}^{\left[i\right]}&=f_{n-3}^{\left[i\right]}f_{n-4}^{\left[i\right]} \cdot f_{n-2}^{\left[i\right]}f_{n-3}^{\left[i\right]}=f_{n-3}^{\left[i\right]}f_{n-4}^{\left[i\right]} \cdot f_{n-3}^{\left[i\right]}f_{n-4}^{\left[i\right]} \cdot f_{n-3}^{\left[i\right]}.
            \end{align*}
Hence the words have a common prefix of length $F_{n-3}^{\left[i\right]}  + F_{n-4}^{\left[i\right]}  + F_{n-3}^{\left[i\right]} $. By the induction hypothesis $f_{n-3}^{\left[i\right]} f_{n-4}^{\left[i\right]} $ and $f_{n-4}^{\left[i\right]} f_{n-3}^{\left[i\right]} $ have common prefix of length  $F_{n-2}^{\left[i\right]} -2$. Therefore the words have a common prefix of length \[2F_{n-3}^{\left[i\right]}  + F_{n-4}^{\left[i\right]}  + F_{n-2}^{\left[i\right]} -2 = F_{n-2}^{\left[i\right]} +F_{n-1}^{\left[i\right]} -2 = F_{n}^{\left[i\right]} -2. \]

\item By induction on $n$. If $n=2$ then $\Phi(f_2^{\left[i\right]})=\texttt{0}^i$. Now suppose that the result is true for all $j<n$; we prove it for $n$. Then
\begin{align*}
(\Phi(f_{n}^{\left[i\right]}))^R=(\Phi(f_{n-1}^{\left[i\right]}f_{n-2}^{\left[i\right]}))^R=(f_{n-1}^{\left[i\right]}\Phi(f_{n-2}^{\left[i\right]}))^R=\Phi(f_{n-2}^{\left[i\right]})^R(f_{n-1}^{\left[i\right]})^R=\Phi(f_{n-2}^{\left[i\right]})(f_{n-1}^{\left[i\right]})^R.
\end{align*}
If $n$ is even then  $f_n^{\left[i\right]}=\Phi(f_n^{\left[i\right]})\verb"10"$ and
\begin{align*}
\Phi(f_{n}^{\left[i\right]})^R&=\Phi(f_{n-2}^{\left[i\right]})(\Phi(f_{n-1}^{\left[i\right]})\texttt{01})^R =\Phi(f_{n-2}^{\left[i\right]})\texttt{10}\Phi(f_{n-1}^{\left[i\right]})^R=f_{n-2}^{\left[i\right]}\Phi(f_{n-1}^{\left[i\right]})=\Phi(f_{n}^{\left[i\right]}).
\end{align*}
If $n$ is odd, the proof is analogous.

\item By definition of  $f_n^{\left[i\right]}$, we have
\begin{align*}
f_n^{\left[i\right]}&=f_{n-1}^{\left[i\right]}f_{n-2}^{\left[i\right]}=(f_{n-2}^{\left[i\right]}f_{n-3}^{\left[i\right]})(f_{n-3}^{\left[i\right]}f_{n-4}^{\left[i\right]}) \\
&=(f_{n-3}^{\left[i\right]}f_{n-4}^{\left[i\right]})(f_{n-4}^{\left[i\right]}f_{n-5}^{\left[i\right]})f_{n-3}^{\left[i\right]}f_{n-4}^{\left[i\right]} \\
&=f_{n-3}^{\left[i\right]}f_{n-4}^{\left[i\right]}(f_{n-5}^{\left[i\right]}f_{n-6}^{\left[i\right]})f_{n-5}^{\left[i\right]}(f_{n-4}^{\left[i\right]}f_{n-5}^{\left[i\right]})f_{n-4}^{\left[i\right]} \\
&=f_{n-3}^{\left[i\right]}(f_{n-4}^{\left[i\right]}f_{n-5}^{\left[i\right]})f_{n-6}^{\left[i\right]}(f_{n-5}^{\left[i\right]}f_{n-4}^{\left[i\right]})(f_{n-5}^{\left[i\right]}f_{n-4}^{\left[i\right]}) \\
&=f_{n-3}^{\left[i\right]}f_{n-3}^{\left[i\right]}f_{n-6}^{\left[i\right]}l_{n-3}^{\left[i\right]}l_{n-3}^{\left[i\right]}. \qedhere
\end{align*}

\end{enumerate}
\end{proof}

\begin{theorem}\label{slopefibo}
Let $\alpha=\left[0,i, \overline{1}\right]$ be an irrational number, with $i$ a positive integer, then
\begin{align*}
\textbf{w}(\alpha)=\textbf{f}^{\,\left[i\right]}.
\end{align*}
\end{theorem}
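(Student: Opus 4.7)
The plan is to identify $\mathbf{f}^{[i]}$ with the limit of the standard sequence associated to the continued fraction of $\alpha$, using equation (\ref{fraccon}). Write $\alpha = [0, i, \overline{1}] = [0, 1 + d_1, d_2, d_3, \ldots]$, so the directive sequence is
\begin{align*}
d_1 = i - 1, \qquad d_n = 1 \text{ for all } n \geq 2.
\end{align*}
This satisfies the required hypotheses $d_1 \geq 0$ and $d_n > 0$ for $n > 1$ (and $\alpha$ is irrational because its continued fraction expansion is infinite), so the standard-sequence machinery applies.

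Next I would compute the standard sequence $(s_n)_{n \geq -1}$ explicitly. From $s_{-1} = \texttt{1}$, $s_0 = \texttt{0}$, and $s_1 = s_0^{d_1} s_{-1} = \texttt{0}^{i-1}\texttt{1}$, I get $s_0 = f_0^{[i]}$ and $s_1 = f_1^{[i]}$. Since $d_n = 1$ for $n \geq 2$, the recurrence collapses to
\begin{align*}
s_n = s_{n-1} s_{n-2}, \qquad n \geq 2,
\end{align*}
which is exactly the recurrence of Definition \ref{genfibo} defining the $(n,i)$-Fibonacci words.

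A straightforward induction on $n$ then gives $s_n = f_n^{[i]}$ for every $n \geq 0$. Passing to the limit and invoking (\ref{fraccon}) yields
\begin{align*}
\mathbf{w}(\alpha) = \lim_{n \to \infty} s_n = \lim_{n \to \infty} f_n^{[i]} = \mathbf{f}^{[i]},
\end{align*}
as required.

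There is no genuine obstacle here; the only points that demand a bit of care are verifying that the continued fraction expansion $[0, i, \overline{1}]$ produces the directive sequence $(i-1, 1, 1, 1, \ldots)$ in the convention of the paper (note the $1 + d_1$ shift), and handling the degenerate case $i = 1$ in which $d_1 = 0$ and $s_1 = s_{-1} = \texttt{1} = f_1^{[1]}$, which still matches the base case of the induction. As a consistency check, for $i = 2$ one has $[0, 2, \overline{1}] = 1/\phi^2$, recovering the known identity $\mathbf{f} = \mathbf{w}(1/\phi^2)$.
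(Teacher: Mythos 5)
Your proposal is correct and follows essentially the same route as the paper: read off the directive sequence $(i-1,1,1,\ldots)$ from $\alpha=[0,i,\overline{1}]$, observe that the resulting standard sequence coincides with $\bigl(f_n^{[i]}\bigr)_{n\geq 0}$, and conclude via equation (\ref{fraccon}). Your extra remarks on the $i=1$ degenerate case and the $i=2$ consistency check are welcome additions but do not change the argument.
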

\begin{proof}
Let $\alpha=\left[0, i, \overline{1}\right]$ an irrational number, then its associated standard sequence is
\begin{align*}
s_{-1}=\texttt{1}, \ \ s_0=\texttt{0},  \ \ s_1=s_0^{i-1}s_{-1}=\texttt{0}^{i-1}\texttt{1}  \  \text{and}  \    s_n=s_{n-1}s_{n-2}, \ n\geq 2.
\end{align*}
Hence $\left\{s_n\right\}_{n\geq 0} = \left\{f_n^{\left[i\right]}\right\}_{n\geq 0}$ and from Eq. (\ref{fraccon}), we have
 \begin{align*}
\textbf{\emph{w}}(\alpha)&=\lim_{n\rightarrow \infty}s_n=\textbf{\emph{f}}^{\left[i\right]}. \qedhere
\end{align*}
\end{proof}
\textbf{Remark}.  Note that
\begin{align*}
\left[0, i, \overline{1}\right]=\cfrac{1}{i+\cfrac{1}{1+\cfrac{1}{1 + \cfrac{1}{\ddots}}}}=\frac{i-\phi}{i^2-i-1}
\end{align*}
where $\phi$ is the golden ratio.

From the above theorem, we conclude that $i$-Fibonacci words are Sturmian words.
\subsection{The \textbf{$i$}-Fibonacci Word Fractal}

\begin{definition}\label{ifractalfibo}
The $(n,i)$th-curve of Fibonacci, denoted by $\mathcal{F}_n^{\left[i\right]}$, is obtained by applying the odd-even drawing rule to the word $f_n^{\left[i\right]}$. The $i$-Fibonacci word fractal $\mathcal{F}^{\left[i\right]}$ is defined as
\begin{align*}
\mathcal{F}^{\left[i\right]}=\lim_{n\rightarrow \infty}\mathcal{F}_n^{\left[i\right]}.
\end{align*}
\end{definition}

In Table \ref{graf2}, we show the curves $\mathcal{F}_{16}^{\left[i\right]}$ for  $i=1, 2, 3, 4, 5$ and $6$.

\begin{table}[h]
\centering
\begin{tabular}{|c|c|c|} \hline
$\mathcal{F}_{16}^{\left[1\right]}$  & $\mathcal{F}_{16}^{\left[2\right]}$ & $\mathcal{F}_{16}^{\left[3\right]}$  \\
 \includegraphics[scale=0.4]{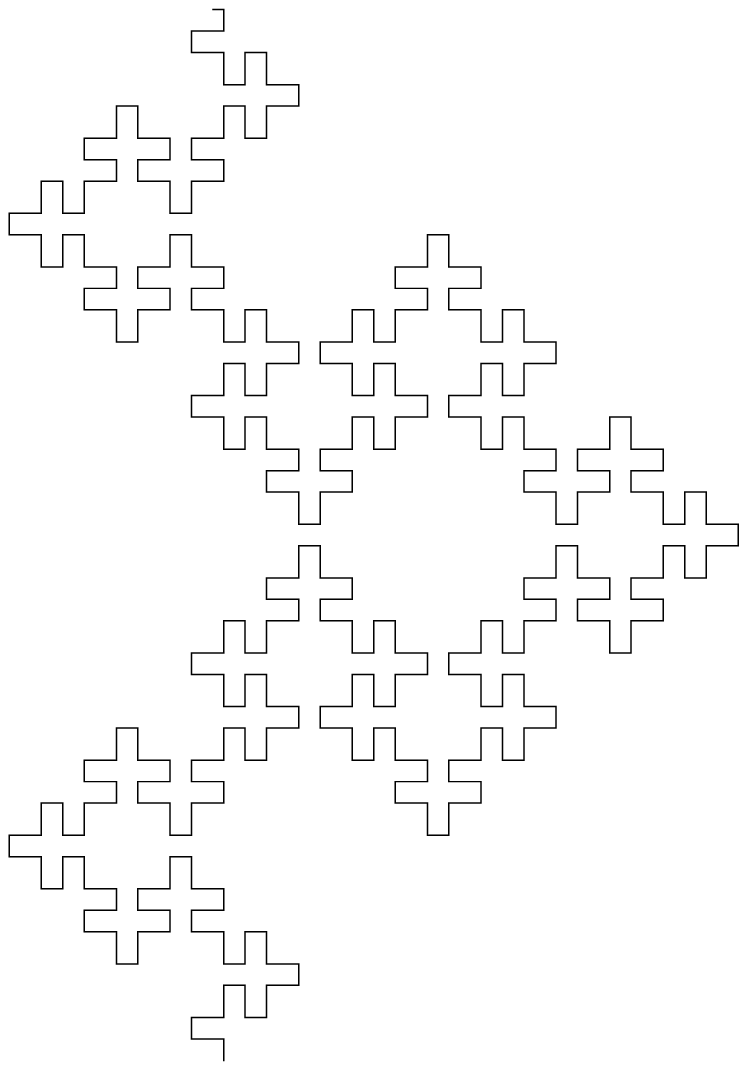} &
    \includegraphics[scale=0.5]{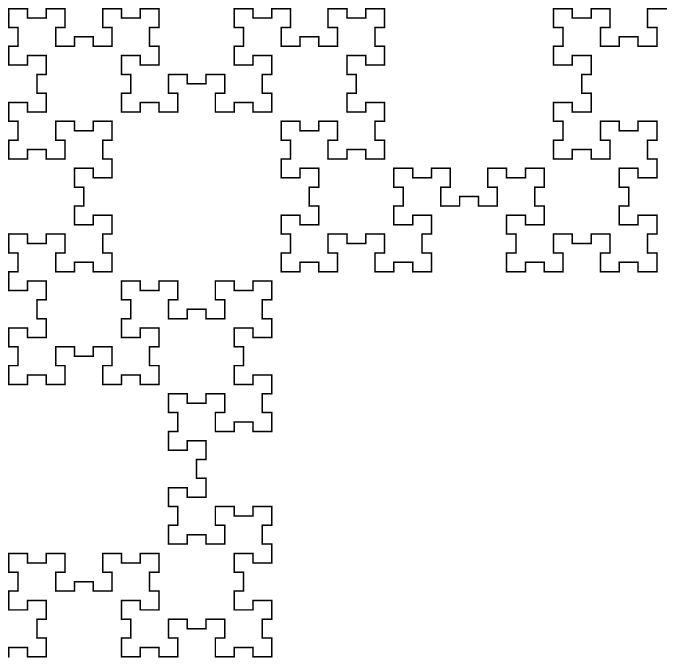} &
      \includegraphics[scale=0.5]{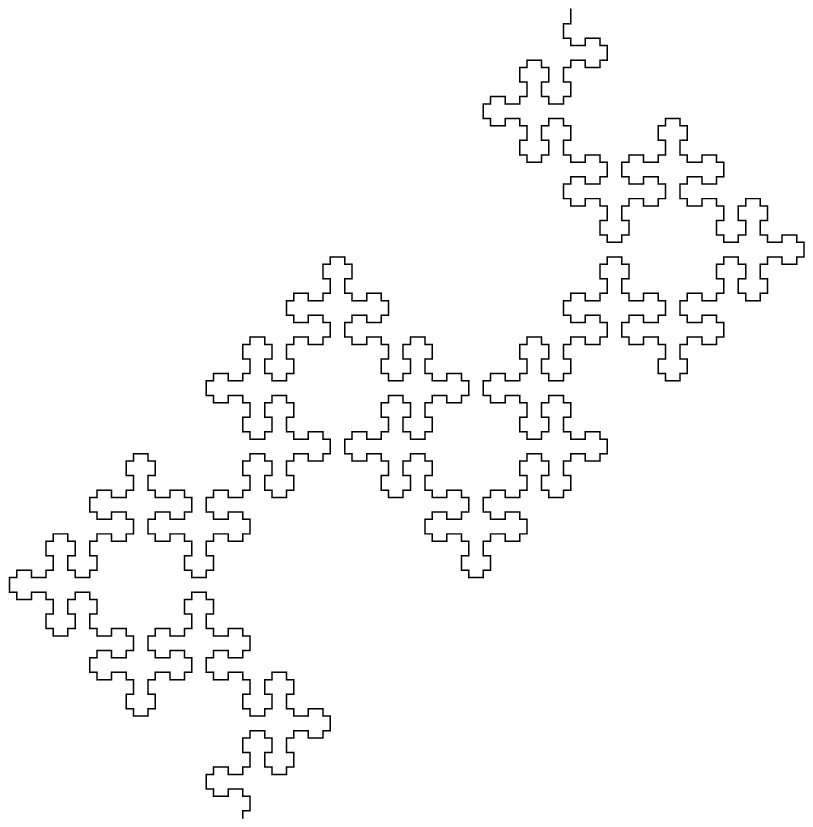} \\ \hline
 $\mathcal{F}_{16}^{\left[4\right]}$ & $\mathcal{F}_{16}^{\left[5\right]}$ & $\mathcal{F}_{16}^{\left[6\right]}$ \\
\includegraphics[scale=0.4]{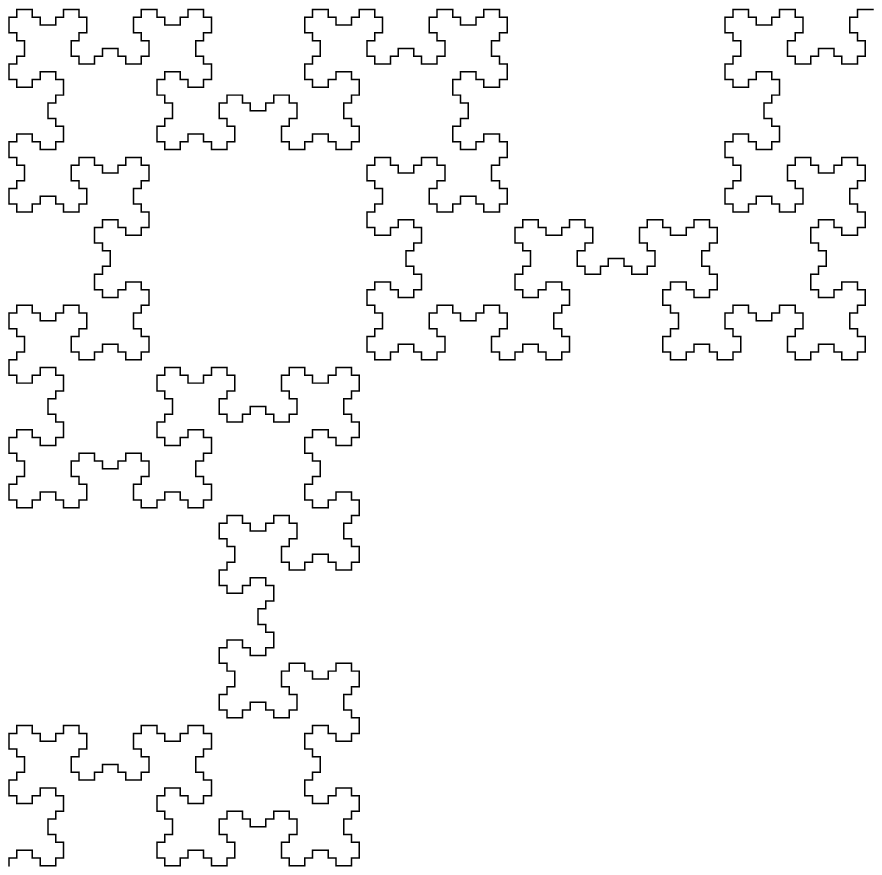} &
     \includegraphics[scale=0.45]{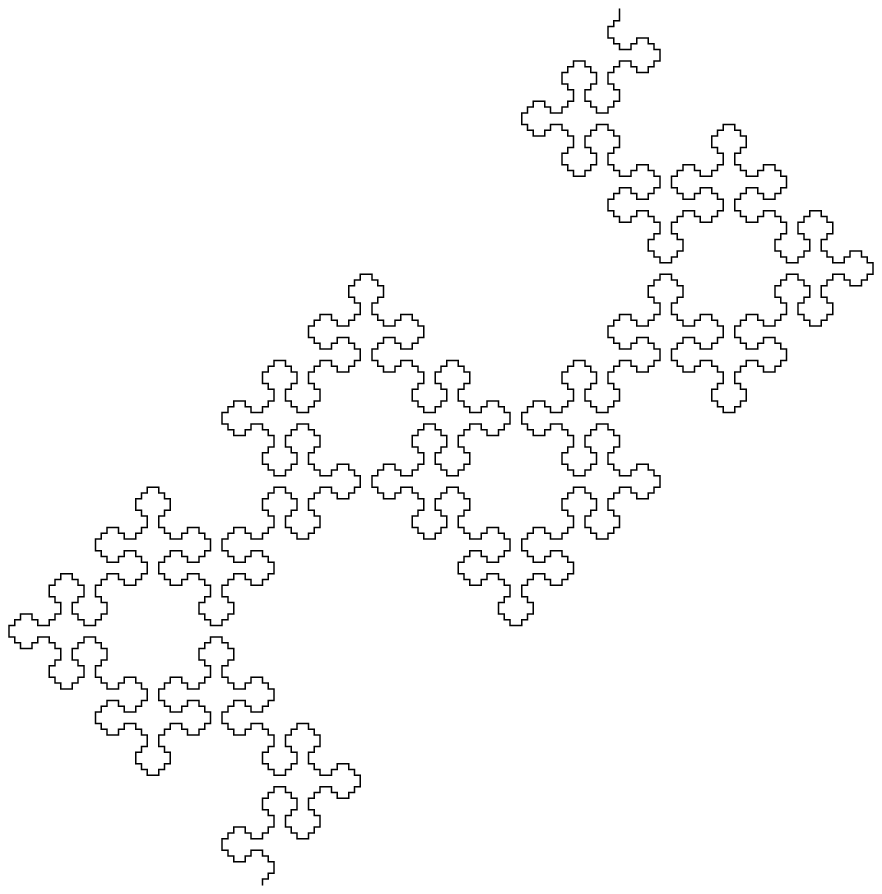} &
    \includegraphics[scale=0.4]{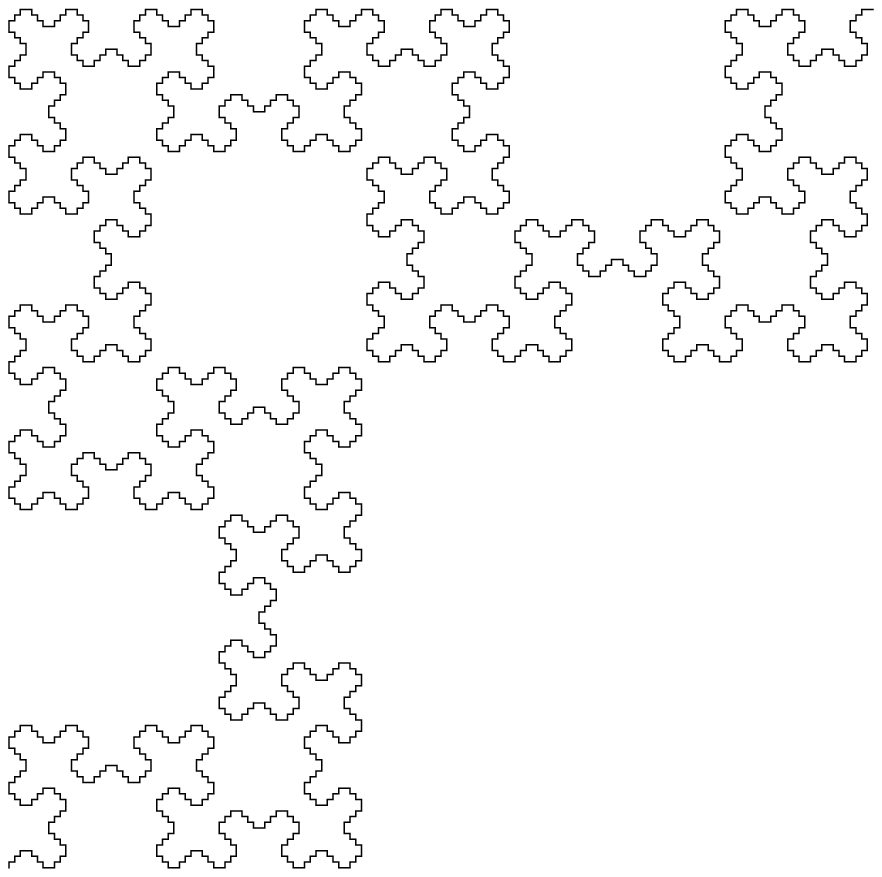} \\ \hline
\end{tabular}
\caption{Curves $\mathcal{F}_{16}^{\left[i\right]}$ for  $i=1, 2, 3, 4, 5$ and $6$.}
\label{graf2}
\end{table}

The following proposition generalizes  Proposition \ref{fractalfibo}.

\begin{proposition}\label{ifibofractal} The $i$-Fibonacci word fractal  and the curve $\mathcal{F}_n^{\left[i\right]}$  have the following properties:
\begin{enumerate}[i.]
  \item The Fibonacci fractal $\mathcal{F}^{\left[i\right]}$ is composed only of segments of length 1 or 2.
  \item The curve $\mathcal{F}_n^{\left[i\right]}$ is similar to the curve $\mathcal{F}_{n-3}^{\left[i\right]}$.
  \item The curve $\mathcal{F}_n^{\left[i\right]}$ is composed  of 5 curves: $\mathcal{F}_{n}^{\left[i\right]}=\mathcal{F}_{n-3}^{\left[i\right]}\mathcal{F}_{n-3}^{\left[i\right]}\mathcal{F}_{n-6}^{\left[i\right]}
      \mathcal{F'}_{n-3}^{\left[i\right]}\mathcal{F'}_{n-3}^{\left[i\right]}$.
  \item The  curve $\mathcal{F}_{n}^{\left[i\right]}$ is symmetric. More precisely, the curves $\mathcal{F}_{3n}^{\left[i\right]}$ and $\mathcal{F}_{3n+2}^{\left[i\right]}$ are symmetric with respect to a line and the curve $\mathcal{F}_{3n+1}^{\left[i\right]}$ is symmetric with respect to a point.

  \item The scale factor between  $\mathcal{F}_n^{\left[i\right]}$  and $\mathcal{F}_{n-3}^{\left[i\right]}$ is $1+\sqrt{2}$.
\end{enumerate}
\end{proposition}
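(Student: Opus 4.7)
The plan is to prove each of the five items by transferring the corresponding combinatorial property of $f_n^{\left[i\right]}$ recorded in Proposition \ref{fibog} into a geometric property of the curve produced by the odd-even drawing rule. The unifying technical issue is that this rule depends on the parity of the absolute position of every $\texttt{0}$, so concatenation of words does not automatically translate into concatenation of curves. This is resolved by the observation that $F_n^{\left[i\right]}\bmod 2$ is $3$-periodic for every $i$, so along every jump $n\mapsto n-3$ each subfactor in a factorization starts at a position whose parity matches that of the parent word; the rule applied to the subfactor therefore produces the same curve (up to a rigid motion) as the rule applied standalone. This is the structural reason for working at increments of $3$.

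For \textbf{(i)} (assuming $i\geq 2$, as in Proposition \ref{fibog}-i), the drawing rule turns at every $\texttt{0}$ and draws straight at every $\texttt{1}$, so a maximal straight segment has length $1+k$ where $k$ is the number of consecutive $\texttt{1}$s between two turns; since $\texttt{11}$ never appears one has $k\leq 1$ and segments of length $1$ or $2$. Item \textbf{(iii)} is the geometric translation of the identity $f_n^{\left[i\right]}=f_{n-3}^{\left[i\right]}f_{n-3}^{\left[i\right]}f_{n-6}^{\left[i\right]}l_{n-3}^{\left[i\right]}l_{n-3}^{\left[i\right]}$ of Proposition \ref{fibog}-v, combined with the parity remark above and the fact that $l_{n-3}^{\left[i\right]}$ differs from $f_{n-3}^{\left[i\right]}$ only in its last two symbols, so its drawing is exactly $\mathcal{F}'_{n-3}^{\left[i\right]}$. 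Item \textbf{(ii)} is an immediate consequence of (iii). Item \textbf{(iv)} follows from Proposition \ref{fibog}-iv: the drawing of a palindromic word with the odd-even rule produces a curve admitting either an axis or a centre of symmetry, and a short case analysis on $n\bmod 3$ combined with the parity of $F_n^{\left[i\right]}$ shows that the axial case applies to $\mathcal{F}_{3n}^{\left[i\right]}$ and $\mathcal{F}_{3n+2}^{\left[i\right]}$ while the central case applies to $\mathcal{F}_{3n+1}^{\left[i\right]}$.

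The main obstacle is item \textbf{(v)}. I would introduce a linear size $\ell_n$ of $\mathcal{F}_n^{\left[i\right]}$ (for instance the distance between its two endpoints, or the diameter of its bounding box) and use the five-piece decomposition of (iii) to set up a recurrence for $\ell_n$. Tracking the heading direction and the endpoint position at each of the four junctions, with the parity-of-position bookkeeping provided by Proposition \ref{fibog}-ii, one shows that the two copies of $\mathcal{F}_{n-3}^{\left[i\right]}$ and the two copies of $\mathcal{F}'_{n-3}^{\left[i\right]}$ are attached at relative orientations that differ by $45^\circ$ from the reference direction of the first $\mathcal{F}_{n-3}^{\left[i\right]}$, with the small central $\mathcal{F}_{n-6}^{\left[i\right]}$ nested at the interior junction. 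A direct trigonometric computation on this configuration then yields the recurrence $\ell_n=(1+\sqrt 2)\,\ell_{n-3}+O(\ell_{n-6})$, and passing to the limit produces the scale factor $1+\sqrt 2$ between $\mathcal{F}^{\left[i\right]}$ and its three-step predecessor. The most delicate step is justifying the precise $45^\circ$ offset at the junctions; once that orientation is in hand, the silver ratio emerges as the diagonal-to-side ratio that the $45^\circ$ attachment forces on the combined piece.
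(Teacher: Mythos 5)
Your treatment of items (i) and (iii) coincides with the paper's (both rest on Proposition \ref{fibog}-i and \ref{fibog}-v respectively). For (ii) and (iv) you take a genuinely different route: the paper does not argue from the decomposition or from palindromicity, but transfers the classical properties of $\mathcal{F}_n$ (Proposition \ref{fractalfibo}) through the morphism $\varphi_i$ of Proposition \ref{morfismo}, checking in a table how $\varphi_i$ maps the segment types $\texttt{01}$, $\texttt{10}$, $\texttt{00}$ and concluding that it preserves the geometry; similarity and symmetry of $\mathcal{F}_n^{[i]}$ are then inherited from the known case $i=2$ (the shift of the symmetry classes from $\{3n,3n+1\}$ to $\{3n,3n+2\}$ being explained by the index shift $f_{n-1}^{[i+2]}=\varphi_i(f_n)$). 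Your route is more self-contained, but calling (ii) an ``immediate consequence of (iii)'' hides an induction: one must check that the five pieces of $\mathcal{F}_n^{[i]}$ are assembled with the same relative orientations as the five pieces of $\mathcal{F}_{n-3}^{[i]}$, which is exactly the orientation bookkeeping you defer. Both arguments sit at a comparable level of informality.

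The genuine problem is item (v). Under the odd-even drawing rule every turn is a quarter turn, so any two of the five sub-curves in the decomposition are rotated relative to one another by a multiple of $90^{\circ}$, never by $45^{\circ}$. The paper makes this precise by writing $f_n^{[i]}=\Phi(f_{n-3}^{[i]})ab\,\Phi(f_{n-3}^{[i]})ab\,f_{n-6}^{[i]}\,\Phi(l_{n-3}^{[i]})ba\,\Phi(l_{n-3}^{[i]})ba$ with $ab\in\{\texttt{01},\texttt{10}\}$: each connecting block contains exactly one $\texttt{0}$, hence exactly one quarter turn, so consecutive copies are orthogonal. From this it obtains the exact recurrence $L_n^{[i]}=2L_{n-3}^{[i]}+L_{n-6}^{[i]}$ for the first-to-last-point length, and the scale factor is the positive root of $\Gamma=2+1/\Gamma$, namely $1+\sqrt{2}$. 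Your proposed recurrence $\ell_n=(1+\sqrt{2})\,\ell_{n-3}+O(\ell_{n-6})$ is derived from an incorrect $45^{\circ}$ junction picture and is moreover self-defeating: if the $\ell_{n-6}$ term carries a nonzero coefficient $c$, then the limit of $\ell_n/\ell_{n-3}$ is the root of $\Gamma=(1+\sqrt{2})+c/\Gamma$, which is not $1+\sqrt{2}$. The silver ratio has to arise as the characteristic root of a recurrence of the shape $L_n=2L_{n-3}+L_{n-6}$ (two large copies plus one small one spanning an axis), not as a diagonal-to-side ratio, which would be $\sqrt{2}$.
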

\begin{proof}
\begin{enumerate}[$i.$]
\item It is clear from Proposition \ref{fibog}-$i$, because \texttt{110} and \texttt{111} are not subwords of  $\textbf{\emph{f}}^{\left[i\right]}$.
\item By Proposition \ref{morfismo} we have $\emph{f}_{n-1}^{\,\left[i+2\right]}=\varphi_i(f_n)$ for all integer $n\geq2$ and $i\geq 0$. Moreover,  $\varphi_i$ maps the different segments as shown in Table  \ref{codigofib}.

 \begin{table}[h]
\centering
\begin{tabular}{|c|c|c|}
  \hline
  \multicolumn{3}{|c|}{If $i$ is even}\\ \hline
$\varphi_i(\verb"01")=\texttt{0}^{i+1}\texttt{1}$  &  $\varphi_i(\texttt{10})=\texttt{0}^{i}\texttt{10}$ &
$\varphi_i(\texttt{00})=\texttt{00}$ \\
 \psset{unit=3mm}
\begin{pspicture}(0,-2)(13,5)
\psline[linewidth=1pt]{-}(0,0)(0,3)(3,3)
\psline[linewidth=1pt]{->}(3,3)(4,3)
\rput(6,2){$\longmapsto$}
\psline[linewidth=1pt]{-}(8,0)(8,1)(9,1)(9,2)(10,2)
\psline[linestyle=dotted, linewidth=1pt]{-}(10,2)(11,2)(11,3)(12,3)
\psline[linewidth=1pt]{-}(12,3)(13,3)
\psline[linewidth=1pt]{->}(13,3)(13.2,3)
\rput(10.2,-0.3){$\underbrace{\hspace{1.4cm}}$}
\rput(10.2,-1.5){$\frac{i+2}{2}$}
\end{pspicture}

   & \psset{unit=3mm}
\begin{pspicture}(0,-1)(11,7)
\psline[linewidth=1pt]{-}(2,0)(2,3)(2,6)
\psline[linewidth=1pt]{->}(2,6)(1,6)
\rput(4,3){$\longmapsto$}
\psline[linewidth=1pt]{-}(6,1)(6,2)(7,2)(7,3)(8,3)
\psline[linestyle=dotted, linewidth=1pt]{-}(8,3)(9,3)(9,4)
\psline[linewidth=1pt]{-}(9,4)(10,4)
\psline[linewidth=1pt]{-}(10,4)(10,6)
\psline[linewidth=1pt]{->}(10,6)(9.4,6)
\rput(8.2,0.7){$\underbrace{\hspace{1.4cm}}$}
\rput(8.2,-0.5){$\frac{i}{2}$}

\end{pspicture}
 &  \psset{unit=3mm}
\begin{pspicture}(0,-2)(10,4)
\psline[linewidth=1pt]{-}(0,0)(0,3)(3,3)
\psline[linewidth=1pt]{->}(3,3)(3,4)
\rput(5,2){$\longmapsto$}
\psline[linewidth=1pt]{-}(7,0)(7,3)(10,3)
\psline[linewidth=1pt]{->}(10,3)(10,4)
\end{pspicture}
\\
  \hline
    \multicolumn{3}{|c|}{If $i$ is odd}\\ \hline
$\varphi_i(\verb"01")=\texttt{0}^{i+1}\texttt{1}$  &  $\varphi_i(\texttt{10})=\texttt{0}^{i}\texttt{10}$ &
$\varphi_i(\texttt{00})=\texttt{00}$ \\
 \psset{unit=3mm}
\begin{pspicture}(0,-2)(12,5)
\psline[linewidth=1pt]{-}(0,0)(0,3)(3,3)
\psline[linewidth=1pt]{->}(3,3)(4,3)
\rput(6,2){$\longmapsto$}
\psline[linewidth=1pt]{-}(8,0)(8,1)(9,1)(9,2)(10,2)
\psline[linestyle=dotted, linewidth=1pt]{-}(10,2)(11,2)(11,3)(12,3)
\psline[linewidth=1pt]{-}(12,3)(12,4)
\psline[linewidth=1pt]{->}(12,4)(12,4.2)
\rput(10.1,-0.3){$\underbrace{\hspace{1.3cm}}$}
\rput(10.2,-1.5){$\frac{i+1}{2}$}
\end{pspicture}

   & \psset{unit=3mm}
\begin{pspicture}(0,-1)(11,7)
\psline[linewidth=1pt]{-}(2,0)(2,3)(2,6)
\psline[linewidth=1pt]{->}(2,6)(1,6)
\rput(4,3){$\longmapsto$}
\psline[linewidth=1pt]{-}(6,1)(6,2)(7,2)(7,3)(8,3)
\psline[linestyle=dotted, linewidth=1pt]{-}(8,3)(9,3)(9,4)
\psline[linewidth=1pt]{-}(9,4)(11,4)
\psline[linewidth=1pt]{->}(11,4)(11,3.5)
\rput(7.9,0.7){$\underbrace{\hspace{1.2cm}}$}
\rput(8.1,-0.5){$\frac{i+1}{2}$}

\end{pspicture}
 &  \psset{unit=3mm}
\begin{pspicture}(0,-2)(10,4)
\psline[linewidth=1pt]{-}(0,0)(0,3)(3,3)
\psline[linewidth=1pt]{->}(3,3)(3,4)
\rput(5,2){$\longmapsto$}
\psline[linewidth=1pt]{-}(7,0)(7,3)(10,3)
\psline[linewidth=1pt]{->}(10,3)(10,4)
\end{pspicture}
\\  \hline
\end{tabular}
\caption{Mapping of segments.}
\label{codigofib}
\end{table}
For example in Fig. \ref{grafunion}, we show the mapping of $f_{10}$ by $\varphi_i$ when $i=2, 3$.
\begin{figure}[H]
\centering
  \includegraphics[scale=0.4]{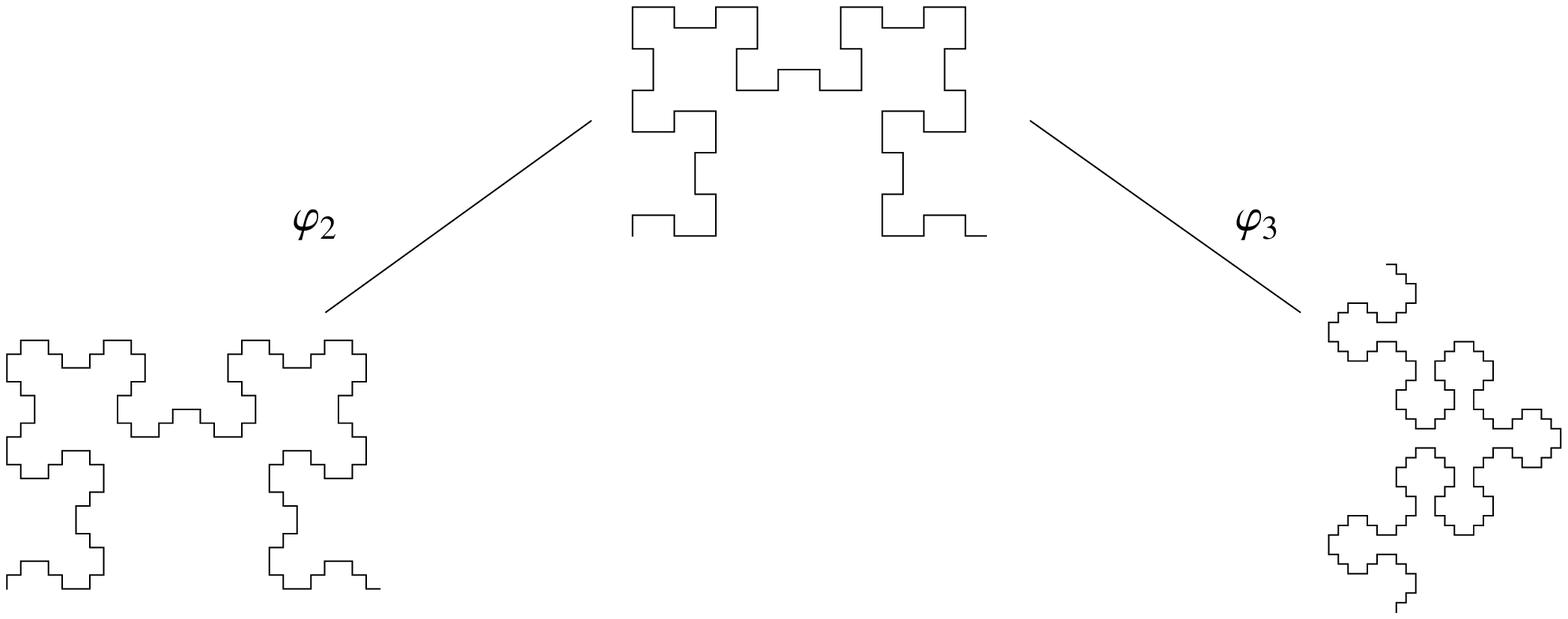}
  \caption{Mapping of $\varphi_2(f_{10})$ and $\varphi_3(f_{10})$. } \label{grafunion}
\end{figure}
Hence, it is clear that $\varphi_i$  preserves the geometric properties.  By
Proposition  \ref{fractalfibo} we have  $\mathcal{F}_n$  is similar to the curve $\mathcal{F}_{n-3}$     then  $\mathcal{F}_n^{\left[i\right]}$  is similar to  $\mathcal{F}_{n-3}^{\left[i\right]}$.

\item It is clear from  Proposition \ref{fibog}-$v$.

\item The proof runs like in $ii$.

\item We show that $$f_n^{\left[i\right]} = f_{n-3}^{\left[i\right]}f_{n-3}^{\left[i\right]}f_{n-6}^{\left[i\right]}l_{n-3}^{\left[i\right]}l_{n-3}^{\left[i\right]}=\Phi(f_{n-3}^{\left[i\right]})ab\Phi(f_{n-3}^{\left[i\right]})ab f_{n-6}^{\left[i\right]}\Phi(l_{n-3}^{\left[i\right]})ba\Phi(l_{n-3}^{\left[i\right]})ba.$$ Since  $ab$ is either \texttt{01} or \texttt{10} , and $\mathcal{F}_{n}^{\left[i\right]}=\mathcal{F}_{n-3}^{\left[i\right]}\mathcal{F}_{n-3}^{\left[i\right]}\mathcal{F}_{n-6}^{\left[i\right]}
      \mathcal{F'}_{n-3}^{\left[i\right]}\mathcal{F'}_{n-3}^{\left[i\right]}$, then the first two curves are orthogonal and the last two curves are orthogonal. Let $L_n^{\left[i\right]}$ be the length of the curve $\mathcal{F}_n^{\left[i\right]}$ from first to last point drawn. Then $L_n^{\left[i\right]}=2L_{n-3}^{\left[i\right]}+L_{n-6}^{\left[i\right]}$ and  by  definition, the scale factor $\Gamma$ is
   \begin{align*}
\Gamma=\frac{L_n^{\left[i\right]}}{L_{n-3}^{\left[i\right]}}=\frac{L_{n-3}^{\left[i\right]}}{L_{n-6}^{\left[i\right]}}
\end{align*}
hence $\Gamma L_{n-3}^{\left[i\right]}=L_n^{\left[i\right]}=2L_{n-3}^{\left[i\right]}+L_{n-6}^{\left[i\right]}=2L_{n-3}^{\left[i\right]}+\frac{L_{n-3}^{\left[i\right]}}{\Gamma}$,  then $\Gamma=1+\sqrt2$. \qedhere
\end{enumerate}
\end{proof}
For each $i$ the system $\mathcal{F}_n^{\left[i\right]}$ ($n \geq 0$) has as attractor the curve
$\mathcal{F}$ (the same argument given in Proposition \ref{ifibofractal}-$ii$).

\section{Generalized Fibonacci Snowflakes}
 We say that a path $w$ is \emph{closed} if it satisfies $|w|_0=|w|_2$ and $|w|_1=|w|_3$.  A \emph{simple path} is a word $w$ such that none of its proper subwords is a closed path. A \emph{boundary word} is a closed path such that none of its proper subwords is closed. Therefore, a \emph{polyomino} is a subset of $\mathbb{Z}\times\mathbb{Z}$ contained in some boundary word.

\begin{figure}[h]
\begin{minipage}[b]{0.55\linewidth}
\begin{example}
In Fig.\ref{pol1} we show a polyomino $P$ such that starting from point $S$, (counterclockwise) the boundary $\textbf{b}(P)$  is coded by the word $w=\verb"2122323030103011"$. Moreover, we denoted by $\widehat{w}$ the path traveled in the opposite direction, i.e.,  $\widehat{w}=\rho^2(w^R)$, where $\rho^2$ is the morphism defined by  $\rho^2(a)=2+ a$, $a\in \mathcal{A}$. In this example $\widehat{w}=\rho^2(\verb"1103010303232212")=\verb"3321232121010030".$
\end{example}

 \end{minipage}
\hspace{0.1cm}
\begin{minipage}[b]{0.45\linewidth}
\centering
\includegraphics[scale=0.4]{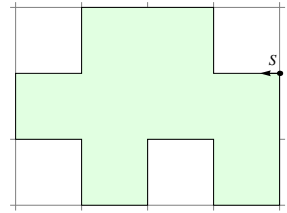}
\caption{Polyomino $P$.}
\label{pol1}
\end{minipage}
\end{figure}

In this section, we study a new generalization of Fibonacci polyominoes from  $i$-Fibonacci words. We use the same procedure as in \cite{BLO2} and we present some geometric properties.

\subsection{Construction of Generalized Fibonacci Polyominoes}

First, rewrite the $i$-Fibonacci words over alphabet $\left\{\verb"0", \verb"2"\right\}\subset \mathcal{A}$,  specifically we apply the morphism $\verb"0" \rightarrow  \verb"2 ", \verb"1" \rightarrow  \verb"0"$.
Next, apply the operator $\Sigma_1$ followed by the operator $\Sigma_0$, where
\begin{align*}
\Sigma_{\alpha}(w)=\alpha\cdot(\alpha + w_1)\cdot(\alpha + w_1 + w_2)\cdots(\alpha + w_1 + w_2 + \cdots w_n),
\end{align*}
with $\alpha\in \mathcal{A}$ and $w=w_1w_2\cdots w_n$. This yield the words $\textbf{p}^{\left[i\right]}=\Sigma_0\Sigma_1\textbf{\emph{f}}^{\left[i\right]}$.

\begin{example}
In Table \ref{graf3}, we show the first words  $\textbf{\emph{p}}^{\left[i\right]}$, with its corresponding curves. The case $n=2$ corresponds to a version of the Fibonacci word fractal with only segments of length 1 \emph{\cite{BLO2}}.
\begin{table}[h]
\centering
\begin{tabular}{|c|c|c|} \hline
$\textbf{p}^{\left[1\right]}=\textbf{p}^{\left[2\right]}=\verb"010303230301"\cdots$   &  $\textbf{p}^{\left[3\right]}=\verb"01012121010303"\cdots$  &  $\textbf{p}^{\left[4\right]}=\verb"01010303032323"\cdots$ \\
 \includegraphics[scale=0.55]{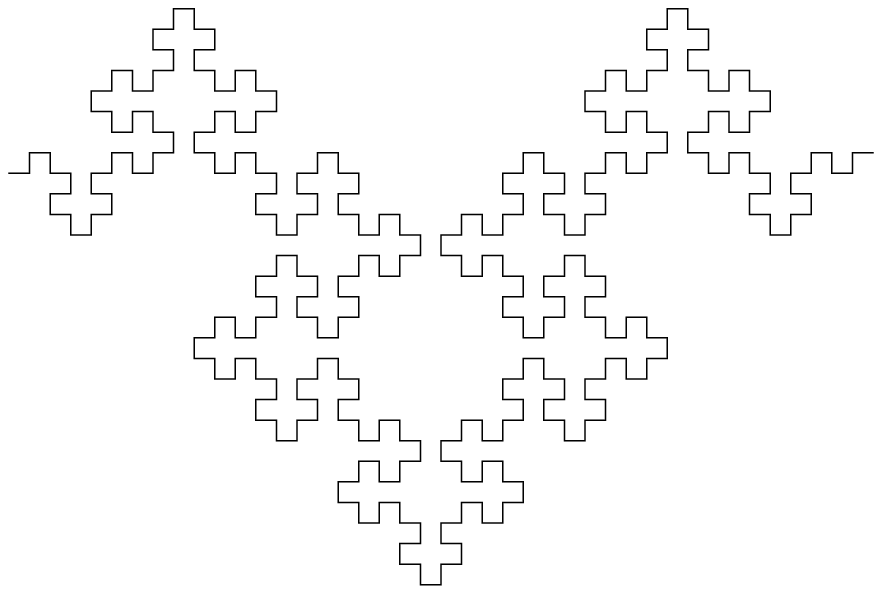} &
      \includegraphics[scale=0.45]{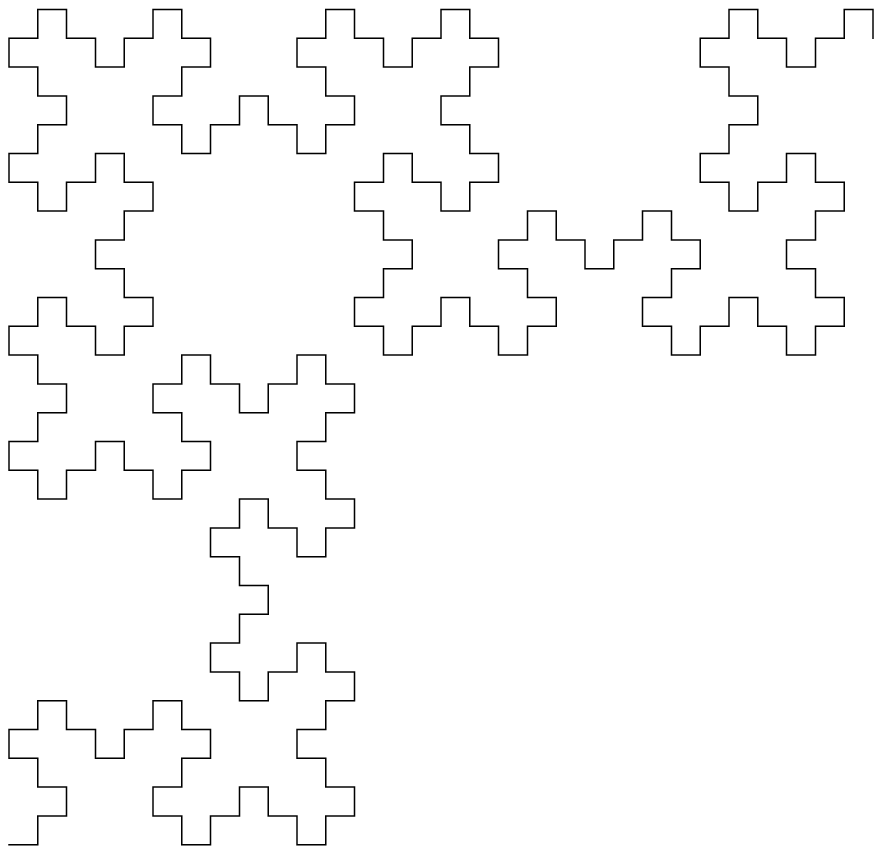} &  \includegraphics[scale=0.45]{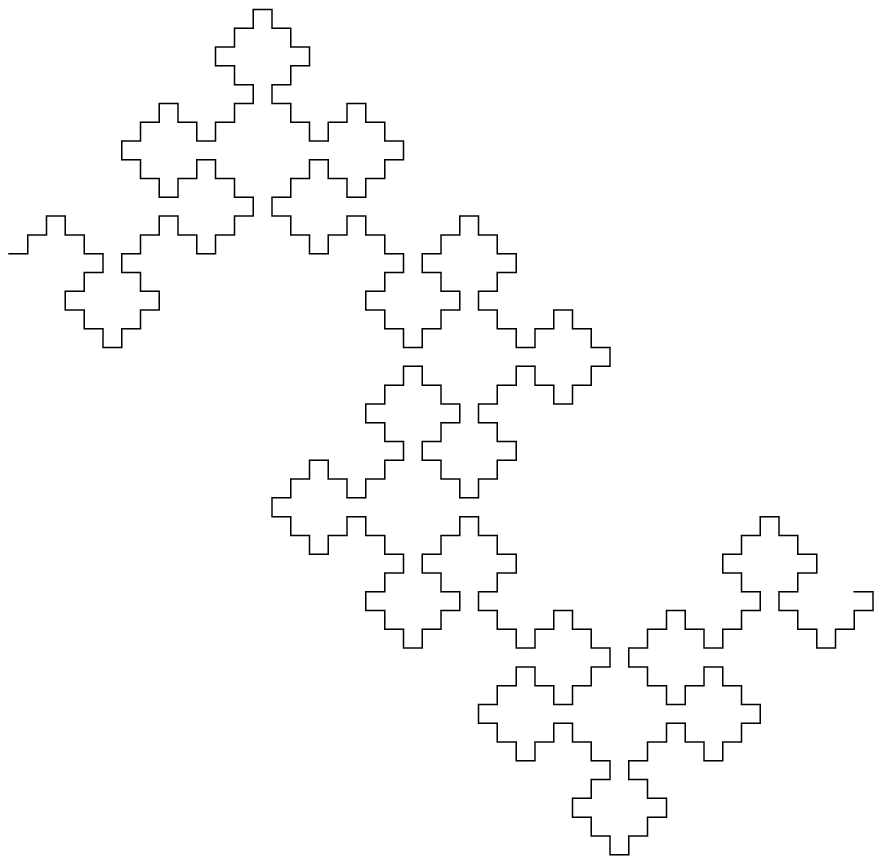} \\ \hline
   $\textbf{p}^{\left[5\right]}=\verb"01010121212101"\cdots$ & $\textbf{p}^{\left[6\right]}=\verb" 01010103030303 "\cdots$ & $\textbf{p}^{\left[7\right]}=\verb"01010101212121"\cdots$  \\

     \includegraphics[scale=0.43]{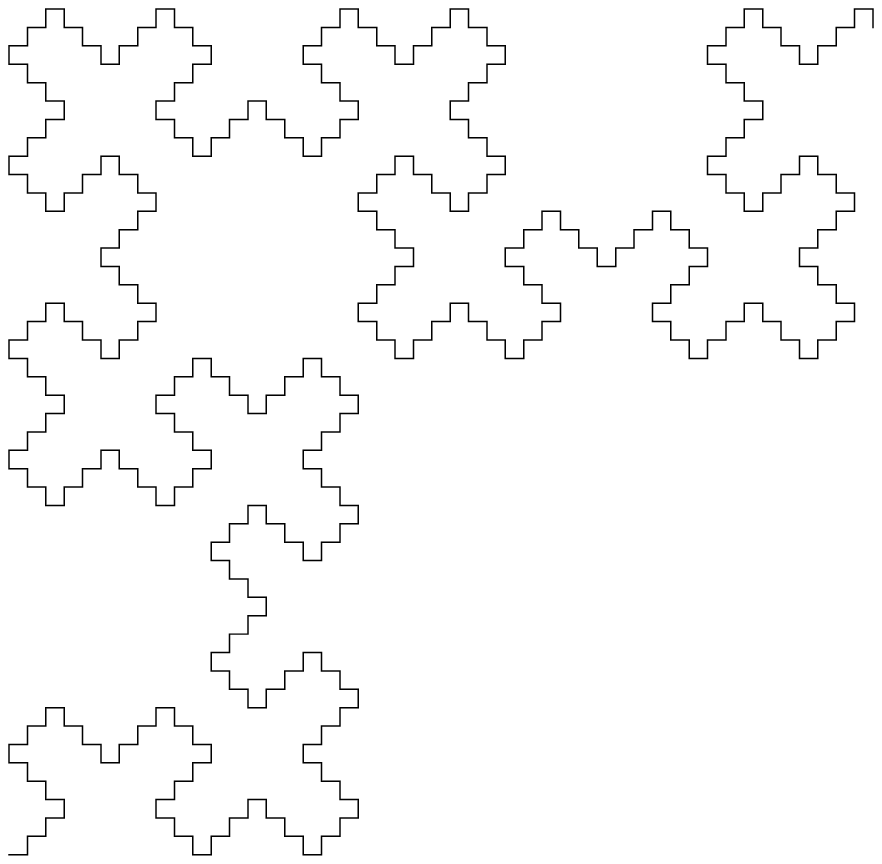} & \includegraphics[scale=0.5]{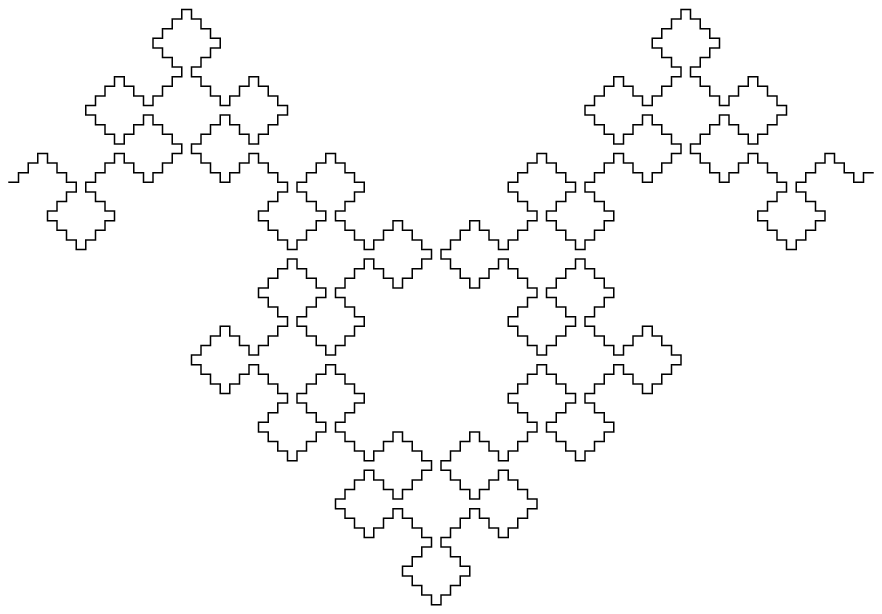} & \includegraphics[scale=0.43]{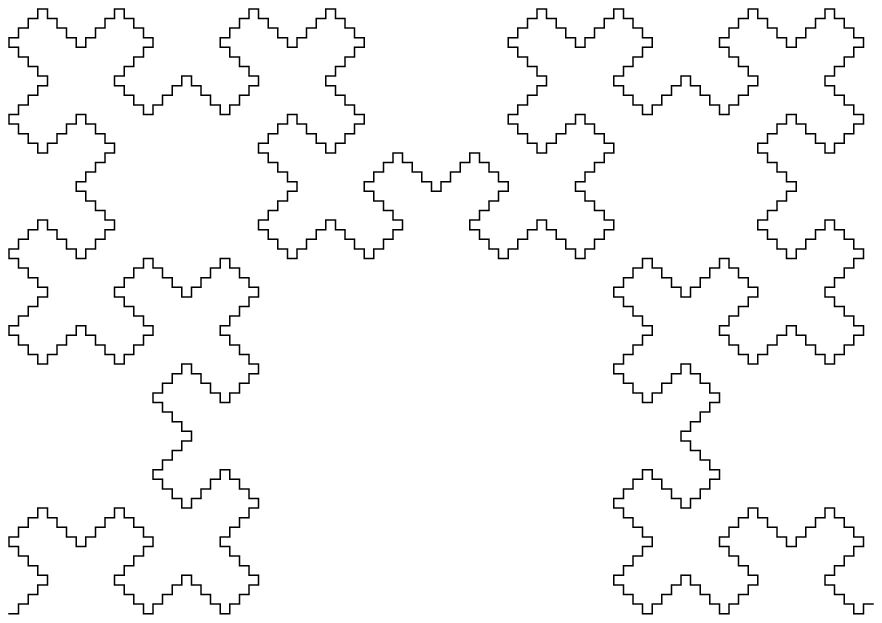} \\ \hline
\end{tabular}
\caption{Words $\textbf{p}^{\left[i\right]}$ and its corresponding curves.}
\label{graf3}
\end{table}
\end{example}

Given a word $w\in\mathcal{A}^*$ we  define the word  $\Delta(w)=(w_2 - w_1)\cdot(w_3 - w_2)\cdots (w_n - w_{n-1})\in \mathcal{A}^*$, then it is clear that  $\Delta (\textbf{p}^{\left[i\right]})=\Sigma_1\textbf{\emph{f}}^{\left[i\right]}$. We shall denote this sequence  by   $\textbf{q}^{\left[i\right]}$. Last, we define the morphism $\overline{a}$, with $a\in\mathcal{A}$, as $
\overline{\texttt{0}}=\verb"0", \overline{\texttt{1}}=\verb"3",  \overline{\texttt{2}}=\verb"2",  \overline{\texttt{3}}=\verb"1".$ Moreover, the words $w\in \mathcal{A}^*$ satisfying $\overline{w}=w^R$ are called\emph{ antipalindromes}.

\begin{definition}
Consider the sequence  $\left\{q_n^{\left[i\right]}\right\}_{n\geq 0}$ defined by:
\begin{itemize}
  \item If $i$ is even, $q_0^{\left[i\right]}=\epsilon$, $q_1^{\left[i\right]}=\verb"1"$, $q_2^{\left[i\right]}=(\verb"13")^{\frac{i}{2}}$ and
\begin{align*}
q_n^{\left[i\right]}=\begin{cases}
q_{n-1}^{\left[i\right]}q_{n-2}^{\left[i\right]}, & \ n\cong 1 \mod3\\
q_{n-1}^{\left[i\right]}\overline{q_{n-2}^{\left[i\right]}}, & \ n\cong 0, 2 \mod3.
\end{cases}
\end{align*}
  \item If $i$ is odd, $q_0^{\left[i\right]}=\epsilon$, $q_1^{\left[i\right]}\verb"=1"$, $q_2^{\left[i\right]}=(\verb"13")^{\frac{i-1}{2}}\verb"1"$ and
\begin{align*}
q_n^{\left[i\right]}=
\begin{cases}
q_{n-1}^{\left[i\right]}q_{n-2}^{\left[i\right]}, & \ n\cong 0 \mod3\\
q_{n-1}^{\left[i\right]}\overline{q_{n-2}^{\left[i\right]}}, & \ n\cong 1, 2 \mod3.
\end{cases}
\end{align*}
\end{itemize}
\end{definition}
It is clear that $|q_n^{\left[i\right]}|=F_{n-1}^{\left[i\right]}$.

\begin{example} The first terms of   $\left\{q_n^{\left[i\right]}\right\}_{n\geq 0}$ are:

\begin{tabular}{l}
  $\left\{q_n^{\left[2\right]}\right\}_{n\geq 0}=\left\{\epsilon \verb", 1, 13, 133, 13313, 13313311, 1331331131131," \ldots\right\}$, \\
  $\left\{q_n^{\left[3\right]}\right\}_{n\geq 0}=\left\{\epsilon\verb", 1, 131, 1311, 1311313, 13113133133, 131131331331311313,"\ldots \ldots\right\}$, \\
  $\left\{q_n^{\left[4\right]}\right\}_{n\geq 0}=\left\{\epsilon\verb", 1, 1313, 13133, 131331313, 13133131331311," \ldots\right\}$, \\
    $\left\{q_n^{\left[5\right]}\right\}_{n\geq 0}=\left\{\epsilon\verb", 1, 13131, 131311, 13131131313, 13131131313313133," \ldots\right\}$.
\end{tabular}
\end{example}

The following propositions generalize the case when $i=2$, \cite{BLO2}.
\begin{proposition}
The word  $\textbf{\emph{q}}^{\left[i\right]}=\Sigma_1\textbf{f}^{\left[\,i\right]}$ is the limit of the sequence $\left\{q_n^{\left[i\right]}\right\}_{n\geq0}$.
\end{proposition}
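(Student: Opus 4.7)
The plan is to prove by induction on $n$ that $q_n^{[i]}$ equals the prefix of $\textbf{q}^{[i]}$ of length $F_{n-1}^{[i]}$ (with the convention $F_{-1}^{[i]}=0$). Since the recursion makes $q_{n-1}^{[i]}$ a prefix of $q_n^{[i]}$ and $F_{n-1}^{[i]}\to\infty$, this yields $\textbf{q}^{[i]}=\lim_{n\to\infty}q_n^{[i]}$ at once. Let $\sigma$ denote the rewriting $\texttt{0}\mapsto\texttt{2}$, $\texttt{1}\mapsto\texttt{0}$, and write $g_n^{[i]}=\sigma(f_n^{[i]})$, so that $\textbf{q}^{[i]}=\Sigma_1\sigma(\textbf{f}^{[i]})$. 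The base cases $n\leq 2$ are direct: from $f_1^{[i]}=\texttt{0}^{i-1}\texttt{1}$ one gets $\sigma(f_1^{[i]})=\texttt{2}^{i-1}\texttt{0}$, whose length-$i$ $\Sigma_1$-prefix alternates between $\texttt{1}$ and $\texttt{3}$ and matches $q_2^{[i]}$ in both parities of $i$.

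The inductive step uses the concatenation identity
\[
\Sigma_1(uv)=\Sigma_1(u)\cdot\bigl(\Sigma_{1+T(u)}(v)\bigr)',\qquad u,v\in\{\texttt{0},\texttt{2}\}^{*},
\]
where $T(u)=\sum_k u_k\bmod 4\in\{0,2\}$ and the prime denotes deletion of the first letter. Because partial sums of a $\{\texttt{0},\texttt{2}\}$-word starting from $\texttt{1}$ stay in $\{\texttt{1},\texttt{3}\}$ and adding $\texttt{2}$ swaps those two values, one also has $\Sigma_3(v)=\overline{\Sigma_1(v)}$. Applying this identity to the decomposition $f_{n-1}^{[i]}=f_{n-2}^{[i]}f_{n-3}^{[i]}$ after $\sigma$ and then truncating to length $F_{n-1}^{[i]}$, the desired prefix factors as $p_{n-1}\cdot Y_n$, where $p_{n-1}$ is the length-$F_{n-2}^{[i]}$ prefix of $\textbf{q}^{[i]}$ (equal to $q_{n-1}^{[i]}$ by the inductive hypothesis) and $Y_n=q_{n-2}^{[i]}$ if $T(g_{n-2}^{[i]})=0$ while $Y_n=\overline{q_{n-2}^{[i]}}$ if $T(g_{n-2}^{[i]})=2$. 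The delicate point is that the junction letter, namely the last letter of $\Sigma_1 g_{n-2}^{[i]}$ (equal to $\texttt{1}+T(g_{n-2}^{[i]})$), must be reabsorbed into the subsequent factor in order to supply the missing first letter of $q_{n-2}^{[i]}$ or its bar.

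It remains to verify that the case split ``$T(g_{n-2}^{[i]})=2$'' coincides with the congruence on $n\bmod 3$ in the definition. Since $T(g_n^{[i]})=2\,|f_n^{[i]}|_{\texttt{0}}\bmod 4$, this reduces to the parity of $a_n:=|f_n^{[i]}|_{\texttt{0}}$, which satisfies $a_0=1$, $a_1=i-1$, and $a_n=a_{n-1}+a_{n-2}$. A short induction shows that $a_n\bmod 2$ is periodic of period $3$ with phase determined by the parity of $i$, and a case-by-case verification identifies the resulting bar/no-bar alternation with the two congruence systems in the definition of $q_n^{[i]}$, closing the induction. The main obstacle will be the seam bookkeeping in the inductive step: one must carefully track the junction letter as it simultaneously ends the first $\Sigma_1$-factor and supplies the first letter of the subsequent $q_{n-2}^{[i]}$-factor (or its bar).
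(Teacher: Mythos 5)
Your plan is correct, but it is the mirror image of the paper's argument rather than the same one. The paper inverts the problem: it applies the difference operator $\Delta$ (a left inverse of $\Sigma_1$) to $q_n^{[i]}$ and shows by induction that $\Delta(q_n^{[i]})\,\alpha_{n-1}=f_{n-1}^{[i]}$ (written over $\{\texttt{0},\texttt{2}\}$), where the appended letter $\alpha_m$ depends only on the parity of $m$. That induction is driven by the defining recursion of $q_n^{[i]}$, so the $n\bmod 3$ case split comes for free, and the entire seam issue collapses into one claim --- that the difference across the junction between $q_{n-1}^{[i]}$ and $q_{n-2}^{[i]}$ (or its bar) equals $\alpha_{n-2}$ --- which the paper leaves implicit. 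You push $\Sigma_1$ forward through $f_{n-1}^{[i]}=f_{n-2}^{[i]}f_{n-3}^{[i]}$ instead, which obliges you to carry the running sum $T(u)$, invoke $\Sigma_3=\overline{\Sigma_1}$, and prove the auxiliary fact that $|f_m^{[i]}|_{\texttt{0}}\bmod 2$ is $3$-periodic with phase governed by the parity of $i$; I checked that this parity pattern does line up with the congruence classes in the definition of $q_n^{[i]}$ in both the even and odd cases. What your route buys is an explanation of where those congruences come from (they are the Fibonacci parity pattern of the count of \texttt{0}'s); what the paper's route buys is brevity, since $\Delta$ localizes the junction into a single letter and inherits the case split from the definition. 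Both versions ultimately rest on the same seam verification that neither writes out in full (yours: the letter $\texttt{1}+T(g_{n-2}^{[i]})$ ending $\Sigma_1 g_{n-2}^{[i]}$ must coincide with the first letter of $q_{n-2}^{[i]}$ or of $\overline{q_{n-2}^{[i]}}$; the paper's: the junction difference must equal $\alpha_{n-2}$), but closing it in your setting is immediate because every $q_m^{[i]}$ with $m\geq 1$ begins with $\texttt{1}$, so the reabsorbed letter is $\texttt{1}$ when $T=0$ and $\texttt{3}=\overline{\texttt{1}}$ when $T=2$, exactly as required.
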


\begin{proof}
We know that $\Delta(\textbf{q}^{\left[i\right]})=\textbf{\emph{f}}^{\left[i\right]}$, then it suffices to prove that $\Delta(q_n^{\left[i\right]})\alpha_{n-1}=f_{n-1}^{\left[i\right]}$ for all $n\geq 2$, where $\alpha_n=\texttt{2}$ if $n$ is even and  $\alpha_n=\texttt{0}$ if $n$ is odd. By induction on $n$. If $i$ is even, then
\begin{align*}
\Delta(q_2^{\left[i\right]})\alpha_{1}&=\Delta((\texttt{13})^{i/2})\alpha_1=(\texttt{22})^{i/2-1}\texttt{20}= \texttt{2}^{i-2}\texttt{20}=\texttt{2}^{i-1}\texttt{0}=f_1^{\left[i\right]},\\
\Delta(q_3^{\left[i\right]})\alpha_{2}&=\Delta((\texttt{13})^{i/2}\texttt{3})\alpha_2=\texttt{2}^{i-1}\texttt{02}= f_1^{\left[i\right]}f_0^{\left[i\right]}=f_2^{\left[i\right]},\\
\Delta(q_4^{\left[i\right]})\alpha_{3}&=\Delta((\texttt{13})^{i/2}\texttt{3}(\texttt{13})^{i/2})\alpha_3= \texttt{2}^{i-1}\texttt{0}\texttt{2}^i\texttt{0} =f_2^{\left[i\right]}f_1^{\left[i\right]}=f_3^{\left[i\right]}.
\end{align*}

Assume for all $m$, with $2\leq m < n$; we prove it for $n$. We only prove the case  $n\cong 1 \mod 3$, since the argument is similar for the other cases. Let $n=3k+1$ for some integer $k$. Then
\begin{align*}
\Delta(q_{3k+1}^{\left[i\right]})\alpha_{3k}&=\Delta(q_{3k}^{\left[i\right]}q_{3k-1}^{\left[i\right]})\alpha_{3k}=\Delta(q_{3k}^{\left[i\right]})\alpha_{3k-1}\Delta(q_{3k-1}^{\left[i\right]})\alpha_{3k-2}
=f_{3k-1}^{\left[i\right]}f_{3k-2}^{\left[i\right]}= f_{3k}^{\left[i\right]}.
\end{align*}
If $i$ is odd, the proof is similar.
\end{proof}

\begin{proposition}\label{pal}
Let $n\in \mathbb{N}$ and $\sigma_n=\texttt{\emph{1}}$ if  $n$ is even and $\sigma_n=\texttt{\emph{3}}$ if $n$ is odd. Then if $i$ is even $q_{3n+1}^{\left[i\right]}=r\sigma_n, q_{3n+2}^{\left[i\right]}=m\overline{\sigma_n}$ and $q_{3n+3}^{\left[i\right]}=p\overline{\sigma_n}$ for some  antipalindrome $p$  and  some palindromes  $r$, $m$.  If $i$ is odd $q_{3n+1}^{\left[i\right]}=r\overline{\sigma_n}, q_{3n+2}^{\left[i\right]}=m\overline{\sigma_n}$ and $q_{3n+3}^{\left[i\right]}=p\sigma_n$ for some  antipalindrome $m$ and  some palindromes $r$,  $p$.
\end{proposition}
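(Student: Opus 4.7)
The plan is strong induction on $n$, treating the three structural claims about $q_{3n+1}^{[i]}$, $q_{3n+2}^{[i]}$, and $q_{3n+3}^{[i]}$ simultaneously; they must be carried together because the defining recurrences couple consecutive $q$'s, so the (anti)palindrome form of $q_{3n+3}^{[i]}$ must be in hand before one can analyze $q_{3(n+1)+1}^{[i]}$. The base case is handled by inspection: from $q_1^{[i]} = \texttt{1}$, the explicit form of $q_2^{[i]}$, and $q_3^{[i]}$ computed from the recurrence, one reads off a palindromic or antipalindromic prefix followed by the prescribed final letter $\sigma_n$ or $\overline{\sigma_n}$ in each of the two parity cases for $i$.

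For the inductive step I focus on $i$ even (the case $i$ odd is fully parallel, with the redistribution of palindrome/antipalindrome roles dictated by which of the recurrences carries the overline). Write the inductive hypothesis as
\[
q_{3n+1}^{[i]} = r_1\sigma_n, \qquad q_{3n+2}^{[i]} = m_1\overline{\sigma_n}, \qquad q_{3n+3}^{[i]} = p_1\overline{\sigma_n},
\]
with $r_1,m_1$ palindromes and $p_1$ antipalindromic. The identity $q_{3n+3}^{[i]} = q_{3n+2}^{[i]}\,\overline{q_{3n+1}^{[i]}}$ forces $p_1 = m_1\overline{\sigma_n}\,\overline{r_1}$; unpacking the antipalindrome condition $p_1^R = \overline{p_1}$ using $r_1^R = r_1$, $m_1^R = m_1$ and the standard rules $\overline{uv} = \overline{u}\,\overline{v}$, $(uv)^R = v^R u^R$ yields a key algebraic identity
\[
\overline{m_1}\,\sigma_n\, r_1 \;=\; \overline{r_1}\,\overline{\sigma_n}\, m_1. \quad (\star)
\]
Applying the three recurrences that produce $q_{3n+4}^{[i]},\, q_{3n+5}^{[i]},\, q_{3n+6}^{[i]}$, substituting the forms above, and isolating the (now known) trailing letter in each case, I read off explicit candidates $r_2,m_2,p_2$ as words in $r_1,m_1,\sigma_n$ and their overlines/reverses.

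Each (anti)palindrome claim about $r_2,m_2,p_2$ is then verified by computing its reverse and applying $(\star)$ at the appropriate location to collapse it back to the original expression. The main obstacle I foresee is purely bookkeeping: the expressions for $m_2$ and especially $p_2$ are long concatenations of interleaved copies of $r_1,m_1$, their overlines, and separating single letters $\sigma_n$ or $\overline{\sigma_n}$, so one must be disciplined about precisely where to insert $(\star)$ when rewriting the reverse. Nothing deeper than the definition of $q_n^{[i]}$, the identity $(\star)$, and the routine commutation rules for $\overline{(\cdot)}$ and $(\cdot)^R$ is required, and the same argument, with the pattern of overlines in the recurrence shifted by one residue class modulo $3$, closes the odd-$i$ case.
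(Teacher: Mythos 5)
Your proposal is correct and follows essentially the same route as the paper: a simultaneous induction on $n$ carrying the triple $q_{3n+1}^{[i]}, q_{3n+2}^{[i]}, q_{3n+3}^{[i]}$, with the next triple computed explicitly from the mod-$3$ recurrence and the palindrome/antipalindrome claims checked via the reversal and overline rules. The only cosmetic difference is that where you extract the identity $(\star)$ from the antipalindromy of $p_1$ and invoke it during verification, the paper instead substitutes the recurrence one extra level (e.g.\ writing $q_{3n+4}^{[i]} = q_{3n+2}^{[i]}\,\overline{q_{3n+1}^{[i]}}\,q_{3n+2}^{[i]} = m\overline{\sigma_n}\,\overline{r}\,\overline{\sigma_n}\,m\,\overline{\sigma_n}$), which makes the symmetry of each candidate manifest without an auxiliary identity.
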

\begin{proof}
The proof is by induction on $n$. If $i$ is even, for $n=0$ we have  $q_1^{\left[i\right]}=\epsilon \cdot \texttt{1}, q_2^{\left[i\right]}=(\texttt{13})^{i/2}=((\texttt{13})^{\frac{i}{2}-1}\texttt{1})\texttt{3}=((\texttt{13})^{\frac{i}{2}-1}\texttt{1})
\cdot\overline{\texttt{1}}$ and $q_3^{\left[i\right]}=(\texttt{13})^{i/2}\texttt{3}=(\texttt{13})^{i/2}\cdot\overline{\texttt{1}}$. Now, suppose that  $q_{3n+1}=r\sigma_n, q_{3n+2}=m\overline{\sigma_n}$ and $q_{3n+3}=p\overline{\sigma_n}$ for some  antipalindrome $p$ and some palindromes  $r$, $m$.  Then
\begin{align*}
q_{3n+4}^{\left[i\right]}&=q_{3n+3}^{\left[i\right]}q_{3n+2}^{\left[i\right]}=q_{3n+2}^{\left[i\right]}\overline{q_{3n+1}^{\left[i\right]}}q_{3n+2}^{\left[i\right]}=m\overline{\sigma_n} \cdot\overline{r \sigma_n} \cdot m \overline{\sigma_n}=m\overline{\sigma_n r \sigma_n} m\cdot \sigma_{n+1},  \\
q_{3n+5}^{\left[i\right]}&=q_{3n+4}^{\left[i\right]}\overline{q_{3n+3}^{\left[i\right]}}=q_{3n+3}^{\left[i\right]}q_{3n+2}^{\left[i\right]}\overline{q_{3n+3}^{\left[i\right]}}=p\overline{\sigma_n } \cdot m \overline{\sigma_n} \cdot \overline {p  \overline{\sigma_n}}=p\overline{\sigma_n }m \overline{\sigma_n p}\cdot \overline{\sigma_{n+1}},\\
q_{3n+6}^{\left[i\right]}&=q_{3n+5}^{\left[i\right]}\overline{q_{3n+4}^{\left[i\right]}}=
q_{3n+4}^{\left[i\right]}\overline{q_{3n+3}^{\left[i\right]}}\overline{q_{3n+4}^{\left[i\right]}}=
m\overline{\sigma_n r \sigma_n}m \overline{\sigma_n} \cdot \overline{p \sigma_{n+1}}\cdot  \overline{m}  \sigma_n r \sigma_n \overline{m} \sigma_n \\
&=m\overline{\sigma_n r \sigma_n}m \overline{\sigma_n p} \sigma_n \overline{m}  \sigma_n r \sigma_n \overline{m} \cdot  \overline{\sigma_{n+1}}
\end{align*}
with palindromes  $m\overline{\sigma_n r \sigma_n}m $ and $p\overline{\sigma_n }m \overline{\sigma_n p}$, and  antipalindrome $m\overline{\sigma_n r \sigma_n}m \overline{\sigma_n p} \sigma_n \overline{m}  \sigma_n r \sigma_n \overline{m}$.
If $i$ is odd, the proof is similar.
\end{proof}

\begin{proposition}\label{poliminofibo}
Let $n$ be a positive integer  and $\alpha\in \mathcal{A}$ then
\begin{enumerate}[i.]
  \item The path  $\Sigma_\alpha q_n^{\left[i\right]}$ is simple.
  \item If $i$ is even, then the path $\Sigma_\alpha^{\circ}(q_{3n}^{\left[i\right]})^4$ is the boundary word of a polyomino.
\item If $i$ is odd, then the path $\Sigma_\alpha^{\circ}(q_{3n+2}^{\left[i\right]})^4$ is the boundary word of a polyomino.
\end{enumerate}
Where $\Sigma^{\circ}_{\alpha}(w)=\alpha\cdot(\alpha + w_1)\cdot(\alpha + w_1 + w_2)\cdots(\alpha + w_1 + w_2 + \cdots w_{n-1})$.
\end{proposition}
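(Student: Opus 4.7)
The plan is to prove (i), (ii) and (iii) together by induction on $n$, using the recursive structure of the sequence $\{q_n^{[i]}\}$. The central tool is the additivity of $\Sigma$ under concatenation, namely
$$\Sigma_\alpha(uv) = \Sigma_\alpha(u)\cdot \Sigma_{\alpha+s(u)}(v), \qquad s(u):=\sum_j u_j \pmod 4,$$
so that the path generated by $q_n^{[i]} = q_{n-1}^{[i]} \, \epsilon_n \, q_{n-2}^{[i]}$ (where $\epsilon_n$ is either the identity or the bar operator $\overline{\cdot}$) decomposes into the simple path coming from $q_{n-1}^{[i]}$ followed by a rotated, possibly reflected, copy of the simple path coming from $q_{n-2}^{[i]}$. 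I would also record at the outset that the bar map satisfies $\overline{a}\equiv -a\pmod 4$ on $\mathcal{A}$, so $s(\overline w)\equiv -s(w)\pmod 4$; this turns the defining recursion of the $q_n^{[i]}$ into the mod~$4$ recursion $s(q_n^{[i]})\equiv s(q_{n-1}^{[i]})\pm s(q_{n-2}^{[i]})$ depending on which case of the definition applies.

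For part (i), I would induct on $n$, checking $n\le 2$ directly (a trivial path, a single edge, and a short staircase of length~$i$). For the inductive step, I split $\Sigma_\alpha q_n^{[i]}$ via the additivity identity into two pieces, both simple by the induction hypothesis, and then argue that they cannot cross each other. The noncrossing is read off from Proposition \ref{pal}: the last letter of $q_{n-1}^{[i]}$ and the first letter of $\epsilon_n q_{n-2}^{[i]}$, combined with the palindrome or antipalindrome nature of the two pieces, force the second piece to enter a half-plane disjoint from the one already occupied by the first piece near the join.

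For parts (ii) and (iii), two things have to be verified: that the 4-fold concatenation is \emph{closed}, and that it is \emph{simple}. Closedness is a mod~$4$ computation: from the recursion derived above together with the base values $s(q_1^{[i]})=1$ and $s(q_2^{[i]})$ (an explicit function of $i$), one shows inductively that $s(q_{3m}^{[i]})\equiv\pm 1\pmod 4$ when $i$ is even, and $s(q_{3m+2}^{[i]})\equiv\pm 1\pmod 4$ when $i$ is odd. In those cases each successive copy of $q_n^{[i]}$ inside the word $(q_n^{[i]})^4$ is drawn in a direction rotated by $\pm\pi/2$ from the previous one, so the four copies fit together with $\pi/2$-rotational symmetry; since the sum of any vector under the rotations $R^0,R^1,R^2,R^3$ by $\pi/2$ vanishes, the path closes. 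Simplicity of each individual copy is given by part (i); the fact that the four rotationally-congruent copies do not cross each other is the main obstacle, and I would handle it either by exhibiting a bounding wedge for one copy around the centre of rotation (so that the four copies live in four disjoint quadrants meeting only at the four corner points), or more slickly by using Proposition \ref{pal} to write down an explicit BN-factorization $A\cdot B\cdot \widehat{A}\cdot \widehat{B}$ of the 4-fold word, with $A$ and $B$ built from the palindromes and antipalindromes furnished by Proposition \ref{pal}, and then invoking the Beauquier--Nivat criterion to conclude that the word bounds a polyomino.
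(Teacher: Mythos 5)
Your overall architecture (induction, additivity of $\Sigma_\alpha$ under concatenation, and the mod-$4$ rotation argument for closure) matches the paper, and your closedness computation for (ii)--(iii) is essentially the paper's: there the antipalindromicity of $p$ in $q_{3n}^{[i]}=p\overline{\sigma_{n-1}}$ gives $|p|_1=|p|_3$, hence letter-sum $\equiv 0 \pmod 4$, so the four copies are successive $\pi/2$-rotations and the path closes. But there are two genuine gaps. First, in part (i) you split $q_n^{[i]}$ into only \emph{two} pieces, $q_{n-1}^{[i]}$ followed by a rotated/reflected copy of $q_{n-2}^{[i]}$, and argue non-crossing only ``near the join'' via a half-plane claim extracted from Proposition \ref{pal}. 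That does not suffice: two simple arcs that separate locally at their common endpoint can still intersect far from it, and (anti)palindromicity of a word does not confine its path to a half-plane, so the containment you need is unsubstantiated. The paper (following \cite{BLO2,FIB}) instead unfolds one more level, writing $q_n^{[i]}$ as \emph{three} pieces --- two copies of $q_{n-2}^{[i]}$ and one of $q_{n-3}^{[i]}$, e.g.\ $\Sigma_0 q_{12}^{[5]}=\Sigma_0 q_{10}^{[5]}\cdot\Sigma_2 q_{9}^{[5]}\cdot\Sigma_3 q_{10}^{[5]}$ (Fig.~\ref{divide}) --- precisely because at that granularity the three sub-paths sit in pairwise disjoint bounding boxes, which is the global separation statement your local argument is missing. (The paper only sketches this too, deferring to the references, but the disjoint-box decomposition is the essential idea and it is absent from your proposal.)

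Second, your ``slicker'' route to simplicity of the four-fold path --- writing down a factorization $A\cdot B\cdot\widehat{A}\cdot\widehat{B}$ and ``invoking the Beauquier--Nivat criterion to conclude that the word bounds a polyomino'' --- is logically backwards. The theorem of \cite{NIV} characterizes which \emph{boundary words of polyominoes} yield tilings by translation; it does not assert that every closed word admitting such a factorization is non-crossing, so it cannot be used to establish that $\Sigma_\alpha^{\circ}(q_{3n}^{[i]})^4$ is a boundary word in the first place. Indeed the paper uses that factorization only afterwards, in Theorem \ref{poldobles}, once Proposition \ref{poliminofibo} has already secured that the path bounds a polyomino. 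Your alternative wedge/quadrant argument is the right kind of statement, but it again reduces to the global containment of one copy of $\Sigma_\alpha p$ in a region meeting its rotates only at corners, i.e.\ to the same bounding-region analysis you skipped in part (i).
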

\begin{proof} \begin{enumerate}[$i.$]
\item The proof is by induction on $n$. It is the similar to   \cite{BLO2} or \cite{FIB}, we only describe the basic ideas because the proof is rather technical. For $n=1, 2, 3$ it is clear. Assume for all $j$ such that $1\leqslant j < n$; we prove it for $n$. The idea is to divide the path $\Sigma_\alpha q_n^{\left[i\right]}$ into three smaller parts, for example
the path $\Sigma_0 q_{12}^{\left[5\right]}$ is divided into parts $\Sigma_0 q_{10}^{\left[5\right]}$, $\Sigma_2 q_{9}^{\left[5\right]}$ and $\Sigma_3 q_{10}^{\left[5\right]}$, (see Fig. \ref{divide}).

\begin{figure}[!h]
\centering
 \includegraphics[scale=0.4]{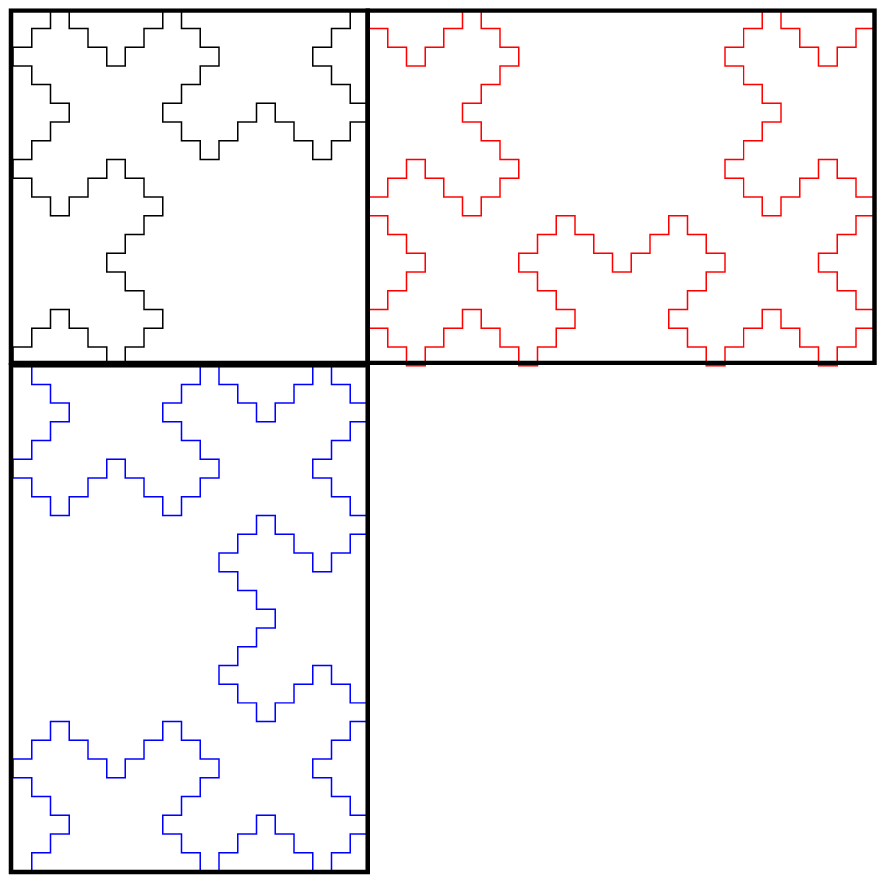}\\
  \caption{$\Sigma_0 q_{12}^{\left[5\right]}$ is divide into parts $\Sigma_0 q_{10}^{\left[5\right]}$, $\Sigma_2 q_{9}^{\left[5\right]}$ and $\Sigma_3 q_{10}^{\left[5\right]}$.}\label{divide}
\end{figure}

By the induction hypothesis $\Sigma_{\alpha_1} q_{n-2}^{\left[i\right]}$ and $\Sigma_{\alpha_2} q_{n-3}^{\left[i\right]}$ are simples, moreover, the three smaller paths are contained in disjoint boxes, then $\Sigma_{\alpha_1} q_{n-2}^{\left[i\right]}$ is simple.

\item If $i$ is even. From Proposition  \ref{pal}, we have $q_{3n}^{\left[i\right]}=p\overline{\sigma_{n-1}}$ for some  antipalindrome $p=w_1\cdots w_n $ and $\overline{\sigma_{n-1}}\in \left\{\texttt{1}, \texttt{3}\right\}$. If $\overline{\sigma_{n-1}}=\texttt{3}$, we can consider the reversal of the path, so suppose that  $\overline{\sigma_{n-1}}=\texttt{1}$. Hence  $\Sigma_\alpha^{\circ}(q_{3n}^{\left[i\right]})^4=\Sigma_{\alpha}(p\texttt{1} \cdot p\texttt{1} \cdot p\texttt{1} \cdot p)$, as $$\Sigma_{\alpha}p\texttt{1}=\alpha\cdot(\alpha + w_1)\cdot(\alpha + w_1 + w_2)\cdots(\alpha + w_1 + w_2 + \cdots w_n+1)$$ and $|p|_1=|p|_3$,  because $p$ is an antipalindrome, then
\begin{align*}
\alpha + w_1 + w_2 + \cdots w_n+1=\alpha + |p|_1+ 3|p|_3+1=\alpha+4|p|_1+1\cong \alpha +1 \mod 4.
\end{align*}
Therefore
\begin{align*}
\Sigma_\alpha^{\circ}(q_{3n}^{\left[i\right]})^4=\Sigma_{\alpha}(p\texttt{1} \cdot p\texttt{1} \cdot p\texttt{1} \cdot p)=\Sigma_\alpha p  \cdot \Sigma_{\alpha+1} p \cdot \Sigma_{\alpha+2} p  \cdot \Sigma_{\alpha+3}p.
\end{align*}
But, the initial segments in the paths  $\Sigma_\alpha p$ and $\Sigma_{\alpha+1} p$ are orthogonal because $\alpha$ and $\alpha+1$ represent orthogonal vectors. Hence  $\Sigma_\alpha p  \cdot \Sigma_{\alpha+1} p \cdot \Sigma_{\alpha+2} p  \cdot \Sigma_{\alpha+3}p$ is a closed polygonal path, illustrated in Fig. \ref{closed} with an angle of $\pi/2$  counterclockwise.
\begin{figure}[!h]
\centering
 \includegraphics[scale=0.5]{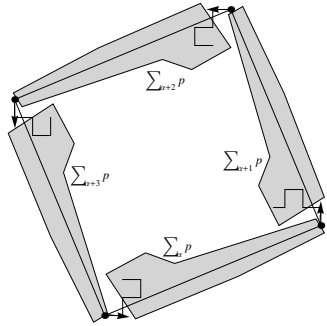}\\
  \caption{Case $ii$ with an angle of $\pi/2$.}\label{closed}
\end{figure}
\item If $i$ is odd, the proof is similar.
\end{enumerate}
\end{proof}

An \emph{$i$-generalized Fibonacci snowflake of order $n$} is a polyomino having $\Sigma_\alpha^{\circ}(q_{3n}^{\left[i\right]})^4$ or $\Sigma_\alpha^{\circ}(q_{3n+2}^{\left[i\right]})^4$  as a boundary word, we denote this as  $\prod_n^{\left[i\right]}$ . In Table \ref{graf4} we show first $i$-generalized Fibonacci snowflakes.
\begin{table}[h]
\scriptsize
\centering
\begin{tabular}{|c|c|c|c|} \hline
 $\prod_1^{\left[2\right]}$  &  $\prod_2^{\left[2\right]}$ & $\prod_3^{\left[2\right]}$ & $\prod_4^{\left[2\right]}$  \\
\includegraphics[scale=0.4]{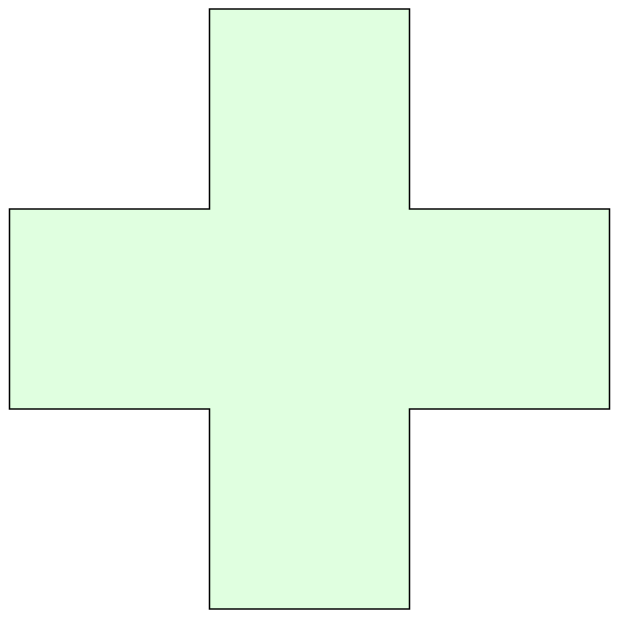} &
\includegraphics[scale=0.4]{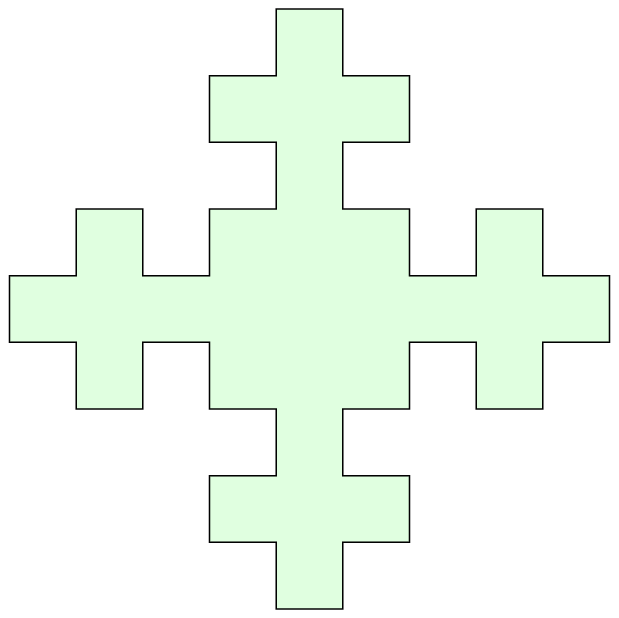} &
\includegraphics[scale=0.4]{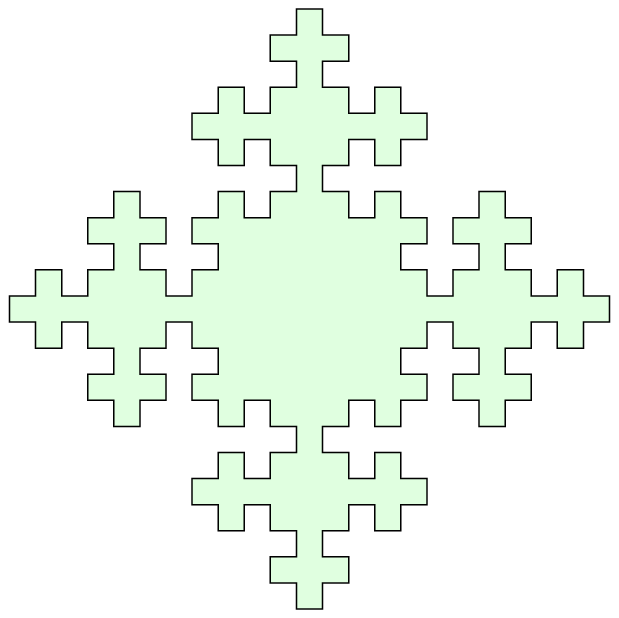} &
\includegraphics[scale=0.4]{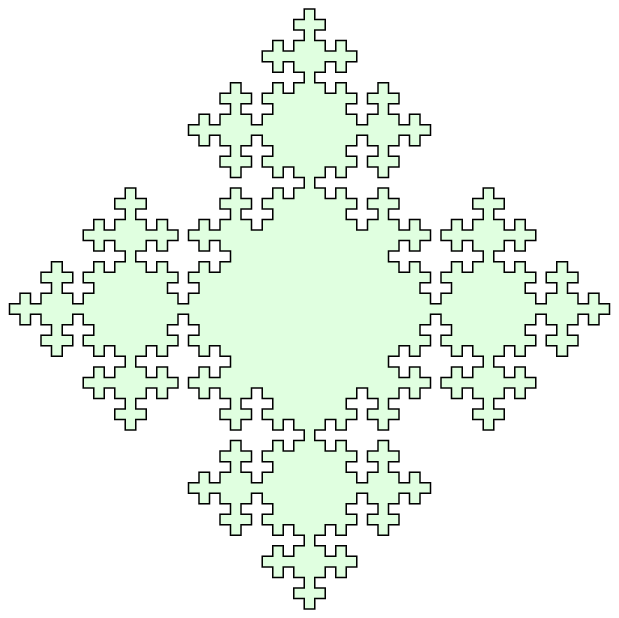} \\ \hline
 $\prod_1^{\left[3\right]}$  &  $\prod_2^{\left[3\right]}$ & $\prod_3^{\left[3\right]}$ & $\prod_4^{\left[3\right]}$  \\
\includegraphics[scale=0.4]{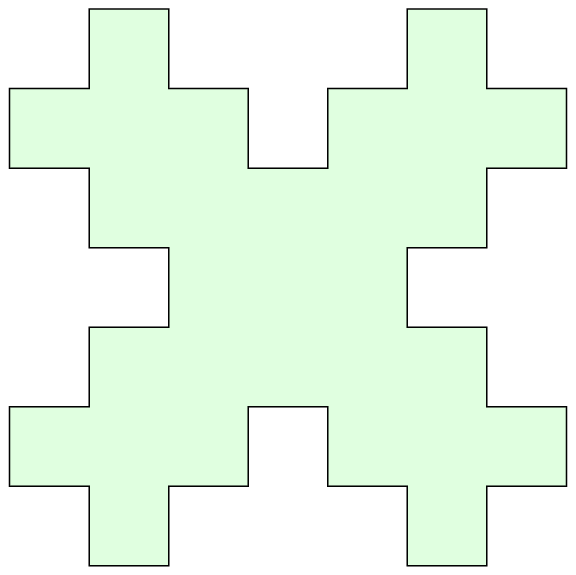} &
\includegraphics[scale=0.4]{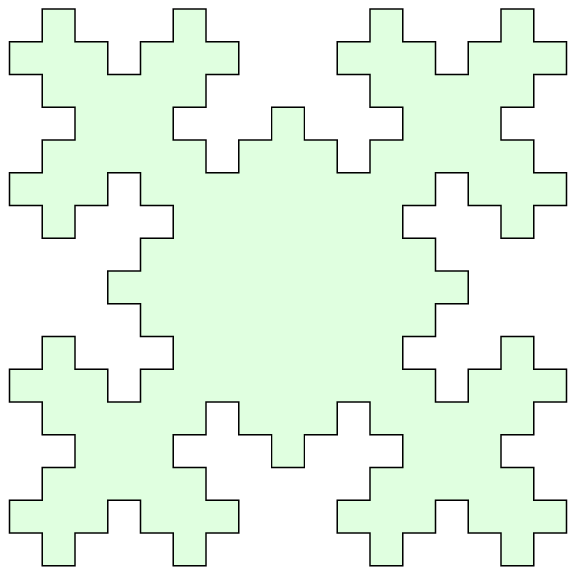} &
\includegraphics[scale=0.4]{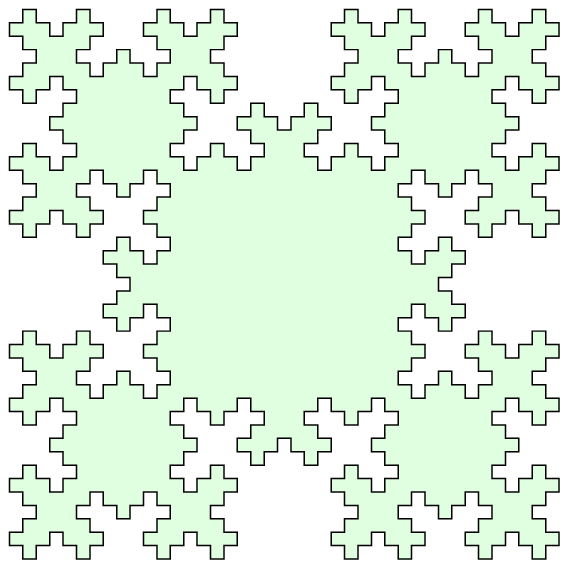} &
\includegraphics[scale=0.4]{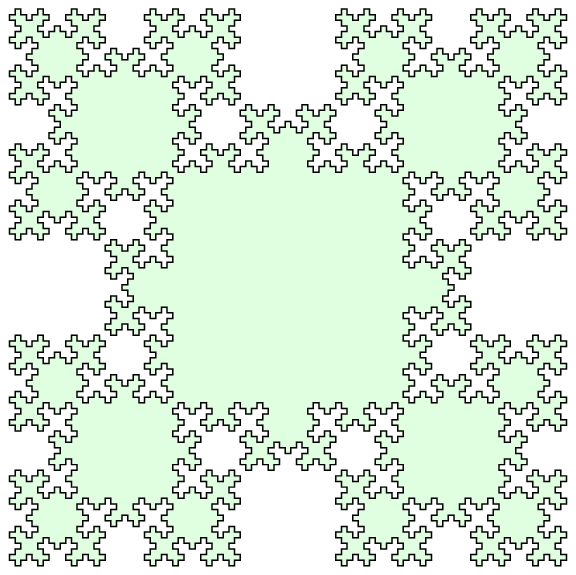} \\ \hline
 $\prod_1^{\left[4\right]}$  &  $\prod_2^{\left[4\right]}$ & $\prod_3^{\left[4\right]}$ & $\prod_4^{\left[4\right]}$  \\
\includegraphics[scale=0.4]{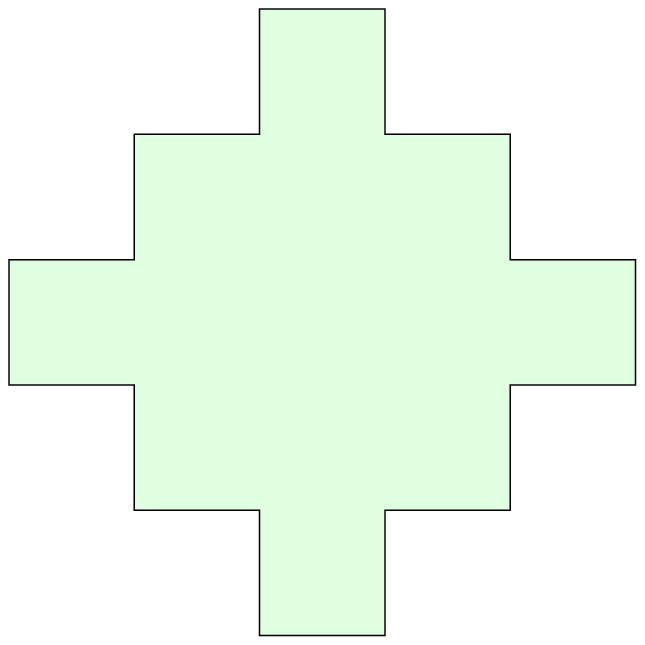} &
\includegraphics[scale=0.4]{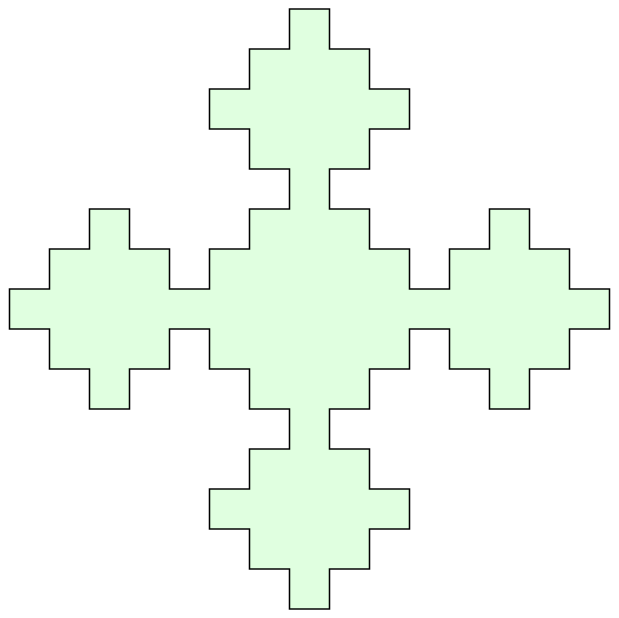} &
\includegraphics[scale=0.4]{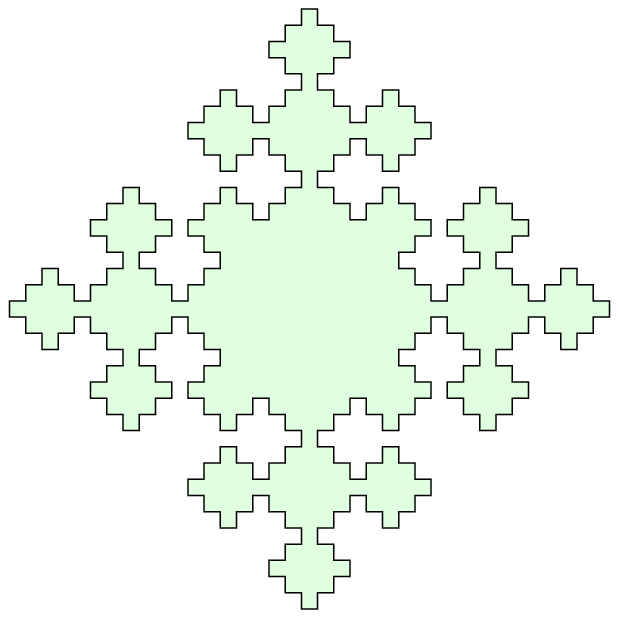} &
\includegraphics[scale=0.4]{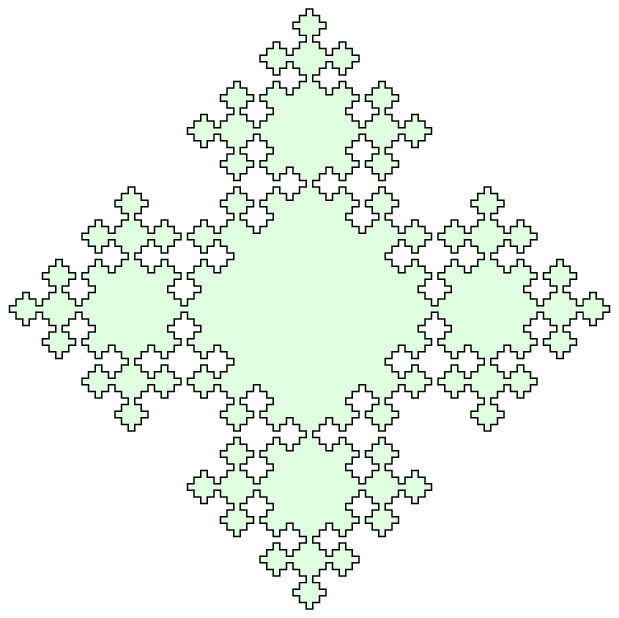} \\ \hline
 $\prod_1^{\left[5\right]}$  &  $\prod_2^{\left[5\right]}$ & $\prod_3^{\left[5\right]}$ & $\prod_4^{\left[5\right]}$  \\
\includegraphics[scale=0.4]{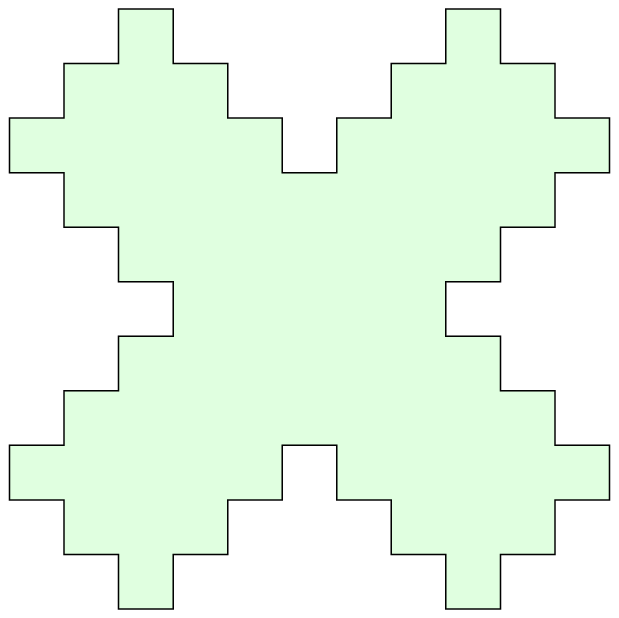} &
\includegraphics[scale=0.4]{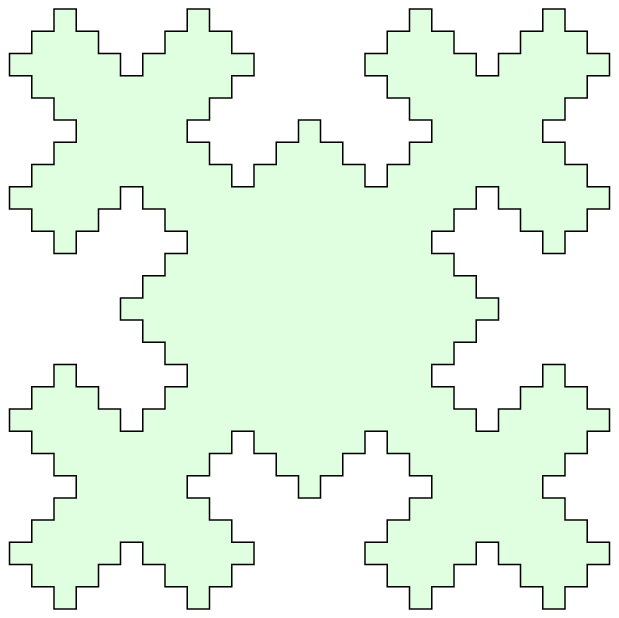} &
\includegraphics[scale=0.4]{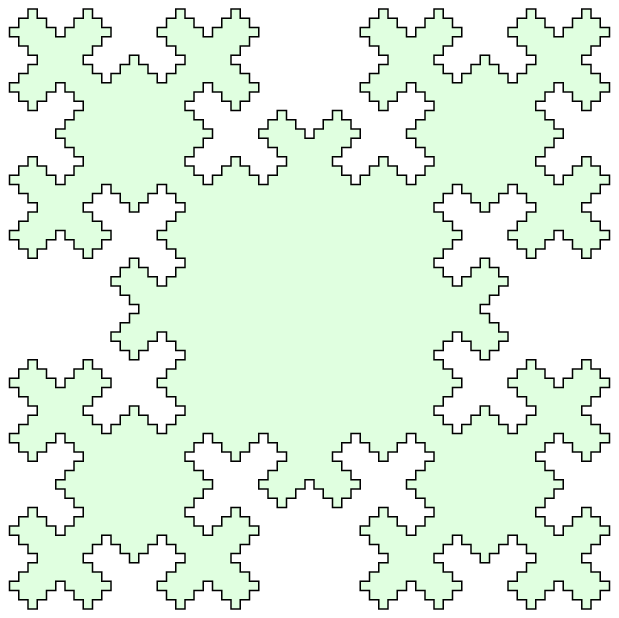} &
\includegraphics[scale=0.4]{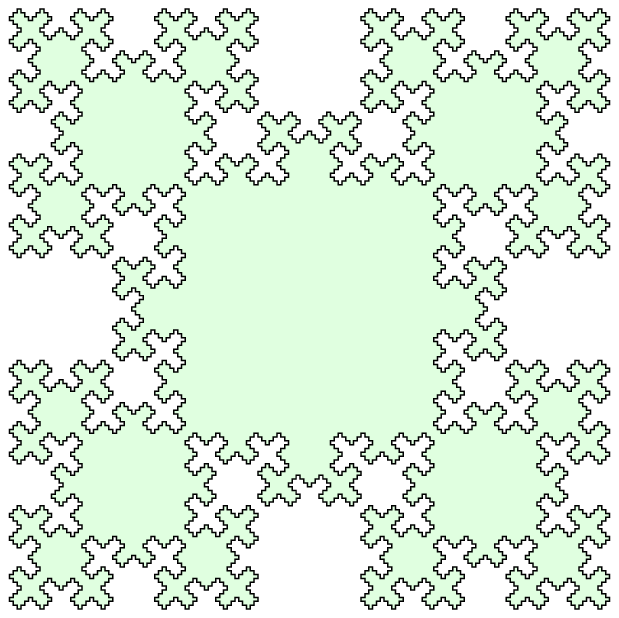} \\ \hline
 $\prod_1^{\left[6\right]}$  &  $\prod_2^{\left[6\right]}$ & $\prod_3^{\left[6\right]}$ & $\prod_4^{\left[6\right]}$  \\
\includegraphics[scale=0.4]{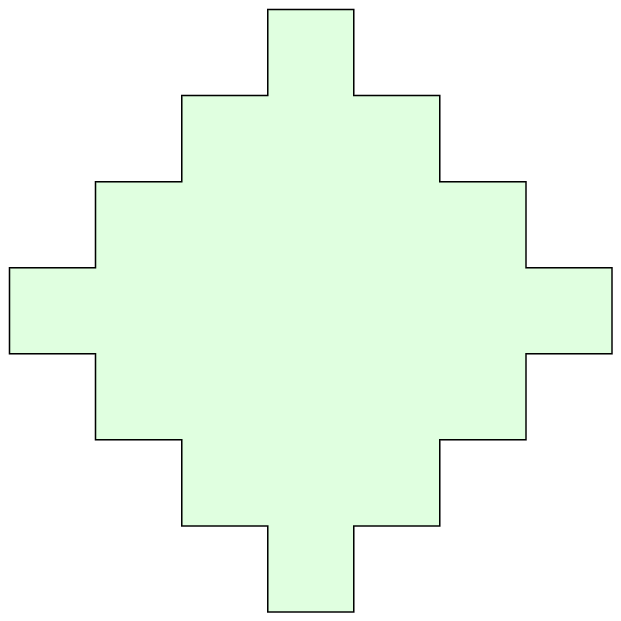} &
\includegraphics[scale=0.4]{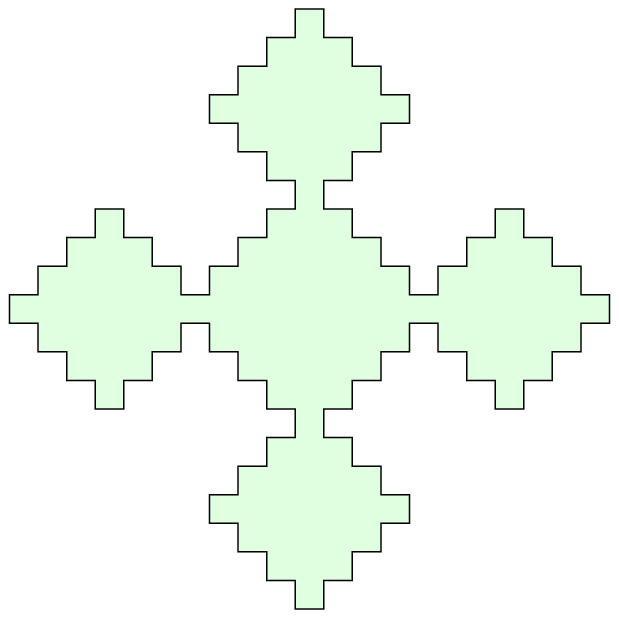} &
\includegraphics[scale=0.4]{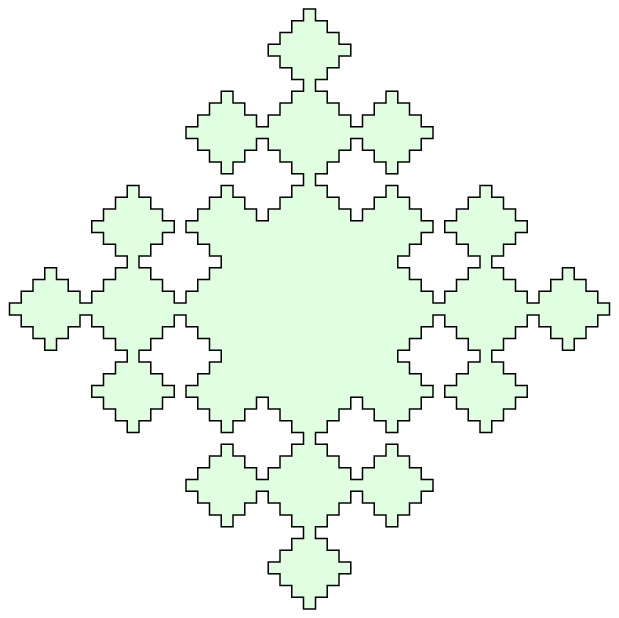} &
\includegraphics[scale=0.4]{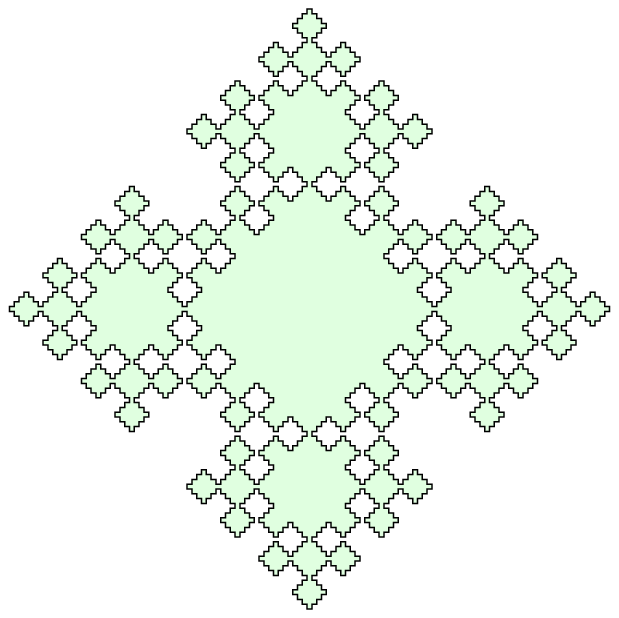} \\ \hline
\end{tabular}
\caption{The $i$-Generalized Fibonacci Snowflakes
  $\prod_n^{\left[i\right]}$ for $i=2, 3, 4, 5, 6$ and $n=1, 2, 3, 4$.}
\label{graf4}
\end{table}

\begin{theorem}\label{poldobles}
 The $i$-generalized Fibonacci snowflake  of order $n\geq 1$ is a double square, for all positive integers  $i$.
\end{theorem}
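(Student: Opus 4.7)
The strategy is to exhibit two distinct BN-factorizations of the boundary word of $\prod_n^{\left[i\right]}$; combined with the bound from \cite{BLO} that any exact polyomino admits at most two such factorizations, this will establish the double-square property. Let $q^{\ast}$ denote the quarter word ($q_{3n}^{\left[i\right]}$ for $i$ even, $q_{3n+2}^{\left[i\right]}$ for $i$ odd). By Proposition~\ref{pal}, we may write $q^{\ast}=p\cdot c$ where $c$ is a single letter and $p$ is an antipalindrome. The computation inside the proof of Proposition~\ref{poliminofibo} then identifies the boundary (up to cyclic shift) with
\[
\Sigma_\alpha p\cdot\Sigma_{\alpha+1}p\cdot\Sigma_{\alpha+2}p\cdot\Sigma_{\alpha+3}p.
\]
Setting $A:=\Sigma_\alpha p$ and $B:=\Sigma_{\alpha+1}p$, the first factorization reduces to the identity $\widehat{\Sigma_\beta p}=\Sigma_{\beta+2}p$ for every $\beta\in\mathcal A$. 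Expanding both sides letter by letter, this in turn reduces to
\[
p_k+p_{k+1}+\cdots+p_{|p|+1-k}\equiv 0\pmod 4\quad\text{for all }k.
\]
The antipalindromic identity $\overline{p_j}=p_{|p|+1-j}$, together with the explicit table for $\overline{\,\cdot\,}$, forces $p_j+p_{|p|+1-j}\equiv 0\pmod 4$ for every $j$, so the above sums break into symmetric pairs each summing to $0\bmod 4$. A quick Pisano-period check on $F_{3n-1}^{\left[i\right]}$ (resp.\ $F_{3n+1}^{\left[i\right]}$) confirms that $|p|$ is even, so there is no unpaired central letter. This produces the first BN-factorization $A\cdot B\cdot\widehat A\cdot\widehat B$ and shows $\prod_n^{\left[i\right]}$ is a square polyomino.

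For the second BN-factorization, I would unfold the recursion of Proposition~\ref{pal} one further level: the antipalindrome $p$ itself is built from the palindromic blocks $r,m$ and a shorter antipalindrome $p_0$, assembled as $p=m\,\overline{\sigma\,r\,\sigma}\,m\,\overline{\sigma\,p_0}\,\sigma\,\overline m\,\sigma\,r\,\sigma\,\overline m$ (for the appropriate letter $\sigma$). Following the template used in \cite{BLO2} for the classical $i=2$ case, I would then cyclically rotate the boundary so that its basepoint sits at the geometric centre of the central palindromic block $m\,\overline{\sigma\,r\,\sigma}\,m$, and regroup the letters into new prefixes $X$ and $Y$ whose lengths are dictated by this centre rather than by the quarter-length $|A|$. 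The palindromicity of $r$ and $m$, combined with the antipalindromicity of the outer pieces, should yield a second factorization of the rotated boundary as $X\cdot Y\cdot\widehat X\cdot\widehat Y$, verified by the same mod-$4$ bookkeeping as above. Since $|X|$ is determined by the internal palindromic centre and is not a multiple of $|A|$ modulo the half-perimeter, this factorization is not a cyclic rotation of the first, so the two are genuinely distinct.

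The first factorization is a clean modular-arithmetic exercise once one notices the antipalindrome-plus-Pisano-parity combination. The real obstacle is the second factorization: one must pinpoint the correct cyclic shift and the correct block split $(X,Y)$ using the nested palindrome/antipalindrome hierarchy recorded in Proposition~\ref{pal}, and then prove by induction on $n$ (mirroring the induction of that proposition, with $r$ and $m$ propagating as the building blocks of $X$ and $Y$ at the next level) that the regrouping really produces $X\cdot Y\cdot\widehat X\cdot\widehat Y$. Once both factorizations are in hand, the theorem of \cite{BLO} guarantees there are no others, so $\prod_n^{\left[i\right]}$ is a double square.
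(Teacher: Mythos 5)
Your first factorization is exactly the paper's: from Proposition \ref{poliminofibo} the boundary is $\Sigma_\alpha p\cdot\Sigma_{\alpha+1}p\cdot\Sigma_{\alpha+2}p\cdot\Sigma_{\alpha+3}p$, and the antipalindromicity of $p$ gives $\Sigma_{\alpha+2}p=\widehat{\Sigma_{\alpha}p}$, hence $A\cdot B\cdot\widehat{A}\cdot\widehat{B}$ with $A=\Sigma_\alpha p$ and $B=\Sigma_{\alpha+1}p$. (Your ``Pisano-period check'' is unnecessary: $p$ is an antipalindrome over $\{\texttt{1},\texttt{3}\}$ and $\overline{\texttt{1}}=\texttt{3}\neq\texttt{1}$, so $p$ cannot have a fixed central letter and $|p|$ is automatically even.)

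The genuine gap is the second factorization, which you leave as a plan (``I would unfold\dots'', ``should yield\dots'') rather than a proof, and whose proposed mechanism does not match what actually works. The paper neither rotates to the centre of the inner palindromic block nor runs a new induction. It observes that, by the defining recurrence and Proposition \ref{pal}, $q_{3n}^{\left[i\right]}=q_{3n-1}^{\left[i\right]}\overline{q_{3n-2}^{\left[i\right]}}=m\texttt{1}\,\overline{r}\,\texttt{1}$ with $m,r$ palindromes, so $p=m\texttt{1}\overline{r}$, and the cyclic conjugate $q_{3n}^{\prime\left[i\right]}:=\overline{q_{3n-2}^{\left[i\right]}}\,q_{3n-1}^{\left[i\right]}=\overline{r}\texttt{1}m\texttt{1}=p^{R}\texttt{1}=\overline{p}\,\texttt{1}$ is again of the form (antipalindrome)$\cdot\texttt{1}$. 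The first argument then applies verbatim to $\overline{p}$ and yields $\Sigma_\alpha\overline{p}\cdot\Sigma_{\alpha+1}\overline{p}\cdot\widehat{\Sigma_{\alpha}\overline{p}}\cdot\widehat{\Sigma_{\alpha+1}\overline{p}}$, a square factorization whose cut points are shifted by $|q_{3n-1}^{\left[i\right]}|=F_{3n-2}^{\left[i\right]}$ along the boundary --- not located at the centre of $m\,\overline{\sigma\,r\,\sigma}\,m$ as you guess. Since $0<F_{3n-2}^{\left[i\right]}<F_{3n-1}^{\left[i\right]}=|A|$, the two sets of cut points differ and the factorizations are distinct. Without this (or an equivalent) explicit identification of the second set of cutting points, your argument establishes only that $\prod_n^{\left[i\right]}$ is a square polyomino, not that it is a double square.
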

\begin{proof}
Suppose that  $i$ even. We show in Proposition    \ref{poliminofibo}-$ii$ that
\begin{align*}
\Sigma_\alpha^{\circ}(q_{3n}^{\left[i\right]})^4=\Sigma_{\alpha}(p\texttt{1} \cdot p\texttt{1} \cdot p\texttt{1} \cdot p)=\Sigma_\alpha p  \cdot \Sigma_{\alpha+1} p \cdot \Sigma_{\alpha+2} p  \cdot \Sigma_{\alpha+3}p.
\end{align*}
Moreover $w_j=-w_{n-(j-1)}$, for all $j$ with $1\leqslant j \leqslant n$, because $p$ is an antipalindrome. Then
\begin{align*}\Sigma_{\alpha+2}p&=(\alpha +2)(\alpha +2 + w_1) \cdots (\alpha +2 + w_1 + w_2 + \cdots w_n)\\
&= (\alpha +2 + w_1 + w_2 + \cdots w_n)(\alpha +2 + w_1 + w_2 + \cdots w_{n-1}) \cdots (\alpha + 2)\\
&= \widehat{\Sigma_{\alpha} p}.
\end{align*}
Hence
\begin{align*}
\Sigma_\alpha^{\circ}(q_{3n}^{\left[i\right]})^4=\Sigma_\alpha p  \cdot \Sigma_{\alpha+1} p \cdot \Sigma_{\alpha+2} p  \cdot \Sigma_{\alpha+3}p=\Sigma_\alpha p  \cdot \Sigma_{\alpha+1} p \cdot \widehat{\Sigma_{\alpha} p}  \cdot \widehat{\Sigma_{\alpha+1}p}.
\end{align*}

By the other hand, the word  $q_{3n}^{' \left[i\right]}=\overline{q_{3n-2}^{\left[i\right]} } q_{3n-1}^{\left[i\right]}$ corresponds to another boundary word of the same title. In fact, by  Proposition \ref{pal}, we have $q_{3n-1}^{\left[i\right]}=m\texttt{1}$ and $q_{3n-2}^{\left[i\right]}=r\texttt{3}$, for some palindromes $m$ and $r$. Hence $p\texttt{1}=q_{3n}^{\left[i\right]}=q_{3n-1}^{\left[i\right]}\overline{q_{3n-2}^{\left[i\right]}}=m\texttt{1}\overline{r}\texttt{1}$, so that $p=m\texttt{1}\overline{r}$.

 Therefore
$$ q_{3n}^{' \left[i\right]}=\overline{q_{3n-2}^{\left[i\right]} } q_{3n-1}^{\left[i\right]} = \overline{r}\texttt{1}m\texttt{1}=p^R\texttt{1}=\overline{p}\texttt{1}$$
and $\Sigma_\alpha^{\circ}(q_{3n}^{'  \left[i\right]})^4=\Sigma_\alpha( \overline{p} \texttt{1} \cdot\overline{p} \texttt{1} \cdot p^R \texttt{1}  \cdot p^R)=\Sigma_\alpha \overline{p}  \cdot \Sigma_{\alpha+1} \overline{p} \cdot \widehat{\Sigma_{\alpha} \overline{p}}  \cdot \widehat{\Sigma_{\alpha+1}\overline{p}}.$  \qedhere
\end{proof}

\textbf{Remark}.  Note that if $A\cdot B \cdot \widehat{A}\cdot \widehat{B}$ is a BN-factorization of an $i$-generalized Fibonacci snowflake, then $A$ and $B$ are palindromes, because $p$ is an antipalindrome then  $ \Sigma_\alpha p$ and $ \Sigma_\alpha \overline{p}$ are palindromes.

\begin{example}
In Table \ref{tesse}, we show  tessellations of  $\prod_2^{\left[3\right]}$ and  $\prod_3^{\left[6\right]}$.
\end{example}

\begin{table}[h]
\centering
\begin{tabular}{cc}
$\prod_2^{\left[3\right]}$  & $\prod_3^{\left[6\right]}$  \\
 \includegraphics[scale=0.6]{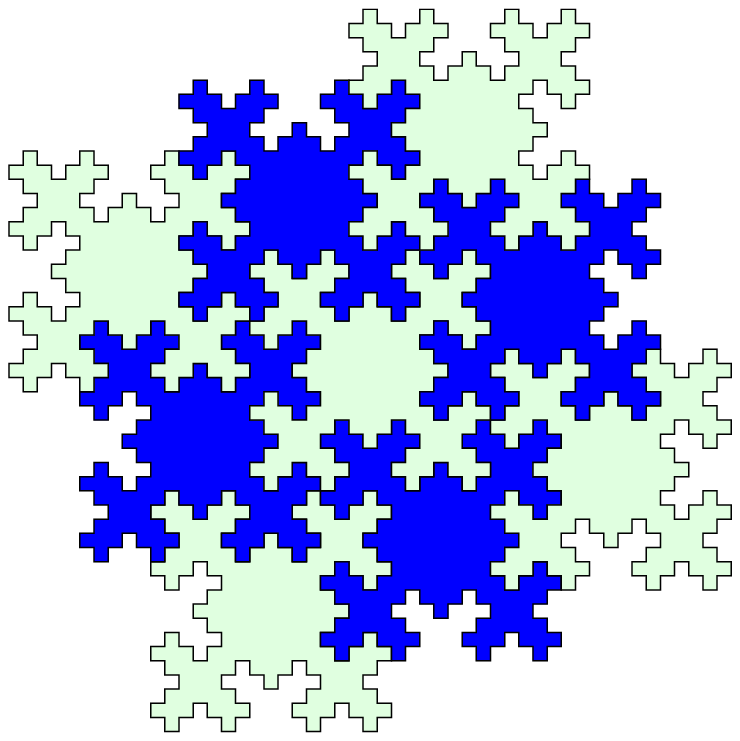} &
    \includegraphics[scale=0.6]{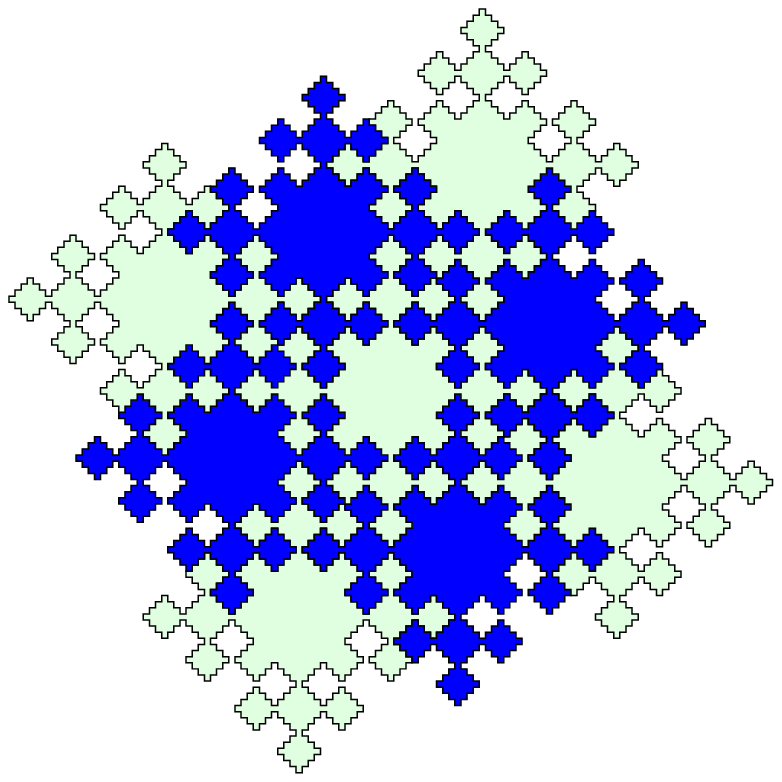}  \\
\end{tabular}
\caption{Tessellations of  $\prod_2^{\left[3\right]}$ and  $\prod_3^{\left[6\right]}$.}
\label{tesse}
\end{table}

\subsection{Some Geometric Properties}

\begin{definition}
The number $P^{\left[i\right]}(n)$ is defined recursively by $P^{\left[i\right]}(0)=-i$, $P^{\left[i\right]}(1)=i+1$ and $P^{\left[i\right]}(n)=2P^{\left[i\right]}(n-1) + P^{\left[i\right]}(n-2)$ for all $n\geq 2$ and $i\geq 0$.
\end{definition}
For $i=0$ we have Pell numbers. In Table \ref{numpell} are the first numbers  $P^{\left[i\right]}(n)$.
\begin{table}[H]
\centering
\begin{tabular}{|c|ll|}  \hline
  $i$ & \multicolumn{2}{c|}{  $P^{\left[i\right]}(n)$} \\ \hline
  0 & $\left\{0, 1, 2, 5, 12, 29, 70, 169, 408, 985, 2378,...\right\}$, &  (A000129). \\ \hline
  1 & $\left\{-1, 2, 3, 8, 19, 46, 111, 268, 647, 1562, 3771,...\right\}$, & (A078343). \\ \hline
  2 & $\left\{-2, 3, 4, 11, 26, 63, 152, 367, 886, 2139, 5164,...\right\}$. & \\ \hline
  3 & $\left\{-3, 4, 5, 14, 33, 80, 193, 466, 1125, 2716, 6557,...\right\}$. & \\ \hline
  4 & $\left\{-4, 5, 6, 17, 40, 97, 234, 565, 1364, 3293, 7950,...\right\}$. &\\ \hline
  5 &  $\left\{-5, 6, 7, 20, 47, 114, 275, 664, 1603, 3870, 9343,...\right\}$. & \\  \hline
\end{tabular}
\caption{First numbers $P^{\left[i\right]}(n)$.}\label{numpell}
\end{table}

\begin{proposition}
A formula for the  $P^{\left[i\right]}(n)$  numbers is
\begin{align*}
P^{\left[i\right]}(n)=\frac{1}{4}\left(\left(1+\sqrt2\right)^{n}(\sqrt{2}-(2-2\sqrt{2})i) - \left(1-\sqrt2\right)^{n}(\sqrt{2}+(2+2\sqrt{2})i) \right).
\end{align*}
\end{proposition}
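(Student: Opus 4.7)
The plan is to follow the same induction template used earlier for the closed form of $F_n^{[i]}$, since the recurrence $P^{[i]}(n)=2P^{[i]}(n-1)+P^{[i]}(n-2)$ is linear of order two with constant coefficients. The characteristic polynomial $x^2-2x-1=0$ has roots $\psi_1=1+\sqrt{2}$ and $\psi_2=1-\sqrt{2}$, which are exactly the bases appearing in the claimed formula, so the general solution is $A\psi_1^n+B\psi_2^n$ and the entire content of the theorem is just the identification of the constants $A$ and $B$ forced by the initial conditions $P^{[i]}(0)=-i$ and $P^{[i]}(1)=i+1$.

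Concretely, I would first check the two base cases $n=0,1$ by direct substitution: for $n=0$ the $(1+\sqrt{2})^n$ and $(1-\sqrt{2})^n$ terms collapse to $1$, and one reads off $\frac{1}{4}(\sqrt{2}-(2-2\sqrt{2})i-\sqrt{2}-(2+2\sqrt{2})i)=\frac{1}{4}(-4i)=-i$; for $n=1$ one multiplies out $(1+\sqrt{2})(\sqrt{2}-(2-2\sqrt{2})i)-(1-\sqrt{2})(\sqrt{2}+(2+2\sqrt{2})i)$ and simplifies to $4(i+1)$, giving $P^{[i]}(1)=i+1$. Then, assuming the formula for $n$ and $n-1$, I would compute $2P^{[i]}(n)+P^{[i]}(n-1)$ and use the identities
\begin{align*}
2\psi_1^{n}+\psi_1^{n-1}&=\psi_1^{n-1}(2\psi_1+1)=\psi_1^{n-1}\psi_1^{2}=\psi_1^{n+1},\\
2\psi_2^{n}+\psi_2^{n-1}&=\psi_2^{n-1}(2\psi_2+1)=\psi_2^{n-1}\psi_2^{2}=\psi_2^{n+1},
\end{align*}
which follow from $\psi_j^2=2\psi_j+1$, to conclude the expression for $P^{[i]}(n+1)$ has exactly the same shape with $n+1$ in place of $n$. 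This is the step that mirrors the $\phi_1^{n}+\phi_1^{n-1}=\phi_1^{n+1}$ computation carried out earlier in the paper for $F_n^{[i]}$.

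There is essentially no obstacle: the only place where a slip could occur is in the base case $n=1$ arithmetic, where the coefficients of $\sqrt{2}$ and of the integer part must each vanish from the unwanted terms and reassemble into $4(i+1)$. I would carry that verification out carefully, but once it is done, the inductive step is automatic from the characteristic identity $\psi_j^2=2\psi_j+1$, because the coefficients multiplying $\psi_1^{n+1}$ and $\psi_2^{n+1}$ are never touched by the recurrence.
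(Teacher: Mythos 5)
Your proof is correct and follows exactly the route the paper intends: the paper's own proof is just the one-line ``By induction on $n$,'' implicitly referring back to the detailed induction given for $F_n^{\left[i\right]}$, whose template you reproduce faithfully with the roots $1\pm\sqrt{2}$ of $x^2-2x-1$ and the identity $\psi_j^2=2\psi_j+1$ in place of $\phi_j^2=\phi_j+1$. Your base-case computations for $n=0,1$ check out, so there is nothing to add.
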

\begin{proof}
By induction on $n$.
\end{proof}
Let $\alpha\in \mathcal{A}$, we denote by $\stackrel{\rightarrow}{\Sigma_\alpha}q$ the coordinates of the vector whose initial point is the origin  and the terminal point  is the last point of the path  $\Sigma^{\circ}_{\alpha}(q)$. In  the next proposition, we show that the coordinates of the vector $\stackrel{\rightarrow}{\Sigma}_{0}(q_n^{\left[i\right]})$ are expressed in terms of the numbers $P^{\left[i\right]}(n)$. A similar thing happens when $\alpha=1, 2, 3$.

\begin{proposition}\label{coord}
For all $n\in \mathbb{N}$, we have that if $i$ is even then
\begin{align*}
\stackrel{\rightarrow}{\Sigma_0}q_{3n+1}^{\left[i\right]}&=\left(P^{\left[k\right]}(n+1) + P^{\left[k\right]}(n), 0\right), \\
\stackrel{\rightarrow}{\Sigma_0}q_{3n+2}^{\left[i\right]}&=\left(P^{\left[k\right]}(n+1), (-1)^nP^{\left[k\right]}(n+1)\right), \\
\stackrel{\rightarrow}{\Sigma_0}q_{3n+3}^{\left[i\right]}&=\left(P^{\left[k\right]}(n+2), (-1)^{n}P^{\left[k\right]}(n+1)\right),
\end{align*}
where $k=\frac{i-2}{2}$. If $i$ is odd then
\begin{align*}
\stackrel{\rightarrow}{\Sigma_0}q_{3n+1}^{\left[i\right]}&=\begin{cases}
(P^{\left[k\right]}(n+1)+ P^{\left[k\right]}(n), 0), & \text{if $n$ is even},\\
(0, P^{\left[k\right]}(n+1)+ P^{\left[k\right]}(n)), & \text{if $n$ is odd},\\
\end{cases}\\
\stackrel{\rightarrow}{\Sigma_0}q_{3n+2}^{\left[i\right]}&=
\begin{cases}
(P^{\left[k\right]}(n+2), P^{\left[k\right]}(n+1)), & \text{if $n$ is even},\\
(P^{\left[k\right]}(n+1), P^{\left[k\right]}(n+2)), & \text{if $n$ is odd},\\
\end{cases}\\
\stackrel{\rightarrow}{\Sigma_0}q_{3n+3}^{\left[i\right]}&=\left(P^{\left[k\right]}(n+2) , P^{\left[k\right]}(n+2)\right),
\end{align*}
where $k=\frac{i-3}{2}$.
\end{proposition}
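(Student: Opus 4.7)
The proof goes by induction on $n$, with the cases $i$ even and $i$ odd treated in parallel and, in the odd case, split further according to the parity of $n$. Two elementary identities about $\Sigma_\alpha^{\circ}$ drive the entire computation. First, a concatenation rule: directly from the definition,
\begin{equation*}
\stackrel{\rightarrow}{\Sigma_\alpha}(uv) \;=\; \stackrel{\rightarrow}{\Sigma_\alpha}(u)+\stackrel{\rightarrow}{\Sigma_{\alpha+S(u)}}(v),\qquad S(u):=\sum_{j}u_j\pmod 4.
\end{equation*}
Second, a reflection rule: since $\overline{a}\equiv -a\pmod 4$, the path $\Sigma_\alpha^{\circ}(\overline{w})$ is the reflection of $\Sigma_\alpha^{\circ}(w)$ across the $x$-axis when $\alpha\in\{0,2\}$ and across the $y$-axis when $\alpha\in\{1,3\}$. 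The palindromic/antipalindromic structure given by Proposition~\ref{pal} provides, in each of the three residue classes mod $3$, the exact value of $S(q_n^{[i]})\pmod 4$, which is what makes the direction-offset bookkeeping tractable (for antipalindromes $|w|_1=|w|_3$ forces $S(w)\equiv 0\pmod 4$, and palindromes contribute a predictable last-letter correction).

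The base case $n=0$ is a direct computation from $q_1^{[i]}=\texttt{1}$, $q_2^{[i]}=(\texttt{13})^{i/2}$ (respectively $(\texttt{13})^{(i-1)/2}\texttt{1}$ for $i$ odd), and $q_3^{[i]}$ obtained from the defining recursion; the right-hand sides reduce to $P^{[k]}(0)=-k$, $P^{[k]}(1)=k+1$, $P^{[k]}(2)=k+2$. For the inductive step, expand each of the three target words at level $n+1$ using
\begin{align*}
q_{3m+1}^{[i]}=q_{3m}^{[i]}\,q_{3m-1}^{[i]},\qquad q_{3m+2}^{[i]}=q_{3m+1}^{[i]}\,\overline{q_{3m}^{[i]}},\qquad q_{3m+3}^{[i]}=q_{3m+2}^{[i]}\,\overline{q_{3m+1}^{[i]}}
\end{align*}
(the $i$-odd version is the obvious shift). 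Substituting the recursion twice gives a decomposition of $q_{3(n+1)+j}^{[i]}$ into concatenations and bars of the three words $q_{3n+j}^{[i]}$ ($j=1,2,3$). Applying the concatenation rule (with offsets read off from $S(q_{3n+j}^{[i]})$) and the reflection rule (with axis determined by the offset at the moment of reflection), and then inserting the inductive hypothesis, turns each endpoint into an integer linear combination of $P^{[k]}(n+1)$ and $P^{[k]}(n+2)$. Collecting terms, the identity to be verified is precisely the Pell recurrence $P^{[k]}(n+3)=2P^{[k]}(n+2)+P^{[k]}(n+1)$; the coefficient $2$ arises because $q_{3(n+1)+3}^{[i]}$ ultimately contains two parallel copies of $q_{3n+2}^{[i]}$ after the double application of the recursion.

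The principal difficulty is not conceptual but combinatorial: one must track the direction offset $\alpha+S(u)\pmod 4$ through a chain of concatenations and keep the axis of each bar-reflection consistent, while the sign $(-1)^n$ in the $y$-coordinate of $\stackrel{\rightarrow}{\Sigma_0}q_{3n+2}^{[i]}$ must be threaded carefully through the reflections. Proposition~\ref{pal} is the key input here: by pinning down the palindrome/antipalindrome type of $q_n^{[i]}$ and its last letter, it makes every offset and every reflection axis computable in closed form, so that the inductive step collapses to the Pell recurrence in each of the three cases (even $i$; odd $i$ with $n$ even; odd $i$ with $n$ odd).
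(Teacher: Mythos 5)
Your proof is correct and follows essentially the same route as the paper's: induction on $n$, decomposing $q_{3(n+1)+j}^{[i]}$ via the defining recursion, passing to displacement vectors, and reducing the inductive step to the Pell recurrence $P^{[k]}(n+3)=2P^{[k]}(n+2)+P^{[k]}(n+1)$. You are in fact more explicit than the paper about tracking the direction offset $\alpha+S(u)$ and the axis of each bar-reflection, details the paper dispatches with the single remark that $\overline{a}$ leaves the horizontal direction unchanged and that the other cases are similar.
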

\begin{proof}
By induction on $n$. If $i$ is even. For $n=0$ it is clear. Assume for all $j$ such that  $0\leq j \leq 3n +5$; we prove it for $3n+6$. Then passing to vectors we have
\begin{align*}
\stackrel{\rightarrow}{\Sigma_0}q_{3n+6}^{\left[i\right]}&=\stackrel{\rightarrow}{\Sigma_0}q_{3n+5}^{\left[i\right]} + \stackrel{\rightarrow}{\Sigma_0}\overline{q_{3n+4}^{\left[i\right]}}\\
&=\left(P^{\left[k\right]}(n+2), (-1)^{n+1}P^{\left[k\right]}(n+2)\right) + \overline{\left(P^{\left[k\right]}(n+2) + P^{\left[k\right]}(n+1), 0\right)}\\
&=\left(P^{\left[k\right]}(n+2), (-1)^{n+1}P^{\left[k\right]}(n+2)\right) + \left(P^{\left[k\right]}(n+2) + P^{\left[k\right]}(n+1), 0\right)\\
&=\left(2P^{\left[k\right]}(n+2)+ P^{\left[k\right]}(n+1), (-1)^{n+1}P^{\left[k\right]}(n+2)\right)\\
&=\left(P^{\left[k\right]}(n+3), (-1)^{n+1}P^{\left[k\right]}(n+2)\right)
\end{align*}
where $\stackrel{\rightarrow}{\Sigma_0}\overline{q_{n}^{\left[i\right]}}=(\overline{A}, \overline{B})$ is the  coordinate the last point of the path  $\Sigma^{\circ}_{\alpha}(\overline{q}_n)$. In this case \linebreak  $\stackrel{\rightarrow}{\Sigma_0}\overline{q_{3n+4}^{\left[i\right]}}=\stackrel{\rightarrow}{\Sigma_0}q_{3n+4}^{\left[i\right]}$, because $\overline{a}$ leaves the horizontal direction unchanged. The other cases are similar.
\end{proof}

\begin{example}
Table \ref{coordino} are the endpoints coordinates of the paths   $\Sigma^{\circ}_{0}(q_n^{\left[4\right]})$ and Fig. \ref{coordino1} shows the coordinates.

\begin{table}[h]
\begin{center}
\begin{tabular}{|c|c|c|c|c|c|} \hline
$n$ & 0 & 1 & 2 & 3 & 4   \\ \hline
$\Sigma^{\circ}_{0}(q_{3n+1}^{\left[4\right]})$ & (1, 0) & (5,0) & (11,0) & (27, 0) & (65, 0)  \\ \hline
$\Sigma^{\circ}_{0}(q_{3n+2}^{\left[4\right]})$ & (2, 2) & (3,-3) & (8,8) & (19, -19) & (46, 6)  \\ \hline
$\Sigma^{\circ}_{0}(q_{3n+3}^{\left[4\right]})$ & (3, 2) & (8,-3) & (19,8) & (46, -19) & (111, 46)  \\ \hline
\end{tabular}
\end{center}
\caption{Coordinates of the path $\Sigma^{\circ}_{0}(q_n^{\left[4\right]})$ . }
\label{coordino}
\end{table}

\begin{figure}[!h]
\centering
 \includegraphics[scale=0.55]{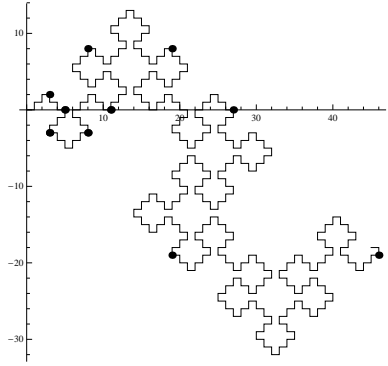}\\
  \caption{Graph with the coordinates of the path $\Sigma^{\circ}_{0}(q_n^{\left[4\right]})$.}\label{coordino1}
\end{figure}

\end{example}
The following proposition is clear because $|q_n^{\left[i\right]}|=F_{n-1}^{\left[i\right]}$.

\begin{proposition}\label{perimetro}
The perimeter $L(n, i)$  of the $i$-generalized Fibonacci snowflake of order $n$ is
\begin{align*} L(n, i)=
\begin{cases}
4F_{3n-1}^{\left[i\right]}, & \text{if $i$ is even}  \\
4F_{3n+1}^{\left[i\right]}, & \text{if $i$ is odd.}
\end{cases}
\end{align*}
\end{proposition}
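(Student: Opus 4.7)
The plan is to read the perimeter straight off the boundary description from Proposition \ref{poliminofibo} and then apply the length identity $|q_n^{[i]}| = F_{n-1}^{[i]}$. Because each letter of $\mathcal{A} = \{\texttt{0},\texttt{1},\texttt{2},\texttt{3}\}$ encodes a unit translation in $\mathbb{Z} \times \mathbb{Z}$, the perimeter of any polyomino is exactly the length (as a word over $\mathcal{A}$) of its boundary word, so the whole task is reduced to counting letters.

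First I would verify explicitly the identity $|q_n^{[i]}| = F_{n-1}^{[i]}$ that is asserted right after the definition of $q_n^{[i]}$. The base cases give $|q_1^{[i]}| = 1 = F_0^{[i]}$ and $|q_2^{[i]}| = i = F_1^{[i]}$ directly. For the inductive step, both branches of the recursion ($q_n^{[i]} = q_{n-1}^{[i]} q_{n-2}^{[i]}$ or $q_n^{[i]} = q_{n-1}^{[i]} \overline{q_{n-2}^{[i]}}$) yield $|q_n^{[i]}| = |q_{n-1}^{[i]}| + |q_{n-2}^{[i]}|$, since the morphism $\overline{\cdot}$ is a letter-to-letter map and therefore length-preserving. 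This matches the recurrence for $F_{n-1}^{[i]}$.

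Next I would invoke Proposition \ref{poliminofibo}: for $i$ even the boundary word of $\prod_n^{[i]}$ is $\Sigma_\alpha^{\circ}(q_{3n}^{[i]})^4$, while for $i$ odd it is $\Sigma_\alpha^{\circ}(q_{3n+2}^{[i]})^4$. From the definition of $\Sigma_\alpha^{\circ}$ one sees that $|\Sigma_\alpha^{\circ}(w)| = |w|$ for every word $w$, so taking the fourth power multiplies the length by $4$. Combining with the length identity gives perimeter $4|q_{3n}^{[i]}| = 4F_{3n-1}^{[i]}$ in the even case and $4|q_{3n+2}^{[i]}| = 4F_{3n+1}^{[i]}$ in the odd case, which is exactly the claim.

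I do not anticipate any genuine obstacle here; the result is a pure bookkeeping corollary, and all of the substantive work (showing that these paths really do bound polyominoes) was already carried out in Proposition \ref{poliminofibo}. The only minor point worth flagging is the observation that $\overline{\cdot}$ preserves length, which is what allows the length recurrence for $q_n^{[i]}$ to coincide with the recurrence for $F_{n-1}^{[i]}$ on both branches of the case split in the definition.
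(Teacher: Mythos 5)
Your proposal is correct and follows exactly the route the paper takes: the paper simply declares the proposition ``clear because $|q_n^{\left[i\right]}|=F_{n-1}^{\left[i\right]}$,'' and your argument just fills in the routine details (the induction for the length identity, the fact that $\overline{\phantom{a}}$ and $\Sigma_\alpha^{\circ}$ preserve length, and the fourth power from Proposition \ref{poliminofibo}). Nothing is missing.
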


\begin{proposition}\label{area}
 The area  $A(n, i)$ of  the $i$-generalized Fibonacci snowflake of order $n$ is:
 \begin{enumerate}[i.]
  \item If $i$ is even, then $A(n, i)=\left(P^{\left[k\right]}(n+1)\right)^2 + \left(P^{\left[k\right]}(n)\right)^2$,  where $k=\frac{i-2}{2}$.

 \item If $i$ is odd then, $A(n, i)=\left(P^{\left[k\right]}(n+2)\right)^2 + \left(P^{\left[k\right]}(n+1)\right)^2$, where $k=\frac{i-3}{2}$.

 \item Moreover $A(n, i)$ satisfies the recurrence  formula
 \begin{align}
 A(n, i)=6A(n-1, i) - A(n-2, i)  \label{eqarea}
 \end{align}
 for all $n\geq 3$, (initial values can be calculated with the above items).
\end{enumerate}
\end{proposition}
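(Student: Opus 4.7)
The plan is to read off a formula for $A(n,i)$ directly from the BN-factorization constructed inside the proof of Theorem~\ref{poldobles}. That proof writes the boundary of $\prod_n^{[i]}$ in the form $A\cdot B\cdot\widehat{A}\cdot\widehat{B}$ with $A=\Sigma_\alpha p$ and $B=\Sigma_{\alpha+1} p$, where $p$ is the antipalindrome obtained by deleting the last letter of $q_{3n}^{[i]}$ (even $i$), with the completely analogous factorization through $q_{3n+2}^{[i]}$ in the odd case. For such a ``parallelogram'' polyomino, the translate by $\vec{B}$ places its $A$-edge onto the $\widehat{A}$-edge of the original---geometrically the same polygonal arc, traversed in opposite directions---and the translate by $\vec{A}$ does the symmetric thing for $\widehat{B}$. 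Iterating these translations tiles the plane, so the polyomino is a fundamental domain of the lattice $\mathbb{Z}\vec{A}+\mathbb{Z}\vec{B}$, and therefore
\[
A(n,i)=|\det(\vec{A},\vec{B})|.
\]

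To evaluate this determinant, the key observation is that $\Sigma_{\alpha+1}p$ is obtained from $\Sigma_\alpha p$ by incrementing every letter by $1\bmod 4$, which rotates each elementary step by $\pi/2$. Hence $\vec{B}=R_{\pi/2}\vec{A}$, and writing $\vec{A}=(x,y)$ gives $\det(\vec{A},\vec{B})=x^{2}+y^{2}=|\vec{A}|^{2}$. Next, the words $\Sigma_0 p$ and $\Sigma_0^{\circ} q_{3n}^{[i]}$ coincide letter-for-letter, since dropping the last letter of $q_{3n}^{[i]}$ does not change the first $|p|+1$ partial sums; consequently Proposition~\ref{coord} furnishes $\vec{A}$ outright. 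For even $i$, the formula for $\stackrel{\rightarrow}{\Sigma_0}q_{3m+3}^{[i]}$ with $m=n-1$ yields $\vec{A}=(P^{[k]}(n+1),(-1)^{n-1}P^{[k]}(n))$, whose squared norm is exactly (i); for odd $i$, both cases of $\stackrel{\rightarrow}{\Sigma_0}q_{3n+2}^{[i]}$ give $|\vec{A}|^{2}=P^{[k]}(n+2)^{2}+P^{[k]}(n+1)^{2}$, which is (ii).

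For part (iii) I would argue purely algebraically. Set $g(m):=P^{[k]}(m+1)^{2}+P^{[k]}(m)^{2}$, so that (i)--(ii) give $A(n,i)=g(n)$ when $i$ is even and $A(n,i)=g(n+1)$ when $i$ is odd; in either case it suffices to prove $g(m+1)=6g(m)-g(m-1)$. Using $P^{[k]}(m+1)=2P^{[k]}(m)+P^{[k]}(m-1)$ to expand $P^{[k]}(m+2)^{2}$ inside $g(m+1)$ and $P^{[k]}(m-1)^{2}$ inside $g(m-1)$, the cross terms $P^{[k]}(m+1)P^{[k]}(m)$ cancel in $g(m+1)+g(m-1)$ and the remainder simplifies to $6(P^{[k]}(m+1)^{2}+P^{[k]}(m)^{2})=6g(m)$. (A conceptual alternative: the closed form of $P^{[k]}$ has characteristic roots $1\pm\sqrt{2}$ with product $-1$, so $g(m)$ lies in the span of $(1+\sqrt{2})^{2m}$ and $(1-\sqrt{2})^{2m}$---the $(-1)^{m}$ cross term cancels because its coefficient carries the factor $(1+\sqrt{2})(1-\sqrt{2})+1=0$---and $(1\pm\sqrt{2})^{2}$ are the roots of $x^{2}-6x+1$.)

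The main obstacle is the initial identification $A(n,i)=|\det(\vec{A},\vec{B})|$. Asserting that the BN-factorization implies tiling by $\mathbb{Z}\vec{A}+\mathbb{Z}\vec{B}$ is morally the Beauquier--Nivat theorem, but to use it cleanly one should verify disjointness of the interior translates, either by invoking BN as a black box or by a direct Shoelace computation on the four-segment boundary. Everything downstream of this reduction is a short computation driven by Proposition~\ref{coord} and the Pell-like recurrence for $P^{[k]}$.
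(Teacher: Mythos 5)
Your argument is correct, and everything downstream of the identity $A(n,i)=|\vec{A}|^{2}$ (the use of Proposition \ref{coord} and the algebra for part (iii)) matches the paper; but your justification of that identity takes a genuinely different route. The paper never passes through the tiling: it invokes the fact (Lemma 2.6 of \cite{FIB}) that the path coded by an antipalindrome is centrally symmetric about the midpoint of its chord, so replacing each of the four congruent boundary arcs of $\prod_n^{\left[i\right]}$ by its chord leaves the enclosed area unchanged; since by Proposition \ref{poliminofibo}-$ii$ the four chords are $\vec{A}$, $R_{\pi/2}\vec{A}$, $-\vec{A}$, $-R_{\pi/2}\vec{A}$, the polyomino has the area of a square of side $|\vec{A}|$, and Proposition \ref{coord} then gives exactly the expressions in (i) and (ii). You instead read $A(n,i)$ as the covolume $|\det(\vec{A},\vec{B})|$ of the translation lattice $\mathbb{Z}\vec{A}+\mathbb{Z}\vec{B}$ attached to the BN-factorization from Theorem \ref{poldobles}; the obstacle you flag -- that the translates really tile without overlap with that lattice -- is closable by citing Beauquier--Nivat \cite{NIV}, which the paper already relies on, so this is not a genuine gap. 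The trade-off: the paper's argument is a purely local symmetry computation independent of the tiling theorem, while yours avoids the antipalindrome symmetry lemma and makes explicit why the area is governed by the double-square structure. Both hinge on the same input (the endpoint formulas of Proposition \ref{coord}), and your expansion of $P^{\left[k\right]}(m+2)^{2}$ and $P^{\left[k\right]}(m-1)^{2}$ to verify $A(n,i)=6A(n-1,i)-A(n-2,i)$ supplies the detail that the paper compresses into ``by definition of $P^{\left[i\right]}(n)$.''
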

\begin{proof}
Suppose that $i$ is even.  If a word $w\in \mathcal{A}^*$  is an antipalindrome then its corresponding polygonal line is symmetric with respect to midpoint of the vector  $\stackrel{\rightarrow}{\Sigma_\alpha}w$,  see Lemma 2.6 in \cite{FIB}. Moreover, from  Proposition \ref{poliminofibo}-ii, we have that the parallelogram determined by the word $\Sigma_\alpha^{\circ}(q_{3n}^{\left[i\right]})^4$ is a square, (in Fig.  \ref{areas}, we show some examples for $i=2, 3, 4$ and $n=2$), and by Proposition \ref{coord} the  area  $A(n, i)$ is equal to the area of square determined by  $\Sigma^{\circ}_{0}(q_{3n}^{\left[i\right]})=\left(P^{\left[k\right]}(n+1), \pm P^{\left[k\right]}(n)\right)$.  Hence $A(n, i)=\left(P^{\left[k\right]}(n+1)\right)^2 + \left(P^{\left[k\right]}(n)\right)^2$,  where $k=\frac{i-2}{2}$. If $i$ is odd, the proof is similar.\\

The Eq. \ref{eqarea} is obtained from $i$ and $ii$, and by  definition of $P^{\left[i\right]}(n)$.

\begin{figure}[!h]
\centering
 \includegraphics[scale=0.3]{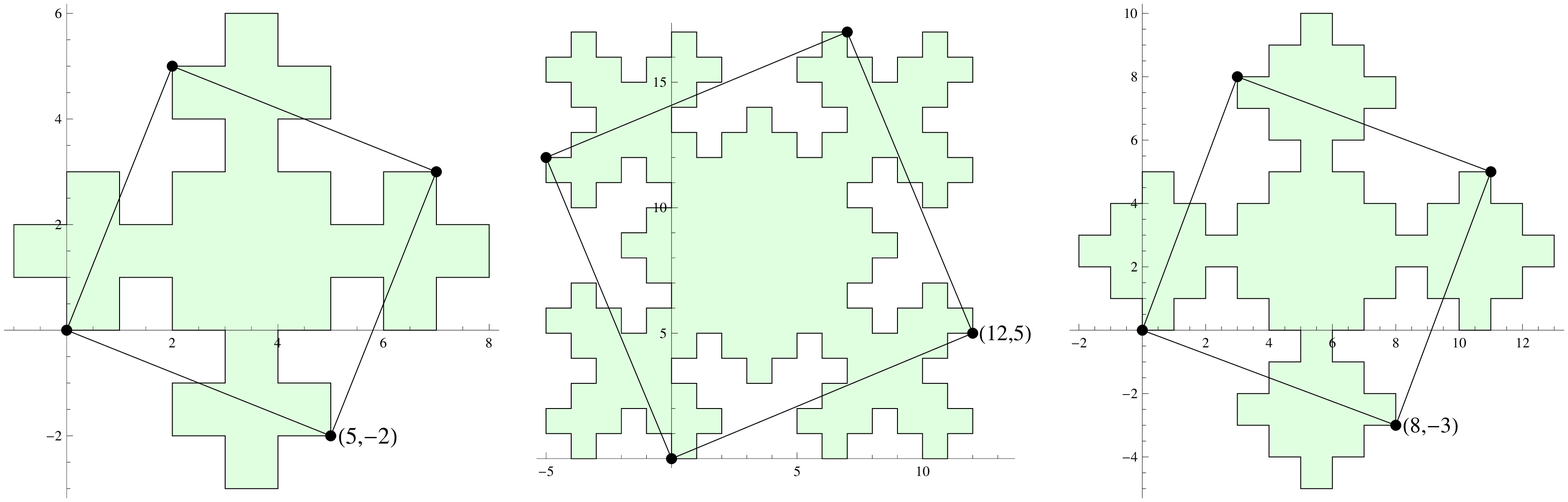}\\
  \caption{Examples, Areas of $i$-generalized Fibonacci snowflakes.}\label{areas}
\end{figure}
\end{proof}

Let $S^{\left[i\right]}(n)$ be the  smallest square having sides parallel to the axes and containing to $\prod^{\left[i\right]}$. In Fig. \ref{casco}, we show the cases for  $i=4$ and $n=2, 3$. If $i$ is even, from Proposition \ref{coord} we have that  $(A, B)=(P^{\left[i\right]}(n), (-1)^nP(n+1)^{\left[i\right]})$. Therefore
\begin{align*}
S^{\left[i\right]}(n)=\left(\frac{A+3B}{2}-\frac{A-B}{2}-1\right)^2=(2B-1)^2=(2P^{\left[i\right]}(n+1)-1)^2
\end{align*}
When $i$ is odd it is similar.

\begin{figure}[!h]
\centering
 \includegraphics[scale=0.27]{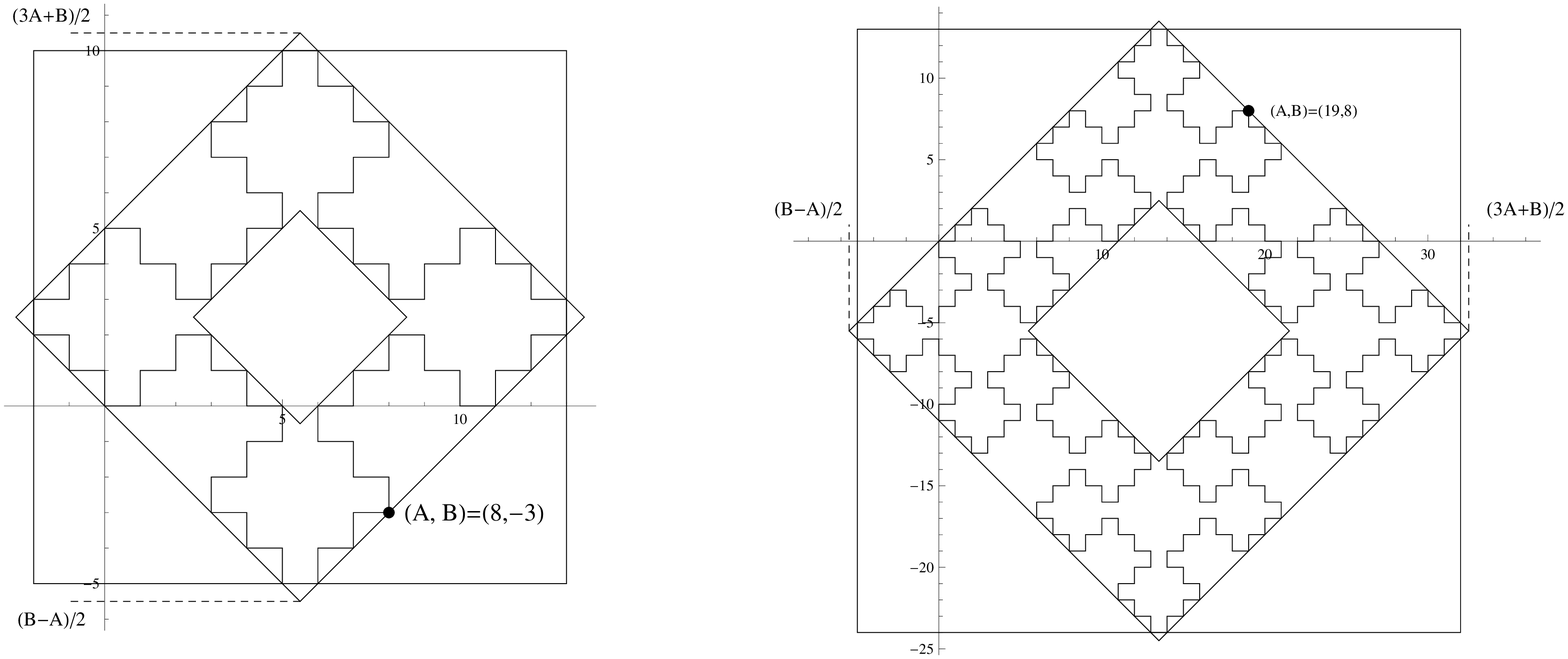}\\
  \caption{ $S^{\left[i\right]}(n)$ for $i=4$ and $n=2, 3$.}\label{casco}
\end{figure}

Next theorem generalizes  theorem 1 of \cite{BLO4}.
\begin{theorem}
The  fractal dimension of $\prod^{\left[i\right]}=\lim_{n\rightarrow \infty}\prod_n^{\left[i\right]}$ is
\begin{align*}
\frac{3\ln \phi}{\ln(1+\sqrt{2}) }.
\end{align*}
\end{theorem}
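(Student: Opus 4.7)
The plan is to transfer the argument of Theorem 1 of \cite{BLO4} (the case $i=2$) to the generalized setting. That proof writes the fractal dimension as $\log N/\log s$, where $s$ is the asymptotic linear scaling factor between consecutive generations $\prod_{n-1}^{\left[i\right]}$ and $\prod_n^{\left[i\right]}$ and $N$ is the asymptotic ratio of the corresponding perimeters. All the $i$-dependence enters only through the precise values of these two growth rates, and both have already been computed in earlier propositions.

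First I would determine $N$. By Proposition \ref{perimetro} the perimeter of $\prod_n^{\left[i\right]}$ equals $4F_{3n-1}^{\left[i\right]}$ or $4F_{3n+1}^{\left[i\right]}$, and the closed form for $F_n^{\left[i\right]}$ shows that $F_n^{\left[i\right]}/\phi^n$ converges to a nonzero constant depending on $i$. Consequently
\begin{align*}
N := \lim_{n\to\infty}\frac{L(n,i)}{L(n-1,i)} = \phi^{3}.
\end{align*}
Next I would determine $s$. By Proposition \ref{coord}, the endpoint of the path coding one quarter of the boundary of $\prod_n^{\left[i\right]}$ has coordinates expressed as linear combinations of two consecutive Pell-like numbers $P^{\left[k\right]}(n)$, and the recurrence $P^{\left[k\right]}(n)=2P^{\left[k\right]}(n-1)+P^{\left[k\right]}(n-2)$ has dominant characteristic root $1+\sqrt{2}$. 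Hence the diameter of $\prod_n^{\left[i\right]}$ grows like $(1+\sqrt{2})^{n}$ up to a bounded multiplicative constant, giving
\begin{align*}
s := \lim_{n\to\infty}\frac{\mathrm{diam}(\prod_n^{\left[i\right]})}{\mathrm{diam}(\prod_{n-1}^{\left[i\right]})} = 1+\sqrt{2},
\end{align*}
which is also the scale factor already identified for the curve $\mathcal{F}^{\left[i\right]}$ in Proposition \ref{ifibofractal}-v.

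Finally, I would combine the two rates. Just as in \cite{BLO4}, once the rescaled boundary curves are seen to converge to the attractor of a self-similar iterated function system with contraction ratio $1/s$ and effective multiplicity $N$, the classical similarity-dimension formula yields
\begin{align*}
\dim\prod^{\left[i\right]} \;=\; \frac{\log N}{\log s} \;=\; \frac{\log \phi^{3}}{\log(1+\sqrt{2})} \;=\; \frac{3\log\phi}{\log(1+\sqrt{2})}.
\end{align*}
The main obstacle is justifying the limit object as a genuine self-similar fractal satisfying the open-set condition, so that the similarity dimension agrees with the Hausdorff (or box-counting) dimension. This, however, is precisely the content of the argument in \cite{BLO4}, and it transfers verbatim to the generalized case once Propositions \ref{perimetro}, \ref{coord}, and \ref{ifibofractal}-v are in hand, since those are the only ingredients that depend on $i$, and both growth rates $\phi^{3}$ and $1+\sqrt{2}$ are independent of the parameter $i$.
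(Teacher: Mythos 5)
Your proposal is correct and follows essentially the same route as the paper: the paper also extracts the two growth rates (number of unit segments $4\lvert q_{3n}^{\left[i\right]}\rvert\sim C\phi^{3n}$ from the perimeter, and linear size $2P^{\left[i\right]}(n+1)-1\sim C'(1+\sqrt{2})^{n}$ from the Pell-like numbers) and computes the self-similarity dimension as the limit of the ratio of their logarithms, deferring the identification of this with the fractal dimension to the argument of \cite{BLO4}. The only cosmetic difference is that the paper normalizes by the side of the smallest bounding square rather than by the diameter; both grow like $(1+\sqrt{2})^{n}$, so the computation is identical.
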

\begin{proof}
Suppose that  $i$ is even, then the polyomino  $\prod_n^{\left[i\right]}$ is composed of $4|q_{3n}^{\left[i\right]}|$ unit segments and this value blows up when $n\rightarrow \infty$. However, the normalized polyomino $\frac{1}{2P^{\left[i\right]}(n+1)-1}\prod_n^{\left[i\right]}$ stays  bounded. It has $4|q_{3n}^{\left[i\right]}|$ segments of  length  $\frac{1}{2P^{\left[i\right]}(n+1)-1}$. Hence the total  $d-$dimensional normalized polyomino has length
\begin{align*}
\frac{4|q_{3n}^{\left[i\right]}|}{(2P^{\left[i\right]}(n+1)-1)^d}
\end{align*}
and therefore the self-similarity dimension (see \cite{PEIT} for the definition the self-similarity dimension) of  $\prod^{\left[i\right]}$ is
\begin{align*}
d=\lim_{n\rightarrow \infty}\frac{\ln(4|q_{3n}^{\left[i\right]}|)}{\ln(2P^{\left[i\right]}(n+1)-1)}=\frac{3\ln \phi}{\ln(1+\sqrt{2}) }.
\end{align*}
\end{proof}

\section{Conclusion}
In this paper, we study a generalization of the Fibonacci word and the Fibonacci word fractal founds in \cite{ALE}. Particularly , we defined the curves  $\mathcal{F}^{\left[i\right]}$  from the $i$-Fibonacci words and show their properties remain.
Moreover, the $i$-generalized Fibonacci snowflakes generalize the Fibonacci snowflake studied in \cite{FIB} and we show that they are a subclass of double squares.  Finally, we found that $i$-generalized Fibonacci snowflakes  are related with Fibonacci  and Pell numbers, and some generalizations.

In \cite{EDS} authors have introduced a generalization of the Fibonacci sequence.  For any two nonzero real numbers $a$ and $b$, the \emph{generalized Fibonacci sequence}, say $\left\{F_n^{(a,b)}\right\}_0^{\infty}$, is defined recursively by
\begin{align*}
&F_0^{(a,b)}=0, \ \ \  F_1^{(a,b)}=1, \\
& F_n^{(a,b)}=\begin{cases}
aF_{n-1}^{(a,b)}+F_{n-2}^{(a,b)},  & \ \text{if $n$ is even} \\
bF_{n-1}^{(a,b)}+F_{n-2}^{(a,b)},  & \ \text{if $n$ is odd}
\end{cases} (n\geq2)
\end{align*}
On the other hand, there is a word-combinatorial interpretation of this generalized Fibonacci sequence.  Let $\alpha=\left[0,a,b,a,b,\ldots\right]=\left[0,\overline{a,b}\right]$ then  $\textbf{\emph{w}}(\alpha)=\lim_{n\rightarrow\infty}s_n$ where
\begin{align*}
&s_0=\texttt{1}, \ \ s_1=\texttt{0},  \ \ s_2=\texttt{0}^{a-1}\texttt{1}, \\
&s_n=\begin{cases}s_{n-1}^as_{n-2}, & \text{if $n$ is even} \\  s_{n-1}^bs_{n-2}, & \text{if $n$ is odd} \end{cases}, n\geq 3
\end{align*}
 Let $r_0=0, \ \ r_n=|s_n|, \ n\geq1$ then $\left\{r_n\right\}=\left\{F_n^{(a,b)}\right\}$. It would be interesting to study different curves obtained by applying the odd-even drawing rule to the word $s_n$. Empirical observations  show interesting patterns. For instance with $a=2, b=5$ and $n=9$ we obtain the curve Fig. \ref{newfibo}.

 \begin{figure}[!h]
\centering
 \includegraphics{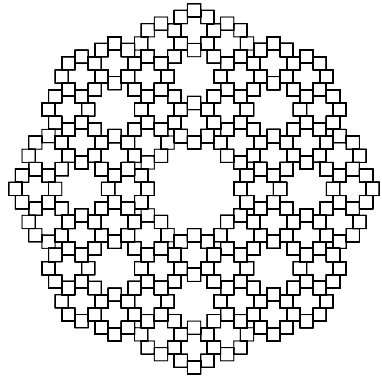}\\
  \caption{ Curve obtained with $a=2, b=5$ and $n=9$.}\label{newfibo}
\end{figure}

\end{document}